\pdfoutput=1
\documentclass[journal,twocolumn]{IEEEtran}
\IEEEoverridecommandlockouts

\usepackage{amsmath,amssymb,amsfonts,amsthm,bm}
\usepackage{algorithm}
\usepackage{algorithmic}
\usepackage{balance}
\usepackage[font=footnotesize,skip=1pt]{caption}
\usepackage{cite}
\usepackage{enumitem}
\usepackage{graphicx}
\usepackage{mathtools}
\usepackage[protrusion=true,expansion=true]{microtype}
\usepackage{multirow}
\usepackage{subcaption}
\usepackage{units}
\usepackage{xcolor}

\newtheorem{proposition}{Proposition}

\allowdisplaybreaks
\DeclareMathOperator*{\argmin}{arg\,min}
\DeclareMathOperator*{\argmax}{arg\,max}

\graphicspath{{Pictures_Paper/}}

\begin{document}

\title{Dynamic and Robust Sensor Selection Strategies for Wireless Positioning with TOA/RSS Measurement}

\author{
Myeung~Suk~Oh,~\IEEEmembership{Student~Member,~IEEE}, Seyyedali~Hosseinalipour,~\IEEEmembership{Member,~IEEE}, Taejoon~Kim,~\IEEEmembership{Senior~Member,~IEEE}, David~J.~Love,~\IEEEmembership{Fellow,~IEEE},\\ James~V.~Krogmeier,~\IEEEmembership{Senior~Member,~IEEE}, and Christopher~G.~Brinton,~\IEEEmembership{Senior~Member,~IEEE}
\thanks{This work was supported in part by Ford Motor Company and
in part by the National Science Foundation (NSF) under Grants CNS2146171, CNS2212565, and CNS2225577. The authors thank R. Burke and D. Upadhyay for their valuable discussions.}
\thanks{M. S. Oh, D. J. Love, J. V. Krogmeier, and C. G. Brinton are with the School of Electrical and Computer Engineering, Purdue University, West Lafayette, IN, 47907 USA (e-mail: \{oh223, djlove, jvk, cgb\}@purdue.edu).}
\thanks{S. Hosseinalipour is with the Department of Electrical Engineering, University at Buffalo-SUNY, NY, 14260 USA (email: alipour@buffalo.edu).}
\thanks{T. Kim is with the Department of Electrical
Engineering and Computer Science, University of Kansas, KS, 66045 USA (email: taejoonkim@ku.edu).}
}

\maketitle

\begin{abstract}
    Emerging wireless applications are requiring ever more accurate location-positioning from sensor measurements.
    In this article, we develop sensor selection strategies for 3D wireless positioning based on time of arrival (TOA) and received signal strength (RSS) measurements to handle two distinct scenarios: (i) known approximated target location, for which we conduct dynamic sensor selection to minimize the positioning error; and (ii) unknown approximated target location, in which the worst-case positioning error is minimized via robust sensor selection.
    We derive expressions for the Cram\'er-Rao lower bound (CRLB) as a performance metric to quantify the positioning accuracy resulted from selected sensors.
    For dynamic sensor selection, two greedy selection strategies are proposed, each of which exploits properties revealed in the derived CRLB expressions.
    These selection strategies are shown to strike an efficient balance between computational complexity and performance suboptimality.
    For robust sensor selection, we show that the conventional convex relaxation approach leads to instability, and then develop three algorithms based on (i) iterative convex optimization (ICO), (ii) difference of convex functions programming (DCP), and (iii) discrete monotonic optimization (DMO).
    Each of these strategies exhibits a different tradeoff between computational complexity and optimality guarantee.
    Simulation results show that the proposed sensor selection strategies provide significant improvements in terms of accuracy and/or complexity compared to existing sensor selection methods.
\end{abstract}

\begin{IEEEkeywords}
    Cram\'er-Rao lower bound (CRLB), received signal strength, sensor selection, time of arrival, wireless positioning 
\end{IEEEkeywords}

\vspace{-5mm}
\section{Introduction}\label{sec:intro}

\subsection{Wireless Positioning and Sensor Selection}\label{ssec:intro_background}

Wireless positioning is employed in many applications across the military~\cite{Chong03,Rantakokko11} and commercial \cite{Sayed05,Ko21} sectors, e.g., for target tracking, system security, and smart automation~\cite{Gezici09}.
As the efficacy of these applications depends on the accuracy of location information and the speed at which it can be obtained, it is critical to maximize the performance of wireless positioning while minimizing the associated algorithmic complexity.
In wireless positioning, a target location is often estimated using a collected set of location-dependent measurements (e.g., received signal strength (RSS)) acquired from multiple geo-distributed sensors~\cite{Gustafsson05}.
In particular, with the advent of ultra-wideband (UWB) communications with large bandwidth signals, wireless positioning with time-sensitive measurements (e.g., time of arrival (TOA) and time difference of arrival (TDOA)) has become an active field of research~\cite{Gezici05,Alarifi16,Mazhar17}.

Although it has been analytically demonstrated that using more sensors results in improved positioning accuracy regardless of geographical placement~\cite{Patwari03}, deploying a large number of sensors is undesirable in practice.
For example, TOA-based positioning in general requires sensors to make isolated measurements~\cite{UWB20}, which may result in extensively prolonged positioning times with a large number of sensors.
Other practical considerations, like cost and packaging constraints, also limit the number of sensors that can be deployed for positioning.
As a result, in many settings, only a portion/subset of the placed sensors are actually \emph{selected} for usage.
Hence, selecting the most efficient group of sensors for optimal wireless positioning is a critical and yet challenging task~\cite{Rowaihy07}. 

In this article, we study the problem of optimal sensor selection for wireless positioning considering two distinct scenarios: (i) \emph{an approximate target location is known via prior prediction} and (ii) \emph{target location is not approximated and hence unknown}.
When the approximated target location is given, the sensors should be selected such that the positioning accuracy for the respective location is maximized.
We refer to this scenario as \emph{dynamic} sensor selection because the optimal set of sensors to be selected varies by the given approximated location.
On the other hand, when no information on the target location is available, the selection of sensors should be carried out to maximize the worst-case positioning accuracy.
As a result, we refer to this scenario as \emph{robust} sensor selection.

In each scenario, we mathematically formulate an optimization problem for sensor selection.
In doing so, we adopt the Cram\'er-Rao lower bound (CRLB)~\cite{Kay97} as a performance metric for quantifying the positioning accuracy.
In our work, the CRLB quantifies the lowest mean squared error (MSE) achievable from a selected set of sensors, optimization of which provides a natural solution to our sensor selection problem.
Based on the formulated optimization problems, we present novel sensor selection strategies that improve performance in accuracy and/or complexity of wireless positioning.

\subsection{Related Work}\label{ssec:intro_work}

To perform effective sensor selection, it is critical to understand the impact of sensor placement on the accuracy of wireless positioning~\cite{Patwari03,Qi02,Catovic04,Laaraiedh12,Li18,Li19}.
In~\cite{Patwari03}, the CRLBs on wireless positioning were separately derived for line-of-sight (LOS) TOA and RSS measurements, and it was shown that the positioning accuracy strongly depends on the \emph{geometric conditioning} (i.e., the geometric arrangement with respect to a target) of the sensors.
The CRLB on TOA-based wireless positioning over both LOS and non-LOS (NLOS) channels was derived in~\cite{Qi02}, revealing that the bound is strictly a function of the LOS channel unless the bias in the NLOS channel is compensated.
Recent works, e.g.,~\cite{Catovic04,Li18,Laaraiedh12,Li19}, have extended the analysis and derived the CRLB upon using hybrid measurements.
They showed that the target positioning accuracy can be improved via joint consideration of different measurement types.
In addition to improving the theoretical bound, the benefit of utilizing multiple measurement types has been verified in state-of-the-art positioning schemes.
For example, a picocell-based joint TOA and direction of arrival (DOA) estimation is proposed in~\cite{Pan22}, and a fingerprint localization with RSS and channel state information (CSI) measurement is considered in~\cite{Zhou21}.

Sensor selection problems are in general NP-hard combinatorial optimizations, the exact solutions to which can be obtained via exhaustive search.
Since exhaustive search strategies suffer from prohibitive computation burden as the number of sensors increases, some works rely on convex relaxations and/or heuristic strategies to find computationally efficient suboptimal solutions~\cite{Godrich12,Zhao19,Dai20}.
Specifically, in~\cite{Godrich12}, sensor selection was formulated as a knapsack problem with its solution obtained via greedy algorithms that aim to minimize the RSS-driven CRLB.
Both semidefinite relaxation (SDR) and heuristic methods for sensor selection aiming to minimize the CRLB in TDOA-based wireless positioning were proposed in~\cite{Zhao19}.
The authors in~\cite{Dai20} solved a convex-relaxed CRLB-minimizing sensor selection problem via semidefinite programming with randomization for both TDOA-based and TOA-based wireless positioning. 

Although these prior works provide useful insights on sensor selection problems, their approaches rely on complete evaluation of the CRLB.
Even with greedy selection methods~\cite{Godrich12,Zhao19}, the computational load from evaluating the bound, which requires a matrix inverse operation of complexity $\mathcal{O}(n^3)$, can be burdensome if a system involves a large number of sensors.
Also, as all of these methods utilize the CRLB for non-Bayesian estimation (i.e., the target location is assumed to be known a priori), they are only applicable in scenarios where precise information on the target location is initially available.
In this work, we first investigate dynamic sensor selection~\cite{Godrich12,Zhao19,Dai20}, and propose low-complexity sensor selection strategies that avoid computing the entire CRLB expression.
Also, we take one step further from the conventional sensor selection literature and consider robust sensor selection, for which we focus on minimizing the worst-case CRLB and propose sensor selection strategies for unknown target locations.

\subsection{Outline and Summary of Contributions}\label{ssec:intro_outline}

This article focuses on sensor selection for wireless positioning with hybrid TOA/RSS measurements.
We consider both \emph{dynamic} and \emph{robust} sensor selection scenarios, and our major contributions can be summarized as follows:
\begin{itemize}[leftmargin=4mm]
    \item We derive the CRLB expression into two different forms, namely (i) \emph{trace} and (ii) \emph{fractional} forms.
    We use these expressions to formulate optimization problems and develop sensor selection strategies.
    Based on both forms derived, we show that the CRLB can be optimized without evaluating the entire expression, which we exploit to reduce the complexity of our dynamic sensor selection algorithms.
    \item We develop low-complexity greedy selection strategies for dynamic sensor selection.
    Based on our computation-efficient metrics, the proposed strategies perform a sequential sensor selection where in each iteration, the sensor minimizing the current CRLB is selected, one at a time.
    Our numerical results demonstrate that, compared to the benchmarks, the proposed strategies provide comparable positioning performance with much less complexity.
    \item We propose and study the robust sensor selection problem, 
    where we reveal that the conventional convex optimization approach provides unreliable binary solutions.
    We subsequently propose three sensor selection strategies based on (i) iterative convex optimization (ICO), (ii) difference of convex functions programming (DCP), and (iii) discrete monotonic optimization (DMO), each of which has a different tradeoff between complexity and optimality guarantee.
    Our numerical results show that each finds solutions that are stable and effective in the worst-case CRLB minimization.    
\end{itemize}

The rest of this article is organized as follows.
Section~\ref{sec:system} describes our wireless positioning system and preliminaries on sensor selection.
Two different forms of the CRLB expression, which are utilized in our sensor selection strategies, are derived in Section~\ref{sec:CRLB}.
In Section~\ref{sec:dynamic}, two greedy algorithms based on unique selection metrics are proposed for the dynamic sensor selection problem.
In Section~\ref{sec:robust}, three different strategies are developed to address the robust sensor selection problem.
Simulation results are presented in Section~\ref{sec:simulations}, and Section~\ref{sec:conclusion} concludes this article.

\vspace{-1.5mm}
\section{System Model and Preliminaries}\label{sec:system}

We first describe our system configuration in Section~\ref{ssec:system_layout}.
Then, our hybrid TOA/RSS measurement model is introduced in Section~\ref{ssec:system_measurement}.
The data collection and estimation steps are explained in Section~\ref{ssec:system_collection}.
Finally, our dynamic and robust sensor selection problems are formulated in Section~\ref{ssec:system_problem}.

\begin{figure}[!t]
    \centering
    \includegraphics[width=0.65\linewidth]{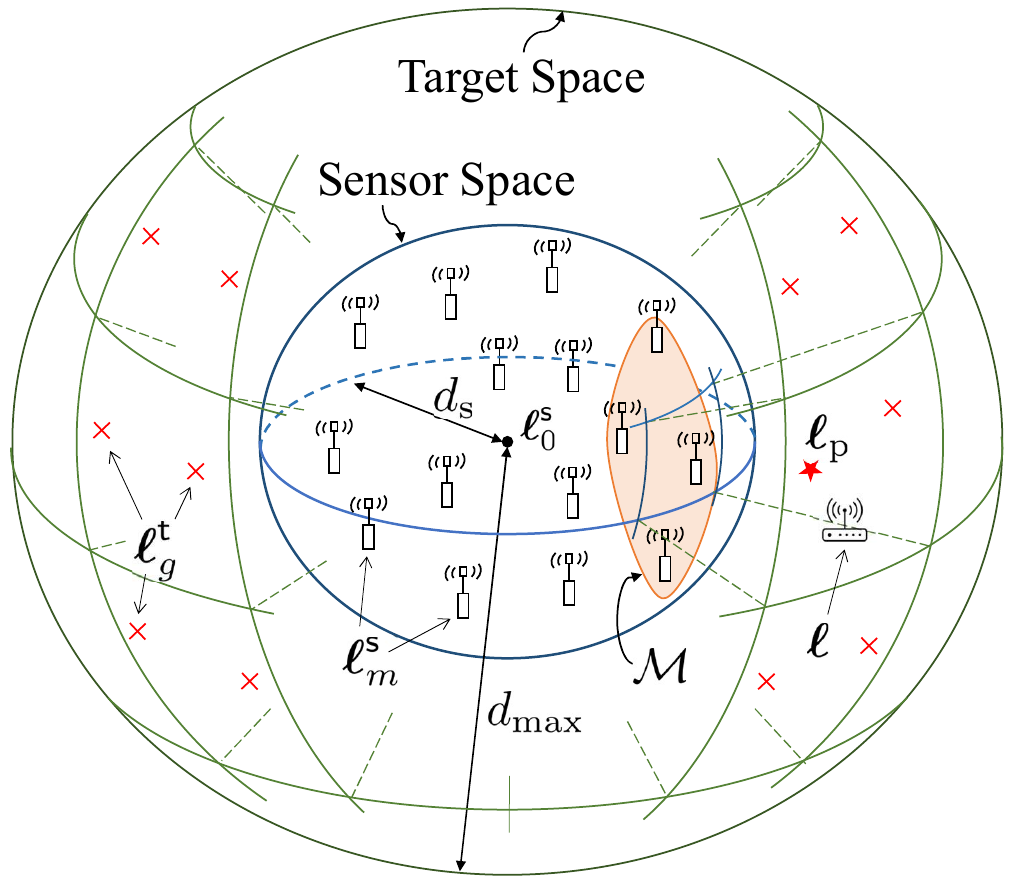}
    \caption{Geographical configuration of the sensors and candidate target locations in our wireless positioning model.}
    \vspace{-2mm}
    \label{fig:system_layout}
\end{figure}

\vspace{-2.5mm}
\subsection{Sensor Geography Model}\label{ssec:system_layout}

As shown in Fig.~\ref{fig:system_layout}, we consider 3D wireless positioning with sensors from a set $\mathcal{M}_\text{max}=\{1,\ldots,M_\text{max}\}$, where $M_\text{max}$ is set by practical system limitations.
Each sensor $m\in\mathcal{M}_\text{max}$ is placed at a predetermined 3D location $\boldsymbol{\ell}^\mathsf{s}_m=[x^\mathsf{s}_m, y^\mathsf{s}_m, z^\mathsf{s}_m]^\top$.
We denote the center point of these sensors as $\boldsymbol{\ell}^\textsf{s}_0=\frac{1}{M_\text{max}}\sum_{m=1}^{M_\text{max}}\boldsymbol{\ell}^\mathsf{s}_m$ and characterize the space in which the sensors are located via a constant $d_\text{s}$, the smallest positive value satisfying $\|\boldsymbol{\ell}^\mathsf{s}_m-\boldsymbol{\ell}^\textsf{s}_0\|_2\leq d_\text{s}$, $\forall m \in \mathcal{M}_\text{max}$.

We consider single-antenna sensors with no time-synchronization, i.e., a hardware-limited positioning system where sensors cannot jointly collect their measurements and conduct parameter estimation.
We aim to conduct wireless positioning on a single stationary target with location $\boldsymbol{\ell}=[x,y,z]^\top$ which is outside of the sensor space, i.e., $d_\text{s}<\|\boldsymbol{\ell}-\boldsymbol{\ell}^\textsf{s}_0\|_2\leq d_\text{max}$, where $d_\text{max}$ is the maximum distance to which wireless positioning can be conducted.
We divide our target space into $G$ distinct regions and represent each of them via a representative point (e.g., the center of the region) $\boldsymbol{\ell}^\mathsf{t}_g=[x^\mathsf{t}_g,y^\mathsf{t}_g,z^\mathsf{t}_g]^\top$ that satisfies $d_\text{s}<\|\boldsymbol{\ell}^\mathsf{t}_g-\boldsymbol{\ell}^\textsf{s}_0\|_2\leq d_\text{max}$, $\forall g=1,2,\ldots,G$, as shown in Fig.~\ref{fig:system_layout}.
We define $\mathcal{L}=\{\boldsymbol{\ell}^\mathsf{t}_1,\ldots,\boldsymbol{\ell}^\mathsf{t}_G\}$ as a  set collecting these $G$ representative points/locations.

For sensor selection, we assume that $M$ sensors must be selected from the $M_\text{max}$ total. 
We define a set $\mathcal{M}\subseteq\mathcal{M}_\text{max}$, with size $M=\vert\mathcal{M}\vert\leq M_\text{max}$, to be the index set of the $M$ sensors selected for positioning.
We also define $\boldsymbol{m}_\mathcal{M}=[m_1,m_2,\ldots,m_M]^\top$ to be a vector listing the elements of $\mathcal{M}$ in ascending order.
Once the selection is made, the system executes a sequence of steps to conduct wireless positioning.
The overall procedure of wireless positioning using $M$ sensors selected from $\mathcal{M}_\text{max}$ is illustrated in Fig.~\ref{fig:system_model}, and details on each step are provided in the following sections.

\begin{table*}[t]
\captionsetup{justification=centering, labelsep=newline}
\centering
\caption{A list of variables describing our system model}
\label{tb:variables}
\begin{tabular}{|c|l||c|l|} 
    \hline
    Variable & Description & Variable & Description \tabularnewline
    \hline
    $\mathcal{M}_\text{max}$ & Set of entire sensors & $h_m$ & Channel gain between the target and sensor $m$ \tabularnewline
    \hline
    $\mathcal{M}$ & Set of sensors selected for positioning & $w_m(t)$ & Noise on the received signal of sensor $m$ \tabularnewline
    \hline 
    $\boldsymbol{\ell}^\mathsf{s}_m$ & 3D coordinates of sensor $m$ & $\xi$ & Pathloss exponent \tabularnewline
    \hline
    $\boldsymbol{\ell}^\textsf{s}_0$ & Center point of the sensors & $\widehat{\tau}_{m}$ ($\widehat{P}_m$) & TOA (RSS) measurement on sensor $m$ \tabularnewline
    \hline
    $\boldsymbol{\ell}$ & 3D coordinates of the target & $n_{\text{T},m}$ ($n_{\text{R},m}$) & Noise on TOA (RSS) measurement of sensor $m$ \tabularnewline
    \hline
    $\boldsymbol{\ell}^\mathsf{t}_g$ & Representative point for region $g$ & $P_{0,m}$ ($d_{0,m}$) & Reference power (distance) of sensor $m$ \tabularnewline
    \hline
    $d_\text{s}$ & Radius of the sensor space & $\widehat{d}_{\text{T},m}$ ($\widehat{d}_{\text{R},m}$) & Distance estimated from TOA (RSS) of sensor $m$ \tabularnewline
    \hline
    $d_\text{max}$ & Maximum distance for positioning & $e_{\text{T},m}$ ($e_{\text{R},m}$) & Error on TOA-based (RSS-based) distance estimation of sensor $m$ \tabularnewline
    \hline
    $d_m$ & Distance between the target and sensor $m$ & $\sigma^2_{\text{T},m}$ ($\sigma^2_{\text{R},m}$) & Variance of $e_{\text{T},m}$ ($e_{\text{R},m}$) \tabularnewline
    \hline
    $T_\text{p}$ & Entire duration of wireless positioning & $\rho_{m,m'}$ ($\eta_{m,m'}$) & Spatial (Hybrid) correlation coefficient between sensors $m$ and $m'$ \tabularnewline
    \hline
    $s(t)$ & Reference signal & $\boldsymbol{\ell}_\text{p}$ & Prior approximation of $\boldsymbol{\ell}$ \tabularnewline
    \hline
    $T_\text{s}$ & Length of reference signal & $\widehat{\boldsymbol{\ell}}_\mathcal{M}$ & Estimation of $\boldsymbol{\ell}$ using the sensor set $\mathcal{M}$ \tabularnewline
    \hline
    $r_m(t)$ & Received signal at sensor $m$ & $\sigma^2_\mathcal{M}(\boldsymbol{\ell})$ & The CRLB on $\boldsymbol{\ell}$ obtained using the sensor set $\mathcal{M}$ \tabularnewline
    \hline
\end{tabular}
\vspace{-2mm}
\end{table*}

\vspace{-2mm}
\subsection{Hybrid TOA/RSS Measurement Model}\label{ssec:system_measurement}

We assume that the entire positioning procedure has a duration $T_\text{p}$.
At each positioning round, a predetermined reference signal $s(t)$ of duration $T_\text{s}$ is transmitted from the target and received by the sensors.
If we define the distance between sensor $m$ and the target as $d_{m}=\|\boldsymbol{\ell}^\mathsf{s}_m-\boldsymbol{\ell}\|_2$ and assume LOS propagation from the target to each sensor~\cite{Qi02}, the received signal at sensor $m$ can be expressed as
\begin{equation}
    r_m(t)=h_m s\Big(t-{d_m}/{c}\Big) + w_m(t),
    \label{eq:received_signal}
\end{equation}
where $h_m$ is the channel gain between the target and sensor $m$ such that $\vert h_m\vert ^2 \propto {d_m^{-\xi}}$, with $\xi$ being the pathloss exponent, $w_m(t)$ is zero-mean Gaussian noise, and $c$ is the speed of light.
We consider $T_\text{p}$ to be long enough so that the reference signal can be received by every sensor within a single period of positioning procedure, i.e., $\max_{m\in\mathcal{M}_\text{max}}(T_\text{s} + \frac{d_m}{c}) \ll T_\text{p}$.

\begin{figure}[!t]
    \centering
    \includegraphics[width=1\linewidth]{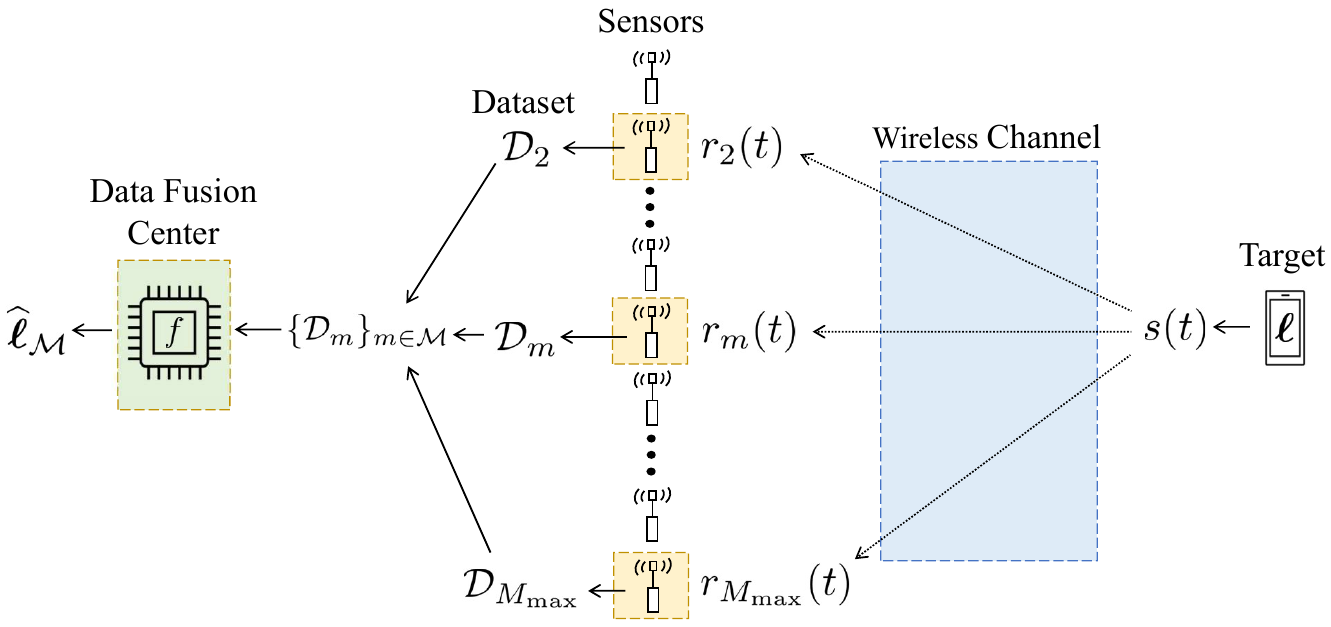}
    \caption{System model of 3D wireless positioning using multiple sensors. Yellow boxes indicate selected sensors.}
    \label{fig:system_model}
    \vspace{-2mm}
\end{figure}

We assume both TOA and RSS are measured by all sensors\footnote{Other potential parameters, e.g., TDOA and angle of arrival (AOA), are not available in our single-antenna and time-asynchronous sensor system model.}.
Each sensor $m\in \mathcal{M}$ measures its TOA $\widehat{\tau}_{m}$ and RSS~$\widehat{P}_{m}$ as
\begin{equation}
    \widehat{\tau}_{m} = {d_{m}}/{c} + n_{\text{T},m}
    \label{eq:estimated_TOA}
\end{equation}
and
\begin{equation}
    \widehat{P}_{m} = P_{0,m}-10\xi\log_{10}\left({d_{m}}/{d_{0,m}}\right) + n_{\text{R},m},
    \label{eq:estimated_RSS}    
\end{equation}
where $n_{\text{T},m}$ and $n_{\text{R},m}$ are zero-mean real Gaussian noises for TOA and RSS measurements at sensor $m$, respectively.
Noises arise due to environmental (e.g., channel fading and shadowing) and systematic (e.g., non-ideal correlator and packet latency) factors~\cite{Macii13}.
In~\eqref{eq:estimated_RSS}, $P_{0,m}$ and $d_{0,m}$ are the reference power and reference distance of sensor $m$, respectively.

We assume maximum likelihood estimation (MLE) for approximating $d_m$ from the acquired measurements.
The estimated distances $\widehat{d}_{\text{T},m}$ and $\widehat{d}_{\text{R},m}$ from the respective TOA and RSS measurements are then expressed as~\cite{Li18,Jia08,Shen12,Huang14}
\begin{equation}
    \widehat{d}_{\text{T},m} = d_{m} + e_{\text{T},m} \;\text{ and }\; \ln \widehat{d}_{\text{R},m} = \ln d_{m} + e_{\text{R},m}, \nonumber
    \label{eq:estimated_dist_TOA/RSS}
\end{equation}
where $e_{\text{T},m}$ and $e_{\text{R},m}$ are zero-mean real Gaussian error distributions with variances $\sigma^2_{\text{T},m}$ and $\sigma^2_{\text{R},m}$, respectively.
As a result, given the target location $\boldsymbol{\ell}$, the probability density functions (PDFs) of $\widehat{d}_{\text{T},m}$ and $\widehat{d}_{\text{R},m}$ are given by~\cite{Laaraiedh12}
\begin{equation}
    p_{\text{T},m}(\widehat{d}_{\text{T},m}\vert \boldsymbol{\ell})=\frac{1}{\sqrt{2\pi}\sigma_{\text{T},m}} e^{-\frac{(\widehat{d}_{\text{T},m}-d_m)^2}{2\sigma^2_{\text{T},m}}}
    \label{eq:TOA_PDF}
\end{equation}
and
\begin{equation}
    p_{\text{R},m}(\widehat{d}_{\text{R},m}\vert \boldsymbol{\ell}) = \frac{1}{\sqrt{2\pi}\widehat{d}_{\text{R},m}\sigma_{\text{R},m}}e^{-\frac{(\ln\widehat{d}_{\text{R},m}-\ln d_m)^2}{2\sigma_{\text{R},m}^2}},
    \label{eq:RSS_PDF}
\end{equation}
which are Gaussian and log-normal, respectively.

For sensor $m\in\mathcal{M}$, we define $\widehat{\boldsymbol{q}}_{m} = [\widehat{d}_{\text{T},m}, \ln\widehat{d}_{\text{R},m}]^\top$~and $\boldsymbol{q}_{m}\hspace{-0.5mm}=\hspace{-0.5mm}[d_{m}, \ln d_{m}]^\top$.
We further define $\widehat{\boldsymbol{q}}_{\mathcal{M}}\hspace{-0.5mm}=\hspace{-0.5mm} [\widehat{\boldsymbol{q}}_{m_1}^\top,\ldots,\widehat{\boldsymbol{q}}_{m_M}^\top]^\top$ and $\boldsymbol{q}_{\mathcal{M}} = [\boldsymbol{q}_{m_1}^\top,\ldots,\boldsymbol{q}_{m_M}^\top]^\top$ to be the vertical concatenations of $\widehat{\boldsymbol{q}}_{m}$ and $\boldsymbol{q}_{m}$, respectively, from the sensors $m \in \mathcal{M}$.
The joint PDF of $2M$ distance estimates from the hybrid TOA/RSS measurements across the sensor set $\mathcal{M}$ is then given by~\cite{Fletcher06}
\begin{align}
    p_{\text{H},\mathcal{M}}(\widehat{\boldsymbol{q}}_{\mathcal{M}}\vert \boldsymbol{\ell})
    =&\frac{e^{-\frac{1}{2}(\widehat{\boldsymbol{q}}_{\mathcal{M}}-\boldsymbol{q}_{\mathcal{M}})^\top\mathbf{R}_{\mathcal{M}}^{-1}(\widehat{\boldsymbol{q}}_{\mathcal{M}}-\boldsymbol{q}_{\mathcal{M}})}}{(2\pi)^{M}(\prod_{m\in\mathcal{M}}\widehat{d}_{\text{R},m})\det(\mathbf{R}_{\mathcal{M}})^{\frac{1}{2}}},
    \label{eq:joint_hybrid_PDF}
\end{align}
where $\mathbf{R}_{\mathcal{M}}$ is the $2M \times 2M$ covariance matrix of $\widehat{\boldsymbol{q}}_{\mathcal{M}}$, i.e., $\mathbf{R}_{\mathcal{M}}=\mathbb{E}[(\widehat{\boldsymbol{q}}_{\mathcal{M}}-\mathbb{E}[\widehat{\boldsymbol{q}}_{\mathcal{M}}])(\widehat{\boldsymbol{q}}_{\mathcal{M}}-\mathbb{E}[\widehat{\boldsymbol{q}}_{\mathcal{M}}])^\top]$.
Correlation among the measurement noises are captured by the entries of $\mathbf{R}_{\mathcal{M}}$, which we express as the following partitioned block matrix:
\begin{equation}
    \mathbf{R}_{\mathcal{M}}=
    \begin{bmatrix}
        \mathbf{R}_{m_1m_1} & \mathbf{R}_{m_1m_2} & \cdots & \mathbf{R}_{m_1m_M} \\
        \mathbf{R}_{m_2m_1} & \mathbf{R}_{m_2m_2} & \cdots & \mathbf{R}_{m_2m_M} \\
	\vdots & \vdots & \ddots & \vdots \\
        \mathbf{R}_{m_Mm_1} & \mathbf{R}_{m_Mm_2} & \cdots & \mathbf{R}_{m_Mm_M}
    \end{bmatrix},
    \label{eq:2Mx2M_cov_matrix}
\end{equation}
where $\mathbf{R}_{m_im_j}$ is the $2 \times 2$ covariance matrix between $\widehat{\boldsymbol{q}}_{m_i}$ and $\widehat{\boldsymbol{q}}_{m_j}$, $\forall i,j$.
The general expression for $\mathbf{R}_{m_im_j}$ is given by
\begin{equation}
    \mathbf{R}_{m_im_j}\hspace{-1mm}=\hspace{-0.5mm}\rho_{m_im_j}\hspace{-1mm}
    \begin{bmatrix}
        \sigma_{\text{T},m_i}\sigma_{\text{T},m_j} \hspace{-1.5mm}&\hspace{-1.5mm} \eta_{m_im_j}\sigma_{\text{T},m_i}\sigma_{\text{R},m_j} \\
        \eta_{m_im_j}\sigma_{\text{R},m_i}\sigma_{\text{T},m_j} \hspace{-1.5mm}&\hspace{-1.5mm} \sigma_{\text{R},m_i}\sigma_{\text{R},m_j}
    \end{bmatrix}
    \hspace{-0.5mm},\hspace{-1.5mm}
    \label{eq:2x2_cov_matrix}
\end{equation}
where $\rho_{m_im_j},\:\eta_{m_im_j}\hspace{-0.5mm}\in\hspace{-0.5mm}[0,1)$ are the spatial and hybrid corr-elation coefficients, respectively, between sensors $m_i$ and $m_j$.

\vspace{-3mm}
\subsection{Data Collection and Location Estimation}\label{ssec:system_collection}

Once sensor $m \in \mathcal{M}$ completes TOA and RSS measurements from the received signal $r_m(t)$, it generates the data point $\mathcal{D}_{m}=\{\widehat{\boldsymbol{q}}_{m}\}$ and transfers it to the central data fusion center in Fig.~\ref{fig:system_model}.
Subsequently, based on the $M$ data points collected from the sensors in $\mathcal{M}$, the data fusion center computes the estimated location of the target $\widehat{\boldsymbol{\ell}}_{\mathcal{M}} = [\widehat{x},\widehat{y},\widehat{z}]$ using a location estimation function $f_\text{est}$, i.e., $f_\text{est}: \{\mathcal{D}_{m}\}_{m\in \mathcal{M}}\rightarrow \widehat{\boldsymbol{\ell}}_{\mathcal{M}}$.
The function $f_\text{est}$ can be modeled using various localization algorithms, e.g., Taylor expansion~\cite{Foy76} or weighted least squares~\cite{Kay97}.
We make no specific assumption on $f_\text{est}$ so that our sensor selection strategies are compatible with any estimation method.

\vspace{-3mm}
\subsection{Problem Formulation}\label{ssec:system_problem}

We formally define the accuracy of wireless positioning using a selected sensor set $\mathcal{M}$ via MSE, which is given by
\begin{equation}
    \text{MSE}_\mathcal{M}(\boldsymbol{\ell}) = \mathbb{E}\big[(\widehat{x}-x)^2+(\widehat{y}-y)^2+(\widehat{z}-z)^2\big].
\end{equation}
If we define $\sigma^2_{\mathcal{M}}(\boldsymbol{\ell})$ to be the CRLB obtained via sensor set $\mathcal{M}$ for a target located at $\boldsymbol{\ell}$, $\sigma^2_{\mathcal{M}}(\boldsymbol{\ell})$ is a lower bound on $\text{MSE}_\mathcal{M}(\boldsymbol{\ell})$ satisfying the following relationship~\cite{Catovic04}:
\begin{equation*}
    \sigma^2_{\mathcal{M}}(\boldsymbol{\ell})\hspace{-0.5mm}\leq\hspace{-0.5mm}
    \mathbb{E}\big[\|\widehat{\boldsymbol{\ell}}_{\mathcal{M}}-\boldsymbol{\ell}\|^2_2\big] \hspace{-0.5mm}=\hspace{-0.5mm}\mathbb{E}\big[(\widehat{x}-\hspace{-0.5mm}x)^2\hspace{-0.5mm}+\hspace{-0.5mm}(\widehat{y}-\hspace{-0.5mm}y)^2\hspace{-0.5mm}+\hspace{-0.5mm}(\widehat{z}-\hspace{-0.5mm}z)^2\big].
\end{equation*}

Based on the CRLB, we formulate two distinct sensor selection problems:

\subsubsection{Dynamic sensor selection} When an approximation of the target location $\boldsymbol{\ell}_\text{p}\approx\boldsymbol{\ell}$ is available, where $\boldsymbol{\ell}_\text{p}\in\mathcal{L}$, we aim to select set $\mathcal{M}$ such that the CRLB obtained on $\boldsymbol{\ell}_\text{p}$ is minimized.
We formulate the dynamic sensor selection problem as
\begin{align}
    (\boldsymbol{\mathcal{P}}_\text{D}):~\mathcal{M}_\text{D}^\star=&\argmin_{\mathcal{M}}\;\sigma^2_{\mathcal{M}}(\boldsymbol{\ell}_\text{p}) \label{eq:dynamic_min_prob} \\
    \text{s.t.}\;\;&\vert \mathcal{M}\vert  = M,\;\mathcal{M}\subseteq\mathcal{M}_\text{max}. \label{eq:dynamic_min_const1}
\end{align}

\subsubsection{Robust sensor selection} When prior information on $\boldsymbol{\ell}_\text{p} \in \mathcal{L}$ is not available, we aim to minimize the worst-case positioning error across the potential target locations in $\mathcal{L}$.
We subsequently formulate the robust sensor selection problem as
\begin{align}
    (\boldsymbol{\mathcal{P}}_\text{R}):~\mathcal{M}_\text{R}^\star=&\argmin_{\mathcal{M}}\max_{\boldsymbol{\ell}^\mathsf{t}_g\in \mathcal{L}}\;\sigma^2_{\mathcal{M}}(\boldsymbol{\ell}^\mathsf{t}_g) \label{eq:robust_minmax_prob} \\
    \text{s.t.}\;\;&\vert \mathcal{M}\vert  = M,\;\mathcal{M}\subseteq\mathcal{M}_\text{max}. \label{eq:robust_minmax_const1}
\end{align}
The definition of our robust sensor selection problem in $\boldsymbol{\mathcal{P}}_\text{R}$ resembles the existing formalization of robustness in selection-based optimization problems~\cite{Krause08,Powers16,Bajovic11}.

Both sensor selection problems are combinatorial optimizations and can be solved in theory via exhaustive search over their feasible spaces.
In practice, however, this presents scalability challenges.
Particularly, the solution to $\boldsymbol{\mathcal{P}}_\text{D}$ can be found with complexity of $\mathcal{O}\big(\frac{M_\text{max}!}{(M_\text{max}-M)!M!}\big)$~\cite{Zhao19}, which can become prohibitive as $M_{\text{max}}$ increases.
For~$\boldsymbol{\mathcal{P}}_\text{R}$, the complexity is additionally impacted by the size of $\mathcal{L}$ because what we are trying to minimize is the \emph{max} function over $\boldsymbol{\ell}^\mathsf{t}_g$.
This motivates us to develop more computationally efficient approaches, beginning with analysis of the CRLB in Section~\ref{sec:CRLB}.

\vspace{-0.5mm}
\section{CRLB for Wireless Positioning}\label{sec:CRLB}

In this section, we obtain the CRLB expression in two different forms. Unique properties observed in each form motivate our development of sensor selection strategies that improve accuracy and/or complexity.
It is known that the CRLB can be computed as $\sigma^2_{\mathcal{M}}(\boldsymbol{\ell})=\textrm{tr}\{\mathcal{I}^{-1}_{\mathcal{M}}(\boldsymbol{\ell})\}$, where $\mathcal{I}_{\mathcal{M}}(\boldsymbol{\ell})$ is the Fisher information matrix (FIM)~\cite{Kay97} for 3D wireless positioning with hybrid TOA/RSS measurements on the target location $\boldsymbol{\ell}$ using the sensors in $\mathcal{M}$.
Hence, we first derive the generalized expression of $\mathcal{I}_{\mathcal{M}}(\boldsymbol{\ell})$ for our setting and then simplify the expression to obtain two closed-form CRLB expressions.

The FIM for our problem setup is given by~\cite{Kay97}
\begin{align}
    \hspace{-1mm}\mathcal{I}_{\mathcal{M}}(\boldsymbol{\ell}) = -\mathbb{E}\hspace{-1mm}
    \begin{bmatrix}
        \frac{\partial^2 l_{\mathcal{M}}(\widehat{\boldsymbol{q}}_{\mathcal{M}}\vert \boldsymbol{\ell})}{\partial x^2} \hspace{-2mm}& \frac{\partial^2 l_{\mathcal{M}}(\widehat{\boldsymbol{q}}_{\mathcal{M}}\vert \boldsymbol{\ell})}{\partial x \partial y} \hspace{-2mm}& \frac{\partial^2 l_{\mathcal{M}}(\widehat{\boldsymbol{q}}_{\mathcal{M}}\vert \boldsymbol{\ell})}{\partial x \partial z} \\
        \frac{\partial^2 l_{\mathcal{M}}(\widehat{\boldsymbol{q}}_{\mathcal{M}}\vert \boldsymbol{\ell})}{\partial  y \partial x} \hspace{-2mm}& \frac{\partial^2 l_{\mathcal{M}}(\widehat{\boldsymbol{q}}_{\mathcal{M}}\vert \boldsymbol{\ell})}{\partial y^2} \hspace{-2mm}& \frac{\partial^2 l_{\mathcal{M}}(\widehat{\boldsymbol{q}}_{\mathcal{M}}\vert \boldsymbol{\ell})}{\partial y \partial z} \\
        \frac{\partial^2 l_{\mathcal{M}}(\widehat{\boldsymbol{q}}_{\mathcal{M}}\vert \boldsymbol{\ell})}{\partial z \partial x} \hspace{-2mm}& \frac{\partial^2 l_{\mathcal{M}}(\widehat{\boldsymbol{q}}_{\mathcal{M}}\vert \boldsymbol{\ell})}{\partial z \partial y} \hspace{-2mm}& \frac{\partial^2 l_{\mathcal{M}}(\widehat{\boldsymbol{q}}_{\mathcal{M}}\vert \boldsymbol{\ell})}{\partial z^2}
    \end{bmatrix}\hspace{-1mm},
    \label{eq:FIM_hybrid_1}
\end{align}
where
\begin{align}
    l_{\mathcal{M}}(\widehat{\boldsymbol{q}}_{\mathcal{M}}\vert \boldsymbol{\ell})=&-(1/2)(\widehat{\boldsymbol{q}}_{\mathcal{M}}-\boldsymbol{q}_{\mathcal{M}})^\top\mathbf{R}_{\mathcal{M}}^{-1}(\widehat{\boldsymbol{q}}_{\mathcal{M}}-\boldsymbol{q}_{\mathcal{M}}) \nonumber \\
    &\;\:-\ln\hspace{-0.5mm}\Big((2\pi)^{M}\Big(\hspace{-1mm}\prod_{m\in\mathcal{M}}\hspace{-1mm}\widehat{d}_{\text{R},m}\Big)\det(\mathbf{R}_{\mathcal{M}})^{\frac{1}{2}}\Big)
    \label{eq:log-likelihood_hybrid}
\end{align}
is the log-likelihood function derived from~\eqref{eq:joint_hybrid_PDF}.
Based on~\eqref{eq:FIM_hybrid_1} and~\eqref{eq:log-likelihood_hybrid}, we write
\begin{align}
    \mathcal{I}_{\mathcal{M}}(\boldsymbol{\ell}) & =
    \begin{bmatrix}
        \mathcal{I}^{(xx)}_{\mathcal{M}} & \mathcal{I}^{(xy)}_{\mathcal{M}} & \mathcal{I}^{(xz)}_{\mathcal{M}} \\
        \mathcal{I}^{(yx)}_{\mathcal{M}} & \mathcal{I}^{(yy)}_{\mathcal{M}} & \mathcal{I}^{(yz)}_{\mathcal{M}} \\
        \mathcal{I}^{(zx)}_{\mathcal{M}} & \mathcal{I}^{(zy)}_{\mathcal{M}} & \mathcal{I}^{(zz)}_{\mathcal{M}}
    \end{bmatrix},
    \label{eq:FIM_hybrid_2}
\end{align}
where the elements can be expressed as
\begin{equation}
    \mathcal{I}_{\mathcal{M}}^{(vw)} =\frac{\partial\boldsymbol{q}_{\mathcal{M}}^\top}{\partial v}\mathbf{R}_{\mathcal{M}}^{-1}\frac{\partial\boldsymbol{q}_{\mathcal{M}}}{\partial w}, \quad \forall v,w \in \{x,y,z\}.
    \label{eq:FIM_hybrid_entry}
\end{equation}
The derivation of~\eqref{eq:FIM_hybrid_entry} is provided in Appendix~\ref{appendix:FIM}.

We assume zero correlation on the noises among different sensors~\cite{Laaraiedh12,Li18,Shen12}, i.e., $\rho_{m_im_j}=1$ and $\eta_{m_im_j}\in[0,1)$ if $i\hspace{-1mm}=\hspace{-1mm}j$, and $\rho_{m_im_j}\hspace{-1mm}=0$ and $\eta_{m_im_j}\hspace{-1mm}=0$ if $i\hspace{-1mm}\neq\hspace{-1mm}j$.
This makes $\mathbf{R}_{\mathcal{M}}$ a block diagonal matrix, i.e., $\mathbf{R}_{\mathcal{M}}=\text{diag}(\mathbf{R}_{m_1m_1},$ $\mathbf{R}_{m_2m_2},\ldots,\mathbf{R}_{m_Mm_M})$.
The inverse of $\mathbf{R}_{\mathcal{M}}$ then becomes
\begin{equation}
    \mathbf{R}_{\mathcal{M}}^{-1}=\text{diag}(\mathbf{R}_{m_1m_1}^{-1},\mathbf{R}_{m_2m_2}^{-1},\ldots,\mathbf{R}_{m_Mm_M}^{-1}).
    \label{eq:block_matrix_inverse}
\end{equation}
Using~\eqref{eq:block_matrix_inverse} and $d_{m}\hspace{-0.5mm}=[(x^\mathsf{s}_{m}\hspace{-0.5mm}-x)^2+(y^\mathsf{s}_{m}\hspace{-0.5mm}-y)^2+(z^\mathsf{s}_{m}\hspace{-0.5mm}-z)^2]^\frac{1}{2}$, \eqref{eq:FIM_hybrid_entry} can be rewritten as
\begin{align}
    &\mathcal{I}_{\mathcal{M}}^{(vw)} =\sum_{m\in\mathcal{M}}\frac{\partial\boldsymbol{q}_{m}^\top}{\partial v}\mathbf{R}_{mm}^{-1}\frac{\partial\boldsymbol{q}_{m}}{\partial w} \nonumber \\
    &\hspace{-2mm}=\hspace{-2mm}\sum_{m\in\mathcal{M}}\hspace{-1mm}\left[\hspace{-0.5mm}{-\frac{(v^\mathsf{s}_{m}\hspace{-1mm}-v)}{d_{m}} \atop -\frac{(v^\mathsf{s}_{m}\hspace{-1mm}-v)}{d_{m}^2}}\hspace{-1mm}\right]^{\hspace{-1.2mm}\top}\hspace{-2mm}
    \begin{bmatrix}
        \hspace{-0.5mm}\frac{\sigma^{-2}_{\text{T},m}}{(1-\eta^2_{mm})} \hspace{-2mm}&\hspace{-2mm}\frac{-\eta_{mm}\sigma^{-1}_{\text{T},m}}{(1-\eta^2_{mm})\sigma_{\text{R},m}}\hspace{-0.5mm} \nonumber \\
        \hspace{-0.5mm}\frac{-\eta_{mm}\sigma^{-1}_{\text{T},m}}{(1-\eta^2_{mm})\sigma_{\text{R},m}} \hspace{-2mm}&\hspace{-2mm} \frac{\sigma^{-2}_{\text{R},m}}{(1-\eta^2_{mm})}\hspace{-0.5mm}
    \end{bmatrix}
    \hspace{-2mm}\left[\hspace{-0.5mm}{-\frac{(w^\mathsf{s}_{m}\hspace{-1mm}-w)}{d_{m}} \atop -\frac{(w^\mathsf{s}_{m}\hspace{-1mm}-w)}{d_{m}^2}}\hspace{-1mm}\right]\hspace{-2mm} \\
    &\hspace{-2mm}=\hspace{-2mm}\sum_{m\in\mathcal{M}}\epsilon_{m}\frac{(v^\mathsf{s}_m-v)(w^\mathsf{s}_m-w)}{d^2_{m}},
    \label{eq:FIM_hybrid_entry_2}
\end{align}
where
\begin{equation}
    \epsilon_{m} = \frac{\sigma^{-2}_{\text{T},m}}{(1-\eta_{mm}^2)}+\frac{\sigma^{-2}_{\text{R},m}}{(1-\eta_{mm}^2)d^2_{m}}\hspace{-0.5mm}-\frac{2\eta_{mm}\sigma^{-1}_{\text{T},m}\sigma^{-1}_{\text{R},m}}{(1-\eta_{mm}^2)d_{m}}.
    \label{eq:corr_epsilon}
\end{equation}
Using~\eqref{eq:FIM_hybrid_entry_2}, \eqref{eq:FIM_hybrid_2} can be expressed as
\begin{align}
    \hspace{-1mm} \mathcal{I}_{\mathcal{M}}(\boldsymbol{\ell})\hspace{-.5mm} &= \hspace{-2mm} \sum_{m\in\mathcal{M}}\hspace{-1mm}\hspace{-.5mm}\epsilon_m\hspace{-1mm}\hspace{-.5mm}
    \begin{bmatrix}
        \hspace{-.5mm}
        \frac{(x^\mathsf{s}_m-x)(x^\mathsf{s}_m-x)}{d^2_{m}} \hspace{-1.5mm}&\hspace{-1.5mm} \frac{(x^\mathsf{s}_m-x)(y^\mathsf{s}_m-y)}{d^2_{m}} \hspace{-1.5mm}&\hspace{-1.5mm} \frac{(x^\mathsf{s}_m-x)(z^\mathsf{s}_m-z)}{d^2_{m}} \\
        \hspace{-.5mm}
        \frac{(y^\mathsf{s}_m-y)(x^\mathsf{s}_m-x)}{d^2_{m}} \hspace{-1.5mm}&\hspace{-1.5mm} \frac{(y^\mathsf{s}_m-y)(y^\mathsf{s}_m-y)}{d^2_{m}} \hspace{-1.5mm}&\hspace{-1.5mm} \frac{(y^\mathsf{s}_m-y)(z^\mathsf{s}_m-z)}{d^2_{m}} \\
        \hspace{-.5mm}
        \frac{(z^\mathsf{s}_m-z)(x^\mathsf{s}_m-x)}{d^2_{m}} \hspace{-1.5mm}&\hspace{-1.5mm} \frac{(z^\mathsf{s}_m-z)(y^\mathsf{s}_m-y)}{d^2_{m}} \hspace{-1.5mm}&\hspace{-1.5mm} \frac{(z^\mathsf{s}_m-z)(z^\mathsf{s}_m-z)}{d^2_{m}}
        \hspace{-.5mm}
    \end{bmatrix}
    \hspace{-2mm} \nonumber \\
    & = \hspace{-2mm} \sum_{m\in\mathcal{M}}\epsilon_{m}\boldsymbol{u}_{m}\boldsymbol{u}_{m}^\top, \label{eq:FIM_hybrid_4}
\end{align}
where $\boldsymbol{u}_{m} = [\nicefrac{(x^\mathsf{s}_m-x)}{d_{m}}, \nicefrac{(y^\mathsf{s}_m-y)}{d_{m}}, \nicefrac{(z^\mathsf{s}_m-z)}{d_{m}}]^\top$ is the normalized LOS vector between sensor $m$ and the target.
Using~\eqref{eq:FIM_hybrid_4} for expressing $\mathcal{I}_{\mathcal{M}}(\boldsymbol{\ell})$, the \textbf{trace form of the CRLB (T-CRLB)} is obtained as
\begin{equation}
    \sigma^2_{\mathcal{M}}(\boldsymbol{\ell})=\textrm{tr}\left\{\bigg(\sum_{m\in\mathcal{M}}\epsilon_{m}\boldsymbol{u}_{m}\boldsymbol{u}_{m}^\top\bigg)^{-1}\right\}.
    \label{eq:CRLB_hybrid_corr_1}
\end{equation}

The T-CRLB reveals that the CRLB is a function of $M$ rank-one positive semidefinite matrices, each of which corresponds to one of the sensors in $\mathcal{M}$.
With the CRLB taking this form, optimizing the bound can be perceived as a standard \emph{E}-optimality experiment design problem~\cite{Boyd04}, which is known to be convex.

Without loss of generality, we can replace $\boldsymbol{\ell}$ in~\eqref{eq:CRLB_hybrid_corr_1} with $\boldsymbol{\ell}^\mathsf{t}_g$ to represent the CRLB for a discretized potential target location $\boldsymbol{\ell}^\mathsf{t}_g\in\mathcal{L}$.
The expression of $\sigma^2_{\mathcal{M}}(\boldsymbol{\ell}^\mathsf{t}_g)$ is obtained by replacing all instances of $d_m$ in~\eqref{eq:CRLB_hybrid_corr_1} with $d_{mg}=\|\boldsymbol{\ell}^\mathsf{s}_m-\boldsymbol{\ell}^\mathsf{t}_g\|_2$.
The resulting expression is given by
\begin{equation}
    \sigma^2_{\mathcal{M}}(\boldsymbol{\ell}^\mathsf{t}_g)=\textrm{tr}\left\{\bigg(\sum_{m\in\mathcal{M}}\epsilon_{mg}\boldsymbol{u}_{mg}\boldsymbol{u}_{mg}^\top\bigg)^{-1}\right\},
    \label{eq:CRLB_hybrid_corr_2}
\end{equation}
where $\boldsymbol{u}_{mg}=\big[\nicefrac{(x^\mathsf{s}_m-x^\mathsf{t}_g)}{d_{mg}}, \nicefrac{(y^\mathsf{s}_m-y^\mathsf{t}_g)}{d_{mg}}, \nicefrac{(z^\mathsf{s}_m-z^\mathsf{t}_g)}{d_{mg}}\big]^\top$ and $\epsilon_{mg} = \frac{\sigma^{-2}_{\text{T},m}}{(1-\eta_{mm}^2)}\hspace{-0.2mm}+\hspace{-0.2mm}\frac{\sigma^{-2}_{\text{R},m}}{(1-\eta_{mm}^2)d^2_{mg}}\hspace{-0.2mm}-\hspace{-0.2mm}\frac{2\eta_{mm}\sigma^{-1}_{\text{T},m}\sigma^{-1}_{\text{R},m}}{(1-\eta_{mm}^2)d_{mg}}$.
In Section~\ref{sec:robust}, we will utilize~\eqref{eq:CRLB_hybrid_corr_2} to formulate different versions of the robust sensor selection problem that are equivalent to~$\boldsymbol{\mathcal{P}}_\text{R}$.

The T-CRLB allows us to evaluate the CRLB using $\boldsymbol{u}_m$ and $\epsilon_m$. However, the inverse operation prevents us from directly observing the relationship between the selected sensors and positioning accuracy. Therefore, from~\eqref{eq:CRLB_hybrid_corr_1}, we continue our derivation and obtain the \textbf{fractional form of CRLB (F-CRLB)} based on the following proposition proven in Appendix~\ref{appendix:CRLB}.
\begin{proposition}
    Given $2M$ distance estimates $\widehat{\boldsymbol{q}}_{\mathcal{M}}$ obtained from the hybrid TOA/RSS measurements acquired by $M$ sensors in $\mathcal{M}$ following the PDF in~\eqref{eq:joint_hybrid_PDF}, the F-CRLB is given by~\eqref{eq:CRLB_derived}.
    \label{prop:CRLB}
    \vspace{-1mm}
\end{proposition}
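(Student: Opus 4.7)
The plan is to start from the T-CRLB in~\eqref{eq:CRLB_hybrid_corr_1}, which expresses $\sigma^2_{\mathcal{M}}(\boldsymbol{\ell})=\textrm{tr}\{\mathcal{I}^{-1}_{\mathcal{M}}(\boldsymbol{\ell})\}$ with $\mathcal{I}_{\mathcal{M}}(\boldsymbol{\ell})$ the $3\times 3$ FIM from~\eqref{eq:FIM_hybrid_4}, and to rewrite the trace of its inverse as a ratio of polynomials indexed by the selected sensors. Since the FIM is only $3\times 3$, I would invoke the adjugate identity $A^{-1}=\mathrm{adj}(A)/\det(A)$, so that $\textrm{tr}(A^{-1})$ becomes the sum of the three principal $2\times 2$ minors of $\mathcal{I}_{\mathcal{M}}(\boldsymbol{\ell})$ divided by its determinant. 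Both numerator and denominator are then polynomials in the entries $\mathcal{I}^{(vw)}_{\mathcal{M}}$, each of which is a sum over $m\in\mathcal{M}$ by~\eqref{eq:FIM_hybrid_entry_2}.

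Next, I would exploit the rank-one decomposition $\mathcal{I}_{\mathcal{M}}(\boldsymbol{\ell})=UDU^\top$ with $U=[\boldsymbol{u}_{m_1},\ldots,\boldsymbol{u}_{m_M}]$ and $D=\textrm{diag}(\epsilon_{m_1},\ldots,\epsilon_{m_M})$, and apply the Cauchy--Binet formula. This turns the denominator $\det(\mathcal{I}_{\mathcal{M}}(\boldsymbol{\ell}))$ into a symmetric sum over unordered triples $\{m_i,m_j,m_k\}\subseteq\mathcal{M}$ of $\epsilon_{m_i}\epsilon_{m_j}\epsilon_{m_k}\cdot\bigl(\det[\boldsymbol{u}_{m_i}\,|\,\boldsymbol{u}_{m_j}\,|\,\boldsymbol{u}_{m_k}]\bigr)^{2}$, and each principal $2\times 2$ minor into a sum over unordered pairs $\{m_i,m_j\}$ of $\epsilon_{m_i}\epsilon_{m_j}$ times the square of a $2\times 2$ subdeterminant of coordinate differences.

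Summing the three principal-minor expansions collapses the numerator into a single pair-indexed sum whose geometric content is the squared cross product $\|\boldsymbol{u}_{m_i}\times\boldsymbol{u}_{m_j}\|^{2}=1-(\boldsymbol{u}_{m_i}^{\top}\boldsymbol{u}_{m_j})^{2}$, obtained via the Lagrange identity applied to unit vectors. Analogously, each denominator term is the squared scalar triple product of the three LOS unit vectors. Substituting back $\boldsymbol{u}_m=(\boldsymbol{\ell}^{\mathsf{s}}_m-\boldsymbol{\ell})/d_m$ and $\epsilon_m$ from~\eqref{eq:corr_epsilon} then yields exactly the fractional expression asserted in~\eqref{eq:CRLB_derived}.

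The main obstacle I foresee is the index bookkeeping required to fuse the three separate principal-minor expansions into a single symmetric pair sum and to verify that the inter-minor cross-terms reassemble precisely into $\|\boldsymbol{u}_{m_i}\times\boldsymbol{u}_{m_j}\|^{2}$ rather than into a less tractable expression. A disciplined use of Cauchy--Binet together with the Lagrange identity on unit LOS vectors should carry this through without grinding out the full $3\times 3$ expansions, and the rest of the derivation reduces to algebraic substitution.
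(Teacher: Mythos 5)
Your proposal is correct and follows essentially the same route as the paper's proof in Appendix~\ref{appendix:CRLB}: write $\textrm{tr}\{\mathcal{I}_{\mathcal{M}}^{-1}\}$ as $\textrm{tr}\{\mathrm{adj}(\cdot)\}/\det(\cdot)$, apply Cauchy--Binet to the determinant (giving squared scalar triple products over sensor triples) and to each principal $2\times2$ minor (giving squared cross products over pairs via the Lagrange identity on unit LOS vectors), then translate into the $\sin^2\theta$ and $\sin^2\phi$ angle factors. The index bookkeeping you worry about is benign --- the three minors expand independently with no inter-minor cross-terms, and their per-pair contributions sum directly to $\|\boldsymbol{u}_{m_1}\times\boldsymbol{u}_{m_2}\|^2$ exactly as in the paper.
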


\begin{table*}[t]
\vspace{-2mm}
\begin{minipage}{0.99\textwidth}
\begin{equation}
    \sigma^2_{\mathcal{M}}(\boldsymbol{\ell}) = \frac{N_{\mathcal{M}}(\boldsymbol{\ell})}{D_{\mathcal{M}}(\boldsymbol{\ell})} = \frac{\sum_{m_1\in\mathcal{M}} \sum_{\substack{m_2\in\mathcal{M}\\m_2>m_1}} \epsilon_{m_1}\epsilon_{m_2}\sin^2\theta_{m_1m_2}} {\sum_{m_1\in\mathcal{M}} \sum_{\substack{m_2\in\mathcal{M}\\m_2>m_1}} \sum_{\substack{m_3\in\mathcal{M}\\m_3>m_2}} \epsilon_{m_1}\epsilon_{m_2}\epsilon_{m_3}\sin^2{\theta_{m_1m_2}}\sin^2{\phi_{m_1m_2m_3}}}
    \label{eq:CRLB_derived}
\end{equation}
\hrule
\vspace{-2mm}
\end{minipage}
\end{table*}

In~\eqref{eq:CRLB_derived}, $\theta_{m_1m_2}$ is the angle between the LOS vectors $\boldsymbol{u}_{m_1}$ and $\boldsymbol{u}_{m_2}$, and $\phi_{m_1m_2m_3}$ is the angle between the vector $\boldsymbol{u}_{m_3}$ and the plane containing $\boldsymbol{u}_{m_1}$ and $\boldsymbol{u}_{m_2}$.
A visualization of these angle parameters for $\mathcal{M}=\{1,2,3\}$ is provided in Fig.~\ref{sensors_layout}.
Compared to the T-CRLB, the F-CRLB offers interpretations on the relationship between sensor placement and the resulting CRLB.
For example, the bound becomes undefined whenever $D_{\mathcal{M}}(\boldsymbol{\ell})$ yields zero, and this singularity occurs when the selected sensors have a co-planar arrangement (i.e., the 3D coordinates of $\mathcal{M}$ can be contained by a single plane).
In Section~\ref{sec:dynamic}, where we focus on our dynamic sensor selection problem~$\boldsymbol{\mathcal{P}}_\text{D}$, one of our sensor selection strategies will be based on the unique characteristics found in the F-CRLB.

\vspace{-1mm}
\section{Dynamic Sensor Selection Strategies}\label{sec:dynamic}

In this section, we focus on $\boldsymbol{\mathcal{P}}_\text{D}$, where the sensor set $\mathcal{M}$ is selected such that $\sigma^2_\mathcal{M}(\boldsymbol{\ell})$, for a given $\boldsymbol{\ell}$, is minimized. 
Despite that SDR provides near-optimal performance to solve the problem~\cite{Zhao19,Dai20}, its complexity becomes prohibitive for a large number of sensors (e.g., $\mathcal{O}(M_\text{max}^{4.5})$~\cite{Dai20} from using interior point methods~\cite{Boyd04}).
To overcome this issue, greedy selection strategies~\cite{Godrich12,Zhao19} have been proposed for improved complexity.
For further reducing complexity, we propose two selection metrics that do not fully compute the CRLB yet are still effective for greedy selection.
The metrics are designed from the properties found in each of two forms: T-CRLB and F-CRLB.
We develop greedy selection algorithms based on each proposed metric.
Then, we present our complexity analysis in Section~\ref{ssec:dynamic_complexity}.
Note that our proposed sensor selection algorithms can also be applied when only TOA or RSS is available for the measurement since the resulting CRLB can be derived to take the same form as T-CRLB or F-CRLB~\cite{Patwari03}.

\vspace{-3mm}
\subsection{Sensor Selection based on the T-CRLB}\label{ssec:dynamic_TCRLB}

Solving $\boldsymbol{\mathcal{P}}_\text{D}$ via greedy selection involves iterative selection steps, in each of which the single most promising sensor for minimizing the CRLB is added to the set.
To optimize this process, we first quantify the marginal CRLB reduction from selecting a sensor.
Defining $\mathcal{M}_i$ to be the set of sensors selected by the first $i$ steps in greedy selection, i.e., $i \in \{1,\ldots,M\}$ and $\vert \mathcal{M}_i\vert =i$, we can write the marginal CRLB reduction achieved by the selection step $i$ as $\sigma^2_{\mathcal{M}_{i-1}}(\boldsymbol{\ell})-\sigma^2_{\mathcal{M}_{i}}(\boldsymbol{\ell})$.
Using the T-CRLB in~\eqref{eq:CRLB_hybrid_corr_1}, we present a simplified expression for the marginally reduced CRLB in the following proposition.

\begin{figure}[t]
    \centering
    \includegraphics[width=0.45\linewidth]{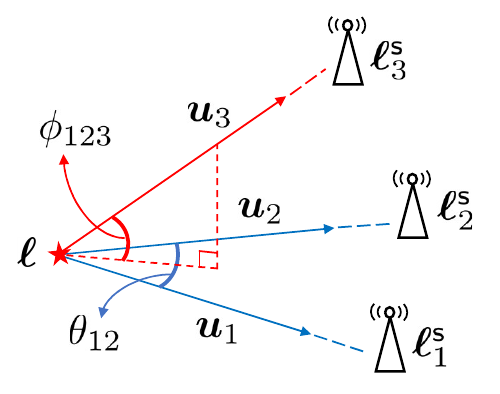}
    \caption{A visual illustration of $\theta_{m_1m_2}$ and $\phi_{m_1m_2m_3}$ in the F-CRLB expression for a given target and a sensor set $\mathcal{M} = \{1,2,3\}$.}
    \label{sensors_layout}
    \vspace{-2mm}
\end{figure}

\begin{proposition}
    For the greedy sensor selection step $i$, the marginal CRLB reduction achieved by selecting sensor $m\in\mathcal{M}_\text{max}\backslash \mathcal{M}_{i-1}$ is expressed as
    \begin{equation}
        \hspace{-0mm}\sigma^2_{\mathcal{M}_{i-1}}\hspace{-0.5mm}(\boldsymbol{\ell})\hspace{-0.5mm}-\hspace{-0.5mm}\sigma^2_{\mathcal{M}_{i}}\hspace{-0.5mm}(\boldsymbol{\ell})\hspace{-0.5mm}=\hspace{-0.5mm}\textrm{tr}\left\{\hspace{-0.5mm}\frac{\epsilon_{m}\mathcal{I}_{\mathcal{M}_{i-1}}^{-1}\hspace{-0.5mm}(\boldsymbol{\ell})\boldsymbol{u}_{m}\boldsymbol{u}_{m}^\top\mathcal{I}_{\mathcal{M}_{i-1}}^{-1}\hspace{-0.5mm}(\boldsymbol{\ell})}{1+\epsilon_{m}\boldsymbol{u}_{m}^\top\mathcal{I}_{\mathcal{M}_{i-1}}^{-1}\hspace{-0.5mm}(\boldsymbol{\ell})\boldsymbol{u}_{m}}\hspace{-0.5mm}\right\}\hspace{-0.5mm}.\hspace{-7mm}\label{eq:marginal_CRLB}
    \end{equation}
    \label{prop:marginal_CRLB}
\end{proposition}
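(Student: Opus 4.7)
The plan is to obtain the identity by recognizing that augmenting the selected set by one sensor corresponds to a rank-one additive update on the Fisher information matrix, and then applying the Sherman--Morrison matrix inversion lemma to relate the two CRLBs.

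The first step is to unwind the definition of the FIM from equation~\eqref{eq:FIM_hybrid_4}. Since $\mathcal{M}_{i} = \mathcal{M}_{i-1} \cup \{m\}$ and each sensor contributes an additive term $\epsilon_m \boldsymbol{u}_m \boldsymbol{u}_m^\top$ to the sum, we immediately have
\begin{equation}
\mathcal{I}_{\mathcal{M}_i}(\boldsymbol{\ell}) = \mathcal{I}_{\mathcal{M}_{i-1}}(\boldsymbol{\ell}) + \epsilon_m \boldsymbol{u}_m \boldsymbol{u}_m^\top,
\nonumber
\end{equation}
a symmetric rank-one perturbation of the previous FIM. Because $\mathcal{I}_{\mathcal{M}_{i-1}}(\boldsymbol{\ell})$ is (assumed) invertible after enough sensors have been selected to break co-planarity, and the perturbation is rank-one, the Sherman--Morrison formula applies.

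The second step is to invoke Sherman--Morrison with $A = \mathcal{I}_{\mathcal{M}_{i-1}}(\boldsymbol{\ell})$ and $u = v = \sqrt{\epsilon_m}\,\boldsymbol{u}_m$, giving
\begin{equation}
\mathcal{I}_{\mathcal{M}_i}^{-1}(\boldsymbol{\ell}) = \mathcal{I}_{\mathcal{M}_{i-1}}^{-1}(\boldsymbol{\ell}) - \frac{\epsilon_m \mathcal{I}_{\mathcal{M}_{i-1}}^{-1}(\boldsymbol{\ell})\boldsymbol{u}_m\boldsymbol{u}_m^\top\mathcal{I}_{\mathcal{M}_{i-1}}^{-1}(\boldsymbol{\ell})}{1 + \epsilon_m \boldsymbol{u}_m^\top \mathcal{I}_{\mathcal{M}_{i-1}}^{-1}(\boldsymbol{\ell}) \boldsymbol{u}_m}.
\nonumber
\end{equation}
Taking the trace of both sides, using linearity of the trace operator, and recalling that $\sigma^2_{\mathcal{M}}(\boldsymbol{\ell}) = \textrm{tr}\{\mathcal{I}_{\mathcal{M}}^{-1}(\boldsymbol{\ell})\}$, yields the claimed expression after moving $\sigma^2_{\mathcal{M}_i}(\boldsymbol{\ell})$ to the left-hand side.

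There is no real obstacle here beyond justifying the two prerequisites for Sherman--Morrison: that $\mathcal{I}_{\mathcal{M}_{i-1}}(\boldsymbol{\ell})$ is invertible and that the scalar denominator $1 + \epsilon_m \boldsymbol{u}_m^\top \mathcal{I}_{\mathcal{M}_{i-1}}^{-1}(\boldsymbol{\ell}) \boldsymbol{u}_m$ is nonzero. The former follows once $\mathcal{M}_{i-1}$ contains a non-co-planar configuration (per the singularity discussion following equation~\eqref{eq:CRLB_derived}); in the early greedy iterations where the FIM is rank-deficient, one can instead argue either via the Moore--Penrose pseudoinverse or by starting the recursion from a minimally well-posed initial set. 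The latter condition holds automatically, since $\epsilon_m > 0$ and $\mathcal{I}_{\mathcal{M}_{i-1}}^{-1}(\boldsymbol{\ell}) \succ 0$ imply the denominator is strictly greater than one. With these two points noted, the derivation reduces to a clean two-line application of the matrix inversion lemma, so the bulk of the write-up will be devoted to stating the rank-one update structure cleanly rather than to any intricate calculation.
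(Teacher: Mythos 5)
Your proposal is correct and follows essentially the same route as the paper's proof: both express $\mathcal{I}_{\mathcal{M}_i}(\boldsymbol{\ell})=\mathcal{I}_{\mathcal{M}_{i-1}}(\boldsymbol{\ell})+\epsilon_m\boldsymbol{u}_m\boldsymbol{u}_m^\top$ as a rank-one update and apply the Sherman--Morrison formula together with $\sigma^2_{\mathcal{M}}(\boldsymbol{\ell})=\textrm{tr}\{\mathcal{I}^{-1}_{\mathcal{M}}(\boldsymbol{\ell})\}$. Your additional remarks on invertibility of $\mathcal{I}_{\mathcal{M}_{i-1}}(\boldsymbol{\ell})$ and the strict positivity of the scalar denominator are consistent with the paper, which handles the former by seeding the greedy procedure with a heuristically chosen non-singular $\mathcal{M}_3$.
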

\vspace{-4mm}
\begin{proof}
    Via~\eqref{eq:FIM_hybrid_4}, the FIM of $\mathcal{M}_{i-1}$ is expressed as $\mathcal{I}_{\mathcal{M}_{i-1}}\hspace{-0.5mm}(\boldsymbol{\ell})\hspace{-0.5mm}=\hspace{-0.5mm}\sum_{m\in\mathcal{M}_{i-1}}\epsilon_m\boldsymbol{u}_m\boldsymbol{u}_m^\top$, which is the sum of $i-1$ rank-one matrices.
    Then, adding an additional sensor to $\mathcal{M}_{i-1}$ is equivalent to applying a rank-one matrix update to $\mathcal{I}_{\mathcal{M}_{i-1}}$.
    Since $\mathcal{M}_{i}=\mathcal{M}_{i-1}\cup \{m\}$, we can write $\mathcal{I}_{\mathcal{M}_i}(\boldsymbol{\ell})=\mathcal{I}_{\mathcal{M}_{i-1}}(\boldsymbol{\ell})+\epsilon_m\boldsymbol{u}_m\boldsymbol{u}_m^\top$.
    Using the relationship $\sigma^2_{\mathcal{M}}(\boldsymbol{\ell})=\textrm{tr}\{\mathcal{I}^{-1}_{\mathcal{M}}(\boldsymbol{\ell})\}$,
    \begin{align}
        \sigma^2_{\mathcal{M}_{i-1}}&(\boldsymbol{\ell})-\sigma^2_{\mathcal{M}_{i}}(\boldsymbol{\ell}) =\textrm{tr}\{\mathcal{I}_{\mathcal{M}_{i-1}}^{-1}(\boldsymbol{\ell})\}-\textrm{tr}\{\mathcal{I}^{-1}_{\mathcal{M}_i}(\boldsymbol{\ell})\} \nonumber \\
        &=\textrm{tr}\{\mathcal{I}_{\mathcal{M}_{i-1}}^{-1}(\boldsymbol{\ell}) -(\mathcal{I}_{\mathcal{M}_{i-1}}(\boldsymbol{\ell})+\epsilon_m\boldsymbol{u}_m\boldsymbol{u}_m^\top)^{-1}\} \nonumber \\
        &=\textrm{tr}\left\{\frac{\epsilon_{m}\mathcal{I}_{\mathcal{M}_{i-1}}^{-1}\boldsymbol{u}_{m}\boldsymbol{u}_{m}^\top\mathcal{I}_{\mathcal{M}_{i-1}}^{-1}}{1+\epsilon_{m}\boldsymbol{u}_{m}^\top\mathcal{I}_{\mathcal{M}_{i-1}}^{-1}\boldsymbol{u}_{m}}\right\}, \label{eq:marginal_CRLB_proof_3}
    \end{align}
    where~\eqref{eq:marginal_CRLB_proof_3} is from the Sherman-Morrison formula~\cite{Sherman50}.
\end{proof}

We see from Proposition~\ref{prop:marginal_CRLB} that the marginal CRLB reduction depends on (i) the FIM before making the selection and (ii) the rank-one matrix corresponding to sensor $m$.
It should be noted that the inverse operation is not required to compute $\mathcal{I}_{\mathcal{M}_{i}}^{-1}$ because it can be obtained from the Sherman-Morrison formula and the previous iteration.
One issue with this metric is that $\mathcal{I}_{\mathcal{M}_{i-1}}$ must be always non-singular, i.e., the matrix must be invertible.
To ensure this, at least three sensors must be included in $\mathcal{M}_{i-1}$ for each $i$, which makes the greedy selection based on~\eqref{eq:marginal_CRLB} applicable only for $i \geq 4$.
The same issue exists in the greedy selection algorithm~\cite{Zhao19} where the complete CRLB expression is used as a selection metric.
As a result, we select the first three sensors, comprising $\mathcal{M}_3$, heuristically (e.g., random selection) and conduct the rest of greedy selection using the proposed metric.

Formally, for each selection step $i$, where $4\leq i \leq M$, we solve the optimization problem
\begin{equation}
    m_i^\star=\argmax_{m\in\mathcal{M}_\text{max}\backslash \mathcal{M}_{i-1}} \textrm{tr}\left\{\frac{\epsilon_{m}\mathcal{I}_{\mathcal{M}_{i-1}}^{-1}\boldsymbol{u}_{m}\boldsymbol{u}_{m}^\top\mathcal{I}_{\mathcal{M}_{i-1}}^{-1}}{1+\epsilon_{m}\boldsymbol{u}_{m}^\top\mathcal{I}_{\mathcal{M}_{i-1}}^{-1}\boldsymbol{u}_{m}}\right\}
    \label{eq:dynamic_trace}
\end{equation}
via exhaustive search and update $\mathcal{M}_i=\mathcal{M}_{i-1}\cup \{m_i^\star\}$.
Once all $M$ sensors are selected, $\mathcal{M}_M$ is declared as a solution.
The overall procedure is summarized in Algorithm~\ref{alg:dynamic_trace}.
\begin{algorithm}[ht]
{\small
\caption{Greedy Sensor Selection based on~\eqref{eq:dynamic_trace}}
\label{alg:dynamic_trace}
\begin{algorithmic}
\REQUIRE $\epsilon_m$ and $\boldsymbol{u}_m$, $\forall m \in \mathcal{M}_\text{max}$
\STATE Generate $\mathcal{M}_3:$ 3 randomly selected sensors from $\mathcal{M}_\text{max}$
\STATE $\mathcal{I}_{\mathcal{M}_{3}}^{-1} = \big(\sum_{m\in\mathcal{M}_3}\epsilon_m\boldsymbol{u}_m\boldsymbol{u}_m^\top\big)^{-1}$
\STATE $i=4$
\WHILE {$i \leq M$}
    \STATE Find $m_i^\star$ from solving~\eqref{eq:dynamic_trace} via exhaustive search
    \STATE $\mathcal{M}_i = \mathcal{M}_{i-1} \cup \{m_i^\star\}$
    \STATE $\mathcal{I}_{\mathcal{M}_{i}}^{-1} = \mathcal{I}_{\mathcal{M}_{i-1}}^{-1} - \frac{\epsilon_{m_i^\star}\mathcal{I}_{\mathcal{M}_{i-1}}^{-1}\boldsymbol{u}_{m_i^\star}\boldsymbol{u}_{m_i^\star}^\top\mathcal{I}_{\mathcal{M}_{i-1}}^{-1}}{1+\epsilon_{m_i^\star}\boldsymbol{u}_{m_i^\star}^\top\mathcal{I}_{\mathcal{M}_{i-1}}^{-1}\boldsymbol{u}_{m_i^\star}}$
    \STATE $i = i + 1$
\ENDWHILE
\RETURN $\mathcal{M}_M$ 
\end{algorithmic}
}
\end{algorithm}

\vspace{-5mm}
\subsection{Sensor Selection based on the F-CRLB}\label{ssec:dynamic_FCRLB}

According to the F-CRLB in~\eqref{eq:CRLB_derived}, we can compute $\sigma^2_\mathcal{M}(\boldsymbol{\ell})$ by separately evaluating $N_{\mathcal{M}}(\boldsymbol{\ell})$ and $D_{\mathcal{M}}(\boldsymbol{\ell})$, which are the sums of ${M}\choose{2}$ pairs and ${M}\choose{3}$ triplets, respectively, generated out of $\mathcal{M}$.
We can exploit this pattern for our greedy sensor selection strategy.
Suppose an algorithm is in selection step $i$ and attempts to select a single sensor from the remaining set $\mathcal{M}_\text{max}\backslash\mathcal{M}_{i-1}$ using the F-CRLB as its metric.
If we define $A_{ab}=\epsilon_{a}\epsilon_{b}\sin^2\theta_{ab}$ and $V_{abc}=\epsilon_{a}\epsilon_{b}\epsilon_{c}\sin^2{\theta_{ab}}\sin^2{\phi_{abc}}$ for $a,b,c\in\mathcal{M}_\text{max}$, the optimization problem can be written as
\begin{equation}
    m_i^\star=\argmin_{m\in\mathcal{M}_\text{max}\backslash \mathcal{M}_{i-1}}
    \frac{N_{\mathcal{M}_{i-1}}(\boldsymbol{\ell})+A^\mathsf{sum}_{\mathcal{M}_{i-1},m}(\boldsymbol{\ell})} {D_{\mathcal{M}_{i-1}}(\boldsymbol{\ell})+V^\mathsf{sum}_{\mathcal{M}_{i-1},m}(\boldsymbol{\ell})},
    \label{eq:dynamic_fraction}
\end{equation}
where $A^\mathsf{sum}_{\mathcal{M}_{i-1},m}(\boldsymbol{\ell})=\sum_{m_1\in\mathcal{M}_{i-1}}A_{m_1m}$ and $V^\mathsf{sum}_{\mathcal{M}_{i-1},m}(\boldsymbol{\ell})=\sum_{m_1\in\mathcal{M}_{i-1}}\hspace{-1mm}\sum_{\substack{m_2\in\mathcal{M}_{i-1}\\m_2>m_1}}V_{m_1m_2m}$.
Since both $N_{\mathcal{M}_{i-1}}(\boldsymbol{\ell})$ and $D_{\mathcal{M}_{i-1}}(\boldsymbol{\ell})$ are available from the previous selection step $i-1$, instead of computing ${{i}\choose{2}}+{{i}\choose{3}}$ summation terms, only $(i-1)+{{i-1}\choose{2}}$ terms are required to evaluate the objective function of~\eqref{eq:dynamic_fraction} for each value of $m$.

To further reduce the complexity of this sensor selection strategy, we introduce the following proposition.
\begin{proposition}
    Let $\lambda_{\mathcal{M},1}$, $\lambda_{\mathcal{M},2}$, and $\lambda_{\mathcal{M},3}$ denote the three eigenvalues of $\mathcal{I}_\mathcal{M}(\boldsymbol{\ell})$. The CRLB $\sigma^2_\mathcal{M}(\boldsymbol{\ell})$ is half the surface to volume ratio (SVR) of a rectangular prism with dimension $\lambda_{\mathcal{M},1} \times \lambda_{\mathcal{M},2} \times \lambda_{\mathcal{M},3}$.
    \label{prop:SVR}
\end{proposition}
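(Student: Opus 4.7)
The plan is to evaluate both sides of the claimed identity in terms of the eigenvalues $\lambda_{\mathcal{M},1}, \lambda_{\mathcal{M},2}, \lambda_{\mathcal{M},3}$ of $\mathcal{I}_{\mathcal{M}}(\boldsymbol{\ell})$ and verify they coincide. The right-hand side is purely geometric: a rectangular prism with side lengths $\lambda_{\mathcal{M},1} \times \lambda_{\mathcal{M},2} \times \lambda_{\mathcal{M},3}$ has surface area $2(\lambda_{\mathcal{M},1}\lambda_{\mathcal{M},2} + \lambda_{\mathcal{M},1}\lambda_{\mathcal{M},3} + \lambda_{\mathcal{M},2}\lambda_{\mathcal{M},3})$ and volume $\lambda_{\mathcal{M},1}\lambda_{\mathcal{M},2}\lambda_{\mathcal{M},3}$, so half its SVR simplifies to $\lambda_{\mathcal{M},1}^{-1} + \lambda_{\mathcal{M},2}^{-1} + \lambda_{\mathcal{M},3}^{-1}$. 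I would state this as the first step, since it is essentially a one-line algebraic reduction.

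Next, I would handle the left-hand side using the spectral structure of the FIM. From~\eqref{eq:FIM_hybrid_4}, $\mathcal{I}_{\mathcal{M}}(\boldsymbol{\ell})$ is a sum of symmetric positive semidefinite rank-one matrices, hence symmetric. Whenever the selected sensors are not co-planar (so that the denominator $D_\mathcal{M}(\boldsymbol{\ell})$ in the F-CRLB is non-zero), $\mathcal{I}_{\mathcal{M}}(\boldsymbol{\ell})$ is positive definite and admits an eigendecomposition $\mathcal{I}_{\mathcal{M}}(\boldsymbol{\ell}) = \mathbf{U}\boldsymbol{\Lambda}\mathbf{U}^\top$ with $\boldsymbol{\Lambda} = \mathrm{diag}(\lambda_{\mathcal{M},1}, \lambda_{\mathcal{M},2}, \lambda_{\mathcal{M},3})$ and $\mathbf{U}\mathbf{U}^\top = \mathbf{I}$. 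Its inverse is $\mathcal{I}_{\mathcal{M}}^{-1}(\boldsymbol{\ell}) = \mathbf{U}\,\mathrm{diag}(\lambda_{\mathcal{M},1}^{-1}, \lambda_{\mathcal{M},2}^{-1}, \lambda_{\mathcal{M},3}^{-1})\,\mathbf{U}^\top$, so invariance of the trace under similarity yields $\sigma^2_{\mathcal{M}}(\boldsymbol{\ell}) = \mathrm{tr}\{\mathcal{I}_{\mathcal{M}}^{-1}(\boldsymbol{\ell})\} = \lambda_{\mathcal{M},1}^{-1} + \lambda_{\mathcal{M},2}^{-1} + \lambda_{\mathcal{M},3}^{-1}$.

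Putting the two computations side by side completes the proof: both sides equal $\sum_{k=1}^{3} \lambda_{\mathcal{M},k}^{-1}$. There is really no hard step here; the entire argument is a combination of an elementary geometric simplification and the standard fact that the trace of a symmetric matrix is the sum of its eigenvalues while inversion inverts the spectrum. The only point that deserves a brief remark is the need for $\mathcal{I}_{\mathcal{M}}(\boldsymbol{\ell})$ to be invertible, which the discussion following the F-CRLB already establishes as equivalent to the selected sensors being non-co-planar, i.e., exactly the regime in which $\sigma^2_{\mathcal{M}}(\boldsymbol{\ell})$ itself is finite and the SVR interpretation is well defined.
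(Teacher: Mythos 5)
Your proof is correct and is essentially the paper's argument in a more streamlined order: both reduce to the identity $\mathrm{tr}\{\mathcal{I}_{\mathcal{M}}^{-1}(\boldsymbol{\ell})\}=\sum_{d=1}^{3}\lambda_{\mathcal{M},d}^{-1}=\bigl(\sum_{d}\prod_{j\neq d}\lambda_{\mathcal{M},j}\bigr)/\prod_{d}\lambda_{\mathcal{M},d}$, with the numerator and denominator identified as half the surface area and the volume of the prism. The paper merely routes the same computation through the adjugate--determinant form of~\eqref{eq:CRLB_frac} so as to tie the two quantities to $N_{\mathcal{M}}(\boldsymbol{\ell})$ and $D_{\mathcal{M}}(\boldsymbol{\ell})$ in the F-CRLB, and your remark on invertibility (non-co-planar sensors) is a point the paper leaves implicit.
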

The proof for Proposition~\ref{prop:SVR} is given in Appendix~\ref{appendix:SVR}.
The key takeaway is how the CRLB can be characterized by the geometry of the eigenvalues of $\mathcal{I}_\mathcal{M}(\boldsymbol{\ell})$.
Particularly, the F-CRLB in~\eqref{eq:CRLB_derived} can be paired with the SVR such that $N_{\mathcal{M}}(\boldsymbol{\ell})$ and $D_{\mathcal{M}}(\boldsymbol{\ell})$ represent the surface area and volume, respectively.
In the following, we present a different method for minimizing the CRLB based on this perspective.

According to Proposition~\ref{prop:SVR}, the solution to problem $\boldsymbol{\mathcal{P}}_\text{D}$ (i.e., $\mathcal{M}_\text{D}^\star$) should minimize the SVR of the rectangular prism defined by the eigenvalues of $\mathcal{I}_\mathcal{M}(\boldsymbol{\ell})$.
In other words, our dynamic sensor selection problem is equivalent to finding the set of sensors that minimizes the SVR of the resulting rectangular prism.
To decrease the SVR, we desire a rectangular prism with (i) larger size (i.e., greater eigenvalues) and (ii) more cubical shape (i.e., a smaller condition number).
Note that the range in which our eigenvalues can vary is fundamentally limited since we only consider $M_\text{max}$ sensors placed within the confined space.
Therefore, we can focus on the first condition (i.e., the size) to minimize the SVR. Note that a rectangular prism with larger volume tends to yield a lower SVR: a higher volume also implies a larger surface area, but volume has a higher rate of change for a unit increase in dimension.

This discussion indicates that we can conduct our sensor selection by relying on either $N_{\mathcal{M}}(\boldsymbol{\ell})$ or $D_{\mathcal{M}}(\boldsymbol{\ell})$.
We thus propose $A^\mathsf{sum}_{\mathcal{M}_{i-1},m}(\boldsymbol{\ell})$ and $V^\mathsf{sum}_{\mathcal{M}_{i-1},m}(\boldsymbol{\ell})$ as metrics that will make our greedy sensor selection more computationally efficient.
For each greedy selection step $i$, the sensor to be selected is determined by solving either of the following optimization problems:
\begin{equation}
    m_i^\star=\argmax_{m\in\mathcal{M}_\text{max}\backslash \mathcal{M}_{i-1}} A^\mathsf{sum}_{\mathcal{M}_{i-1},m}(\boldsymbol{\ell})
    \label{eq:dynamic_numerator}
\end{equation}
or
\begin{equation}
    m_i^\star=\argmax_{m\in\mathcal{M}_\text{max}\backslash \mathcal{M}_{i-1}} V^\mathsf{sum}_{\mathcal{M}_{i-1},m}(\boldsymbol{\ell}).
    \label{eq:dynamic_denominator}
\end{equation}

Using~\eqref{eq:dynamic_numerator} for the selection criterion has lower computational complexity than~\eqref{eq:dynamic_denominator}.
Also, with~\eqref{eq:dynamic_numerator}, one can start conducting the greedy selection as early as $i=2$, whereas with~\eqref{eq:dynamic_denominator}, sensors must be selected heuristically until $i = 3$ due to the way in which the summations are formed.
However, solely relying on~\eqref{eq:dynamic_numerator} can result in a low SVR when selected sensors are in co-planar arrangement.
Using~\eqref{eq:dynamic_denominator} can prevent this since $V^\mathsf{sum}_{\mathcal{M}_{i-1},m}(\boldsymbol{\ell})$  represents the increase in volume of our rectangular prism.
We thus aim to exploit both metrics for complexity and stability advantages for CRLB minimization.

Our resulting greedy sensor selection algorithm is summarized in Algorithm~\ref{alg:dynamic_fraction}.
The first sensor (i.e., $\mathcal{M}_1$) is randomly selected from $\mathcal{M}_\text{max}$, and for the rest of selection steps except for $i=3$, in which we use~\eqref{eq:dynamic_denominator}, sensors are selected based on~\eqref{eq:dynamic_numerator} using exhaustive search. 
Note that we use~\eqref{eq:dynamic_denominator} when $i=3$ to prevent the algorithm from selecting co-planar sensors.
\vspace{-3mm}
\begin{algorithm}[ht]
{\small
\caption{Greedy Sensor Selection based on~\eqref{eq:dynamic_numerator} and~\eqref{eq:dynamic_denominator}}
\label{alg:dynamic_fraction}
\begin{algorithmic}
\REQUIRE $\epsilon_m$ and $\boldsymbol{u}_m$, $\forall m \in \mathcal{M}_\text{max}$
\STATE Generate $\mathcal{M}_1:$ a randomly selected sensor from $\mathcal{M}_\text{max}$
\STATE $i = 2$
\WHILE {$i \leq M$}
    \IF {$i = 3$}
    \STATE Find $m_i^\star$ from solving~\eqref{eq:dynamic_denominator} via exhaustive search
    \ELSE
    \STATE Find $m_i^\star$ from solving~\eqref{eq:dynamic_numerator} via exhaustive search
    \ENDIF
    \STATE $\mathcal{M}_i = \mathcal{M}_{i-1} \cup \{m_i^\star\}$
    \STATE $i = i + 1$
\ENDWHILE
\RETURN $\mathcal{M}_M$ 
\end{algorithmic}
}
\end{algorithm}

\vspace{-6mm}
\subsection{Computational Complexity Analysis} \label{ssec:dynamic_complexity}

For each algorithm, we break down the complexity analysis into two separate parts: (i) the number of arithmetic operations required to compute the expressions, independent of $M$ and $M_\text{max}$, that are repeatedly evaluated by the algorithm, and (ii) the total number of times the algorithm evaluates each of these expressions to complete the selection.
The results are summarized in Table~\ref{tb:complexity} and discussed in the following.

For every selection step, Algorithm~\ref{alg:dynamic_trace} repeatedly computes~\eqref{eq:marginal_CRLB} with 67 arithmetic operations and finds the sensor that satisfies~\eqref{eq:dynamic_trace}.
To select $M$ out of $M_\text{max}$ sensors,~\eqref{eq:marginal_CRLB} is computed $\sum^{M}_{i=4}(M_\text{max}-i+1)$ times by the algorithm.
As a result, Algorithm~\ref{alg:dynamic_trace} computes $\sum^{M}_{i=4}67(M_\text{max}-i+1)$ arithmetic operations to select $M$ out of $M_\text{max}$ sensors. 

For Algorithm~\ref{alg:dynamic_fraction}, which relies on~\eqref{eq:dynamic_numerator} and~\eqref{eq:dynamic_denominator} to conduct greedy selection, $\sum^{M}_{i=2}(M_\text{max}-i+1)(i-1)-2(M_\text{max}-2)$ computations of $A_{m_1m}$ and $(M_\text{max}-2)$ computations of $V_{m_1m_2m}$ are required to complete the selection of $M$ sensors.
Since $A_{m_1m}$ and $V_{m_1m_2m}$ require 3 and 6 arithmetic operations, respectively, we see that $3\big[\sum^{M}_{i=2}(M_\text{max}-i+1)(i-1)-2(M_\text{max}-2)\big]+6(M_\text{max}-2)=\sum^{M}_{i=2}3(M_\text{max}-i+1)(i-1)$ arithmetic operations are required by the algorithm.

\begin{table*}[t]
\captionsetup{justification=centering, labelsep=newline}
\centering
\caption{Complexity comparison of the two proposed dynamic sensor selection algorithms. While Algorithm~\ref{alg:dynamic_trace} has lower asymptotic complexity, \\ it requires more arithmetic operations for small values of $M_\text{max}$.}
\label{tb:complexity}
\begin{tabular}{|c|c|c|c|c|c|} 
    \hline
    Algorithm & Computation to Repeat & Repetitions & Total Arithmetic Operations & Time Complexity \\
    \hline
    \ref{alg:dynamic_trace} & \eqref{eq:marginal_CRLB} & $\sum^{M}_{i=4}(M_\text{max}\hspace{-0.5mm}-\hspace{-0.5mm}i\hspace{-0.5mm}+\hspace{-0.5mm}1)$ & $\sum^{M}_{i=4}43(M_\text{max}\hspace{-0.5mm}-\hspace{-0.5mm}i\hspace{-0.5mm}+\hspace{-0.5mm}1)$ & $\mathcal{O}\big(M_\text{max}^2\big)$ \\ 
    \hline
    \multirow{2}{*}{\ref{alg:dynamic_fraction}} & $\;\;\;\;A_{m_1m}$ for $i\neq 3$ & $\sum^{M}_{i=2}(M_\text{max}\hspace{-0.5mm}-\hspace{-0.5mm}i\hspace{-0.5mm}+\hspace{-0.5mm}1)(i\hspace{-0.5mm}-\hspace{-0.5mm}1)\hspace{-0.5mm}-\hspace{-0.5mm}2(M_\text{max}\hspace{-0.5mm}-\hspace{-0.5mm}2)$ & \multirow{2}{*}{$\sum^{M}_{i=2}3(M_\text{max}\hspace{-0.5mm}-\hspace{-0.5mm}i\hspace{-0.5mm}+\hspace{-0.5mm}1)(i\hspace{-0.5mm}-\hspace{-0.5mm}1)$} & \multirow{2}{*}{$\mathcal{O}\big(M_\text{max}^3\big)$} \\
    & $V_{m_1m_2m}$ for $i = 3$ & for $A_{m_1m}$ and $(M_\text{max}\hspace{-0.5mm}-\hspace{-0.5mm}2)$ for $V_{m_1m_2m}$ & & \\
    \hline
\end{tabular}
\vspace{-2mm}
\end{table*}

For comparison, we introduce the best option filling (BOF) algorithm~\cite{Zhao19}, where the complete CRLB in~\eqref{eq:CRLB_hybrid_corr_1} is evaluated to conduct greedy selection.
In BOF algorithm, the first three sensors are randomly selected, and
for each selection step $4 \leq i \leq M$, the algorithm computes $\epsilon_m\boldsymbol{u}_m\boldsymbol{u}_m^\top$ in~\eqref{eq:CRLB_hybrid_corr_1} for $i$ times and takes the inverse of their sum to evaluate the CRLB.
This results in a total of $\sum^{M}_{i=4}(12+N_\text{inv})(M_\text{max}-i+1)i$ arithmetic operations from the BOF algorithm, where $N_\text{inv}$ is the number of arithmetic operations involved in the matrix inverse.
The complete process of BOF algorithm is summarized in Algorithm~\ref{alg:dynamic_BOF}.

\begin{algorithm}[ht]
{\small
\caption{Best option filling (BOF) sensor selection~\cite{Zhao19}}
\label{alg:dynamic_BOF}
\begin{algorithmic}
\REQUIRE $\epsilon_m$ and $\boldsymbol{u}_m$, $\forall m \in \mathcal{M}_\text{max}$
\STATE Generate $\mathcal{M}_3:$ 3 randomly selected sensors from $\mathcal{M}_\text{max}$
\STATE $i=4$
\WHILE {$i \leq M$}
    \STATE Find $m_i^\star=\argmin_{m\in\mathcal{M}_\text{max}\backslash \mathcal{M}_{i-1}} \sigma^2_{\mathcal{M}_{i-1}\cup\{m\}}(\boldsymbol{\ell})$ using the expression of $\sigma^2_{\mathcal{M}}(\boldsymbol{\ell})$ in (22) via exhaustive search
    \STATE $\mathcal{M}_i = \mathcal{M}_{i-1} \cup \{m_i^\star\}$
    \STATE $i = i + 1$
\ENDWHILE
\RETURN $\mathcal{M}_M$ 
\end{algorithmic}
}
\end{algorithm}

Obtaining the polynomial expression for each algorithm's complexity, the leading terms are found to be $67MM_\text{max}-33.5M^2$ and $1.5M^2M_\text{max}-M^3$ for Algorithms~\ref{alg:dynamic_trace} and~\ref{alg:dynamic_fraction}, respectively.
We see that Algorithm~\ref{alg:dynamic_fraction} has a higher degree but much smaller coefficients than Algorithm~\ref{alg:dynamic_trace}.
This implies that Algorithm~\ref{alg:dynamic_fraction} maintains lower complexity for small values of $M_\text{max}$ but surpasses Algorithm~\ref{alg:dynamic_trace} as $M_\text{max}$ increases.
With $M = M_\text{max}$, we find the asymptotic complexities become $\mathcal{O}(M_\text{max}^2)$ and $\mathcal{O}(M_\text{max}^3)$ for Algorithm~\ref{alg:dynamic_trace} and \ref{alg:dynamic_fraction}, respectively.
Note that the asymptotic complexity of BOF algorithm is $\mathcal{O}(M_\text{max}^3)$, which is same as Algorithm~\ref{alg:dynamic_fraction}.
Considering that sensor selection via SDR and exhaustive search have asymptotic complexities of $\mathcal{O}(M_\text{max}^{4.5})$ and $\mathcal{O}(M_\text{max}!)$, respectively, both algorithms we propose show the computational advantage for large-scale systems having a large number of sensors.

\vspace{-2mm}
\section{Robust Sensor Selection Strategies}\label{sec:robust}

In this section, we turn to the robust sensor selection problem $\boldsymbol{\mathcal{P}}_\text{R}$ which aims to select a group of sensors such that the worst-case CRLB is minimized.
After discussing the unreliability issue stemming from directly applying the conventional convex relaxation technique (Section~\ref{ssec:robust_convex}), we present three new sensor selection strategies (Section~\ref{ssec:robust_algorithm1},\ref{ssec:robust_algorithm2},\ref{ssec:robust_algorithm3}) that provide reliable solutions to the robust sensor selection problem.

\vspace{-3mm}
\subsection{Sensor Selection via Convex Relaxation}\label{ssec:robust_convex}

We first adopt convex relaxation~\cite{Joshi09}.
Using~\eqref{eq:CRLB_hybrid_corr_2} and a binary selection vector $\boldsymbol{b}=[b_1,b_2,\ldots,b_{M_\text{max}}]^\top$, where $b_m\in\{0,1\}$, problem~$\boldsymbol{\mathcal{P}}_\text{R}$ can be rewritten as
\begin{align}
    \boldsymbol{b}_\text{R}^\star=&\argmin_{\boldsymbol{b}}\max_{g\in \{1,\ldots,G\}}\;\textrm{tr}\left\{(\mathbf{U}_g\mathbf{E}_g\mathbf{B}\mathbf{U}_g^\top)^{-1}\right\} \label{eq:robust_minmax_matrix_prob} \\
    \text{s.t.}\;\;&\boldsymbol{1}^\top\boldsymbol{b}=M, \label{eq:robust_minmax_matrix_const1} \\
    & b_m \in \{0,1\}\;\;\text{for}\;\;m=1,\ldots,M_\text{max}, \label{eq:robust_minmax_matrix_const2}
\end{align}
where $\mathbf{B}=\text{diag}(\boldsymbol{b})$, $\mathbf{U}_g = [\boldsymbol{u}_{1g},\boldsymbol{u}_{2g},\ldots,\boldsymbol{u}_{M_\text{max}g}]$, and $\mathbf{E}_g=\text{diag}([\epsilon_{1g},\epsilon_{2g},\ldots,\epsilon_{M_\text{max}g}]^\top)$ for each location $g$. 
Relaxing the binary constraint~\eqref{eq:robust_minmax_matrix_const2} to a continuous selection vector $\boldsymbol{c}=[c_1,\ldots,c_{M_\text{max}}]^\top$, $0\leq c_m \leq 1$, we rewrite the problem as
\begin{align}
    (\widetilde{\boldsymbol{\mathcal{P}}}_\text{R}):~
    \boldsymbol{c}_\text{R}^\star=&\argmin_{\boldsymbol{c}}\max_{g \in \{1,\ldots,G\}}\;\textrm{tr}\left\{(\mathbf{U}_g\mathbf{E}_g\mathbf{C}\mathbf{U}_g^\top)^{-1}\right\} \label{eq:robust_minmax_relaxed_prob} \\
    \text{s.t.}\;\;&\boldsymbol{1}^\top\boldsymbol{c}=M, \label{eq:robust_minmax_relaxed_const1} \\
    & 0 \leq c_m \leq 1 \;\; \text{for} \;\; m=1,\ldots,M_\text{max}, \label{eq:robust_minmax_relaxed_const2}
\end{align}
where $\mathbf{C}=\text{diag}(\boldsymbol{c})$.
Still, the above problem cannot be solved directly using convex optimization because of the \emph{max} operation over the target location $\boldsymbol{\ell}^\mathsf{t}_g$ for $g \in \{1,\ldots,G\}$.

We subsequently transform the min-max problem $\widetilde{\boldsymbol{\mathcal{P}}}_\text{R}$ into an equivalent convex minimization problem by introducing a threshold variable $\gamma$ to represent the maximum CRLB allowed out of the $G$ locations in $\mathcal{L}$.
The problem is formulated as
\begin{align}
    (\widehat{\boldsymbol{\mathcal{P}}}_\text{R}):~&\boldsymbol{c}_\text{R}^\star=\argmin_{\boldsymbol{c},\gamma}\;\gamma \label{eq:robust_min_prob} \\
    \text{s.t.}\;\;&\boldsymbol{1}^\top\boldsymbol{c}=M, \label{eq:robust_min_prob_const1} \\
    &0 \leq c_m \leq 1 \;\; \text{for} \;\; m=1,\ldots,M_\text{max}, \label{eq:robust_min_prob_const2} \\ 
    &\textrm{tr}\left\{(\mathbf{U}_g\mathbf{E}_g\mathbf{C}\mathbf{U}_g^\top)^{-1}\right\}\leq\gamma \;\; \text{for} \;\; g = 1,\ldots,G. \label{eq:robust_min_prob_const3}
\end{align}
The formulation of $\widehat{\boldsymbol{\mathcal{P}}}_\text{R}$ allows us to use convex optimization techniques, e.g., interior-point methods~\cite{Nesterov14}, to find the solution.
However, in contrast to the convex-relaxed dynamic sensor selection problem~\cite{Zhao19}, $G$ additional constraints are imposed in~\eqref{eq:robust_min_prob_const3}, which adds additional complexity for solving~$\widehat{\boldsymbol{\mathcal{P}}}_\text{R}$.

The convex relaxation approach in~$\widehat{\boldsymbol{\mathcal{P}}}_\text{R}$ requires an extra step of determining the binary selection vector $\widehat{\boldsymbol{b}}_\text{R}^\star$ whenever the solution of $\widehat{\boldsymbol{\mathcal{P}}}_\text{R}$ assigns non-zero values to more than $M$ sensors, i.e., $\|\boldsymbol{c}_\text{R}^\star\|_0 > M$.
A simple heuristic~\cite{Joshi09} to determine $\widehat{\boldsymbol{b}}_\text{R}^\star$ would be selecting the $M$ sensors with the largest $c_{m}$ values.
However, this can result in a poor selection result if any of the essential sensors are discarded.
In particular, some sensors assigned with lower $c_m$ could be as important as the sensors with higher $c_m$ for avoiding ill-conditioned FIMs, which lead to extremely large CRLBs. 

\begin{figure}[!t]
    \centering
    \includegraphics[width=1\linewidth]{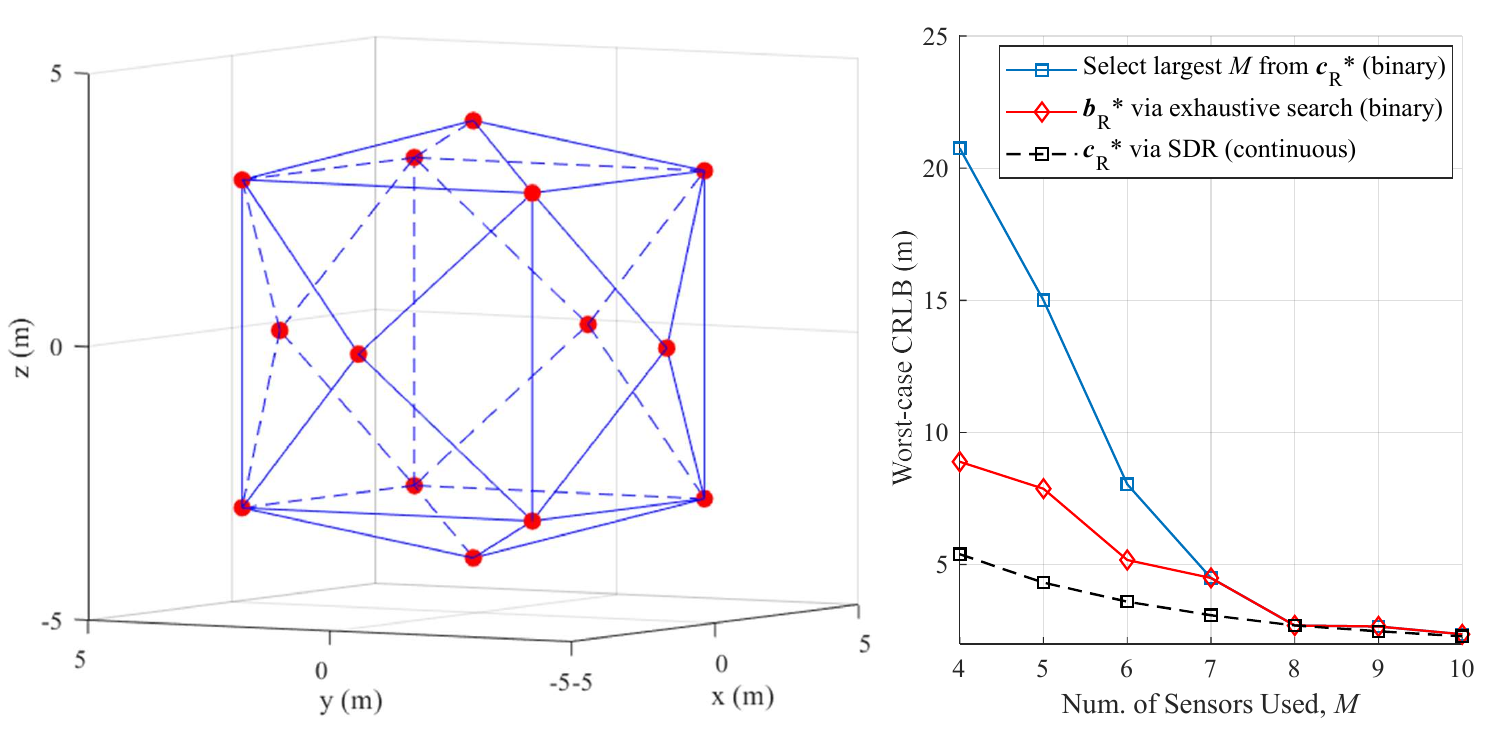}
    \caption{3D visual representation of 14 sensors in a prism shape with $d_\text{s}=4$ (left) and robust sensor selection performance comparison between binary and continuous selection cases (right).}
    \label{fig:Prism_and_Plot}
    \vspace{-2mm}
\end{figure}

To illustrate this phenomenon, we directly compare the CRLBs obtained by (i) $\boldsymbol{b}_\text{R}^\star$ found via exhaustive search and (ii) $\widehat{\boldsymbol{b}}_\text{R}^\star$ found by picking the largest $M$ sensors from $\boldsymbol{c}_\text{R}^\star$ in~\eqref{eq:robust_min_prob}.
We consider a sensor selection scenario with $M_\text{max}=14$ sensors arranged as shown in the left plot of Fig.~\ref{fig:Prism_and_Plot}.
The target location set $\mathcal{L}$ is generated to cover the space defined by $d_\text{s}=4$ m and $d_\text{max}=14$ m.
For each sensor $m$, we assume the TOA and RSS measurement noises with variances of $\sigma^2_{\text{T},m}$  and $\sigma^2_{\text{R},m}$, respectively, the values of which are selected the same as those described in Section~\ref{ssec:simulations_setup}.
The right plot of Fig.~\ref{fig:Prism_and_Plot} shows the result obtained over different values of $M$.
A significant performance gap exists between $\boldsymbol{b}_\text{R}^\star$ (red) and $\widehat{\boldsymbol{b}}_\text{R}^\star$ (blue), which indicates the latter approach is discarding important sensors.
The solution $\boldsymbol{c}^\star_\text{R}$ itself (black) indeed provides the lowest CRLB because, without the binary constraint, $\boldsymbol{c}_\text{R}^\star$ is allowed to be non-sparse and select as many sensors as needed.
However, we see that the effectiveness of $\boldsymbol{c}^\star_\text{R}$ does not guarantee that the largest $M$ sensors chosen based on it will be close to optimal.

Such difference between these two binary solutions implies that relying on the convex relaxation technique and naively manipulating $\boldsymbol{c}_\text{R}^\star$ is a poor strategy for robust sensor selection.
In the following, we propose three distinct approaches to conduct robust sensor selection.

\vspace{-3mm}
\subsection{Sensor Selection via Iterative Convex Optimization}\label{ssec:robust_algorithm1}

Our first strategy, \emph{iterative convex optimization} (ICO), is a greedy selection based on consecutive iterations of convex optimization.
In each optimization iteration, a sensor that is found to be the most beneficial for minimizing the worst-case CRLB is selected.
Defining $\mathcal{M}_\text{sel}$ to be the set of sensors selected by the algorithm, initially $\mathcal{M}_\text{sel}=\varnothing$, for every iteration step, the algorithm solves the following convex optimization problem:
\begin{align}
    (\widehat{\boldsymbol{\mathcal{P}}}_\text{R1}):~&\boldsymbol{c}_\text{R1}^\star=\min_{\boldsymbol{c},\gamma}\; \gamma \label{eq:robust_min_iter_prob} \\
    \text{s.t.}\;\;& \boldsymbol{1}^\top\boldsymbol{c}=M, \label{eq:robust_min_iter_const1} \\
    & 0 \leq c_m \leq 1 \;\; \text{for} \;\; m\in\mathcal{M}_\text{max} \backslash \mathcal{M}_\text{sel}, \label{eq:robust_min_iter_const2} \\
    & c_m = 1 \;\; \text{for} \;\; m\in\mathcal{M}_\text{sel}, \label{eq:robust_min_iter_const3} \\
    &\textrm{tr}\left\{(\mathbf{U}_g\mathbf{E}_g\mathbf{C}\mathbf{U}_g^\top)^{-1}\right\}\leq\gamma \;\; \text{for} \;\; g = 1,\ldots,G. \label{eq:robust_min_iter_const4}
\end{align}
In each step, the optimization is performed over the $M_\text{max}-\vert \mathcal{M}_\text{sel}\vert $ sensors that have not yet been selected.
Among the sensors in $\mathcal{M}_\text{max} \backslash \mathcal{M}_\text{sel}$, the one with the largest $c_m$ is added to $\mathcal{M}_\text{sel}$.
The iterations continue until all $M$ sensors are selected.
The overall procedure is summarized in Algorithm~\ref{alg:iter-CVX}.

\begin{algorithm}[ht]
{\small
\caption{Robust Sensor Selection based on ICO}
\label{alg:iter-CVX}
\begin{algorithmic}
\REQUIRE $\epsilon_{mg}$ and $\boldsymbol{u}_{mg}$, $\forall m \in \mathcal{M}_\text{max}$ and $\forall g \in \mathcal{L}$
\STATE Initialize $\mathcal{M}_\text{sel} = \varnothing$, $\boldsymbol{b} = \boldsymbol{0}$
\STATE $i = 1$
\WHILE {$i \leq M$}
    \STATE Acquire $\boldsymbol{c}_\text{R1}^\star$ from solving $\widehat{\boldsymbol{\mathcal{P}}}_\text{R1}$ via convex optimization
    \STATE $m^\star = \argmax_{m\in\mathcal{M}_\text{max} \setminus \mathcal{M}_\text{sel}} \boldsymbol{c}_\text{R1}^\star[m]$
    \STATE $b_{m^\star} = 1$
    \STATE $\mathcal{M}_\text{sel} = \mathcal{M}_\text{sel} \cup \{m^\star\}$
    \STATE $i = i + 1$
\ENDWHILE
\RETURN $\boldsymbol{b}$ 
\end{algorithmic}
}
\end{algorithm}

This greedy selection strategy promotes robustness because each sensor is sequentially and individually selected such that $\mathcal{M}_\text{sel}$ focuses on maximizing performance without relying on any unselected sensors.
In particular, the convex optimization of each selection step reflects the selections from earlier steps via~\eqref{eq:robust_min_iter_const3}.
Conducting greedy selection based on these per-step solutions provides us with a stable and yet effective sensor selection in the end.
Note that the mechanism of our strategy is similar to matching pursuit~\cite{Mallat93}, a widely adopted solution for NP-hard sparse approximation problems, in the sense that the algorithm primarily looks for the sensors that have the biggest impact on the worst-case CRLB minimization.
Nevertheless, the suboptimality of greedy selection can limit the effectiveness of Algorithm~\ref{alg:iter-CVX}. In the following strategy, we propose a strategy that does not employ the greedy selection framework.

\vspace{-3mm}
\subsection{Sensor Selection via Difference of Convex Functions Programming}\label{ssec:robust_algorithm2}

Instead of finding the entries of $\boldsymbol{b}_\text{R}^\star$ one at a time, here we formulate an optimization problem such that elements of $\boldsymbol{c}_\text{R}^\star$ are forced to be binary.
We can achieve this in theory by adding the constraint $\boldsymbol{1}^\top\boldsymbol{c} - \boldsymbol{c}^\top\boldsymbol{c} \leq 0$ to $\widehat{\boldsymbol{\mathcal{P}}}_\text{R}$, which, together with~\eqref{eq:robust_min_prob_const2}, constrains $c_m\in\{0,1\}$.
However, this would make our optimization problem non-convex.
We thus impose this by adding a penalty term to the objective function for penalizing non-binary solutions.
We formally reformulate our robust sensor selection problem as
\begin{align}
    (\widehat{\boldsymbol{\mathcal{P}}}_\text{R2}):~&\boldsymbol{c}_\text{R2}^\star=\argmin_{\boldsymbol{c},\gamma}\;\gamma + \lambda(\boldsymbol{1}^\top\boldsymbol{c} - \boldsymbol{c}^\top\boldsymbol{c}) \label{eq:robust_min_DC_prob} \\
    \text{s.t.}\;\;&\boldsymbol{1}^\top\boldsymbol{c}=M, \label{eq:robust_min_DC_const1} \\
    &0 \leq c_m \leq 1 \;\; \text{for} \;\; m=1,\ldots,M_\text{max}, \label{eq:robust_min_DC_const2} \\ 
    &\textrm{tr}\left\{(\mathbf{U}_g\mathbf{E}_g\mathbf{C}\mathbf{U}_g^\top)^{-1}\right\}\leq\gamma \;\; \text{for} \;\; g = 1,\ldots,G, \label{eq:robust_min_DC_const3}
\end{align}
where $\lambda \geq 0$ is the penalty factor.
Within the range set by~\eqref{eq:robust_min_DC_const2}, each $c_m$ would take binary entries to have the quadratic penalty term added to the objective function decreased.
However, introducing this concave penalty term still makes our problem non-convex.
To circumvent this issue, we exploit an optimization technique called \emph{difference of convex functions programming} (DCP)~\cite{Lipp16} to solve $\widehat{\boldsymbol{\mathcal{P}}}_\text{R2}$.

For DCP, we write the objective function in~\eqref{eq:robust_min_DC_prob} as
\begin{equation}
    f(\boldsymbol{c}) - g(\boldsymbol{c}), \label{eq:DC_form}
\end{equation}
where $f(\boldsymbol{c})= \gamma + \lambda\boldsymbol{1}^\top\boldsymbol{c}$ and $g(\boldsymbol{c})=\lambda\boldsymbol{c}^\top\boldsymbol{c}$.
Note that $f(\boldsymbol{c})$ and $g(\boldsymbol{c})$ are both convex functions with respect to the optimization variables.
Our objective function is thus the difference of two convex functions of $\boldsymbol{c}$, as DCP requires.
Then, to make $g(\boldsymbol{c})$ affine, we apply a linear approximation to $g(\boldsymbol{c})$~\cite{Lipp16} as
\begin{align}
    \widetilde{g}(\boldsymbol{c};\boldsymbol{c}_k) &= g(\boldsymbol{c}_k)+\nabla g(\boldsymbol{c}_k)^\top(\boldsymbol{c}-\boldsymbol{c}_k) \label{eq:DC_approximation1} \\
    & = \lambda\boldsymbol{c}_k^\top\boldsymbol{c}_k + 2\lambda\boldsymbol{c}_k^\top(\boldsymbol{c}-\boldsymbol{c}_k), \label{eq:DC_approximation2}
\end{align}
where $\boldsymbol{c}_k$ is a feasible point to $\widehat{\boldsymbol{\mathcal{P}}}_\text{R2}$.
By the first order condition of a convex function $g(\boldsymbol{c})$ at $\boldsymbol{c}$, i.e., $g(\boldsymbol{c}) \geq \tilde g(\boldsymbol{c}; \boldsymbol{c}_k)$, DCP operates as follows.
At iteration step $k$, the objective function of $\widehat{\boldsymbol{\mathcal{P}}}_\text{R2}$, i.e.,~\eqref{eq:DC_form}, is replaced by $f(\boldsymbol{c}) - \widetilde{g}(\boldsymbol{c};\boldsymbol{c}_k)$, and we use convex optimization, e.g., an interior-point method~\cite{Nesterov14}, to find the solution, denoted by $\boldsymbol{c}_k^\star$.
For the next iteration step $k+1$, the same objective function is replaced by $f(\boldsymbol{c}) - \widetilde{g}(\boldsymbol{c};\boldsymbol{c}_{k+1})$, where we set $\boldsymbol{c}_{k+1}=\boldsymbol{c}_k^\star$, and the solution $\boldsymbol{c}_{k+1}^\star$ is once again obtained via convex optimization.
The steps are repeated until the solution converges, i.e., $\|\boldsymbol{c}^\star_{k+1}-\boldsymbol{c}^\star_{k}\| < \varepsilon$.

We next show that a solution obtained by DCP is indeed a stationary point in $\widehat{\boldsymbol{\mathcal{P}}}_\text{R2}$ through the following proposition.
\begin{proposition}
    If there exists a feasible point $\boldsymbol{c}^\star$ to which DCP converges (i.e., $\boldsymbol{c}_{k}^\star\rightarrow \boldsymbol{c}^\star$ as $k \rightarrow \infty$), $\boldsymbol{c}^\star$ is a stationary point to $\widehat{\boldsymbol{\mathcal{P}}}_\text{R2}$. \label{prop:DC_stationary}
\end{proposition}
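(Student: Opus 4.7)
The plan is to use the standard KKT-based argument for the convex-concave procedure: write the Karush-Kuhn-Tucker conditions for the $k$-th convex subproblem solved by DCP, then take $k \to \infty$ and exploit continuity to show that the limiting conditions coincide with the KKT stationarity conditions of the original problem $\widehat{\boldsymbol{\mathcal{P}}}_\text{R2}$.

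First, I would isolate the structure of each DCP iteration. At step $k$, the algorithm solves a convex problem whose objective is $f(\boldsymbol{c}) - \widetilde{g}(\boldsymbol{c};\boldsymbol{c}_k)$ subject to the (convex) feasible set defined by~\eqref{eq:robust_min_DC_const1}--\eqref{eq:robust_min_DC_const3}. Since $\widetilde{g}(\boldsymbol{c};\boldsymbol{c}_k)$ is affine in $\boldsymbol{c}$ with $\nabla_{\boldsymbol{c}}\widetilde{g}(\boldsymbol{c};\boldsymbol{c}_k)=\nabla g(\boldsymbol{c}_k)=2\lambda\boldsymbol{c}_k$, the KKT stationarity condition for the minimizer $\boldsymbol{c}_k^\star$ of the subproblem reads
\begin{equation*}
\nabla f(\boldsymbol{c}_k^\star) - 2\lambda\boldsymbol{c}_k + \nu_k\,\boldsymbol{1} + \sum_{m=1}^{M_\text{max}}\bigl(\mu_{k,m}^{+}-\mu_{k,m}^{-}\bigr)\boldsymbol{e}_m + \sum_{g=1}^{G}\alpha_{k,g}\nabla T_g(\boldsymbol{c}_k^\star) = 0,
\end{equation*}
paired with primal feasibility of $\boldsymbol{c}_k^\star$, nonnegativity of $\mu_{k,m}^{\pm},\alpha_{k,g}\ge 0$, and complementary slackness on the box and trace-inverse constraints, where $T_g(\boldsymbol{c})=\textrm{tr}\{(\mathbf{U}_g\mathbf{E}_g\mathbf{C}\mathbf{U}_g^\top)^{-1}\}-\gamma$.

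Next, I would pass to the limit. By the DCP recursion $\boldsymbol{c}_{k+1}=\boldsymbol{c}_k^\star$, the assumption $\boldsymbol{c}_k^\star\to\boldsymbol{c}^\star$ implies $\boldsymbol{c}_k\to\boldsymbol{c}^\star$ as well, so the ``linearization anchor'' and the ``optimization variable'' collapse to the same point. Under a standard constraint qualification (Slater's condition clearly holds for~$\widehat{\boldsymbol{\mathcal{P}}}_\text{R2}$: any strictly interior $\boldsymbol{c}$ with $\boldsymbol{1}^\top\boldsymbol{c}=M$ and $c_m\in(0,1)$ produces a positive-definite weighted FIM, hence strict feasibility of~\eqref{eq:robust_min_DC_const3}), the sequence of KKT multipliers is bounded, so a subsequence converges to limits $\mu_m^{\pm},\nu,\alpha_g$. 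Continuity of $\nabla f$, of $\nabla T_g$ on the open set where the FIMs are invertible, and of the complementary-slackness products then yields
\begin{equation*}
\nabla f(\boldsymbol{c}^\star)-2\lambda\boldsymbol{c}^\star + \nu\,\boldsymbol{1}+\sum_{m}\bigl(\mu_m^{+}-\mu_m^{-}\bigr)\boldsymbol{e}_m + \sum_{g}\alpha_g\nabla T_g(\boldsymbol{c}^\star)=0.
\end{equation*}
Because $\nabla f(\boldsymbol{c}^\star)-2\lambda\boldsymbol{c}^\star = \nabla(f-g)(\boldsymbol{c}^\star)$, this is exactly the KKT stationarity condition of~$\widehat{\boldsymbol{\mathcal{P}}}_\text{R2}$; primal feasibility passes to the limit by closedness of the feasible set, and dual feasibility and complementary slackness by continuity.

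The step I expect to be the main obstacle is justifying boundedness and convergence of the multiplier sequences $\{\alpha_{k,g}\}$ associated with the trace-inverse constraints~\eqref{eq:robust_min_DC_const4}, because these constraints, although convex and smooth where the weighted FIM is positive-definite, become ill-behaved near matrices that are nearly singular. I would handle this by arguing that the iterates remain in a region where $\sum_m c_{k,m}\epsilon_{mg}\boldsymbol{u}_{mg}\boldsymbol{u}_{mg}^\top$ is uniformly bounded below in the positive-definite order (which can be ensured once the descent property of DCP is established together with Slater's condition), so $\nabla T_g$ is uniformly Lipschitz along the sequence and the multipliers inherit a uniform bound from the KKT system. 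All other pieces---monotone descent of the majorized objective, feasibility of each $\boldsymbol{c}_k^\star$, and the algebraic identity linking $\widetilde{g}$ and $g$ at the anchor point---are routine.
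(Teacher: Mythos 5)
Your proof is a legitimate route to the claim, but it is genuinely different from the one in the paper. The paper's argument is the elementary majorization--minimization ``sandwich'': it uses $g(\boldsymbol{c})\geq\widetilde{g}(\boldsymbol{c};\boldsymbol{c}_k)$ and the optimality of $\boldsymbol{c}_{k+1}^\star$ for the surrogate to show that the true objective values $f(\boldsymbol{c}_k^\star)-g(\boldsymbol{c}_k^\star)$ form a non-increasing sequence squeezed between consecutive surrogate values, so that upon convergence $\boldsymbol{c}_k^\star=\boldsymbol{c}_{k+1}^\star=\boldsymbol{c}^\star$ the objective is pinned at $f(\boldsymbol{c}^\star)-g(\boldsymbol{c}^\star)$; no multipliers, gradients, or constraint qualifications appear. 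Your KKT-limit argument is the stronger and more standard characterization of stationarity (it actually certifies a first-order point of $\widehat{\boldsymbol{\mathcal{P}}}_\text{R2}$ rather than only stagnation of the objective value), at the price of machinery the paper deliberately avoids: you must invoke a constraint qualification, establish uniform boundedness of the multipliers attached to the trace-inverse constraints, and pass complementary slackness to the limit. Your plan for those steps is sound, but to make it airtight you should (i) treat $\gamma$ explicitly as an optimization variable --- stationarity in $\gamma$ forces the multipliers of \eqref{eq:robust_min_DC_const3} to sum to one, which incidentally gives you the boundedness you worry about for free; (ii) note that Slater's condition requires the full sensor set to be non-coplanar for every $g$ so that the weighted FIM is invertible at an interior point; and (iii) fix the reference to a fourth constraint of $\widehat{\boldsymbol{\mathcal{P}}}_\text{R2}$, which does not exist --- the trace-inverse constraints are \eqref{eq:robust_min_DC_const3}. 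In short: both proofs work; the paper's buys simplicity and matches the weak ``objective converges'' reading of stationarity, while yours buys a genuine first-order optimality certificate.
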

\begin{proof}
    First, note that $\boldsymbol{c}_k^\star$ obtained by DCP is feasible to $\widehat{\boldsymbol{\mathcal{P}}}_\text{R2}$ for all $k$ since there are no approximations made to any of the constraints in $\widehat{\boldsymbol{\mathcal{P}}}_\text{R2}$.
    
    We next show that the objective function of $\widehat{\boldsymbol{\mathcal{P}}}_\text{R2}$ at $\boldsymbol{c}_k^\star$, i.e., $f(\boldsymbol{c}_k^\star) - g(\boldsymbol{c}_k^\star)$, is bounded by the objective functions of DCP from two consecutive steps $k$ and $k+1$.
    The relationship between the objective values of DCP evaluated at $\boldsymbol{c}_{k}^\star$ and $\boldsymbol{c}_{k+1}^\star$, obtained from iteration steps $k$ and $k+1$, is given by~\cite{Lipp16}
    \begin{align}
    f(\boldsymbol{c}_{k}^\star)-\widetilde{g}(\boldsymbol{c}_{k}^\star;\boldsymbol{c}_{k}) &\geq f(\boldsymbol{c}_k^\star)-g(\boldsymbol{c}_k^\star) \label{proof:DC_nonincreasing1} \\
    &=f(\boldsymbol{c}_{k+1})-g(\boldsymbol{c}_{k+1}) \label{proof:DC_nonincreasing2} \\
    &=f(\boldsymbol{c}_{k+1})-\widetilde{g}(\boldsymbol{c}_{k+1};\boldsymbol{c}_{k+1}) \label{proof:DC_nonincreasing3} \\
    &\geq f(\boldsymbol{c}_{k+1}^\star)-\widetilde{g}(\boldsymbol{c}_{k+1}^\star;\boldsymbol{c}_{k+1}).\label{proof:DC_nonincreasing4}
    \end{align}
    The equality in~\eqref{proof:DC_nonincreasing2} holds from the operation of DCP, and the equality in~\eqref{proof:DC_nonincreasing3} is true since $\widetilde{g}(\boldsymbol{c};\boldsymbol{c}_{k+1})$ is the linear approximation of $g(\boldsymbol{c})$.
    The fact that $\boldsymbol{c}_{k+1}^\star$ minimizes $f(\boldsymbol{c})-\widetilde{g}(\boldsymbol{c};\boldsymbol{c}_{k+1})$ establishes the last inequality.
    By the relationship derived above, $f(\boldsymbol{c}_k^\star) - g(\boldsymbol{c}_k^\star)$ is lower-bounded by $f(\boldsymbol{c}_{k+1}^\star)-\widetilde{g}(\boldsymbol{c}_{k+1}^\star;\boldsymbol{c}_{k+1})$ and also upper-bounded by $f(\boldsymbol{c}_{k}^\star)-\widetilde{g}(\boldsymbol{c}_{k}^\star;\boldsymbol{c}_{k})$.
    
    If DCP converges to a point $\boldsymbol{c}^\star$, it implies that $\boldsymbol{c}_k^\star=\boldsymbol{c}_{k+1}^\star=\boldsymbol{c}^\star$ as $k\rightarrow \infty$.
    As a result, the objective function of $\widehat{\boldsymbol{\mathcal{P}}}_\text{R2}$, which is both lower and upper bounded by the same function $f(\boldsymbol{c}^\star)-\widetilde{g}(\boldsymbol{c}^\star;\boldsymbol{c}^\star)$, converges to a stationary point $f(\boldsymbol{c}^\star) - g(\boldsymbol{c}^\star)$.
\end{proof}

With the penalty term introduced to force binary solutions for $\widehat{\boldsymbol{\mathcal{P}}}_\text{R2}$, multiple locally optimal points can exist in the feasible space.
Therefore, despite the effectiveness of DCP in finding stationary points of $\widehat{\boldsymbol{\mathcal{P}}}_\text{R2}$, no guarantee is given on $\boldsymbol{c}_\text{R2}^\star=\boldsymbol{b}_\text{R}^\star$.

Instead, the ability of DCP to find the closest $\boldsymbol{c}_\text{R2}^\star$ to $\boldsymbol{b}_\text{R}^\star$ depends on the choice of initial point, i.e., $\boldsymbol{c}_0$~\cite{Lipp16}.
One heuristic to find such solution is therefore to run DCP multiple times with different $\boldsymbol{c}_0$ values and select the solution with the lowest objective value.
The required number of runs for this method, however, depends on the magnitude of $\lambda$ in $\widehat{\boldsymbol{\mathcal{P}}}_\text{R2}$.
On the one hand, with $\lambda = 0$, DCP finds the same non-binary solution regardless of $\boldsymbol{c}_0$, so only a single run of DCP is required.
On the other hand, with $\lambda = \infty$, the solution is strictly determined by $\boldsymbol{c}_0$, so the required number of runs may increase up to ${M_\text{max}}\choose{M}$, which is equivalent to the exhaustive search case.
Therefore, both $\lambda$ and the number of runs, which we denote as $N_\text{DCP}$, must be carefully selected.

We propose setting $\lambda=\kappa\gamma_0$, where $\gamma_0$ is the optimized $\gamma$ from solving $\widehat{\boldsymbol{\mathcal{P}}}_\text{R2}$ with $\lambda =0$, and $\kappa \geq 0$ is a scaling factor.
By setting $\lambda$ to be proportional to $\gamma_0$, we can balance our penalty term based on the worst-case CRLB (i.e, $\gamma$ in $\widehat{\boldsymbol{\mathcal{P}}}_\text{R2}$) and effectively force our optimization to find a desired binary solution.
The overall procedure of DCP-based robust sensor selection is summarized in Algorithm~\ref{alg:DC}.

\begin{algorithm}[ht]
{\small
\caption{Robust Sensor Selection based on DCP}
\label{alg:DC}
\begin{algorithmic}
\REQUIRE $N_\text{DCP}$, $\kappa$, $\varepsilon$, $\epsilon_m$ and $\boldsymbol{u}_m$, $\forall m \in \mathcal{M}_\text{max}$ 
\STATE Initialize $\mathcal{C}=\varnothing$, $\lambda = 0$ and generate a random feasible point $\boldsymbol{c}_0$
\STATE Acquire $\boldsymbol{c}_\text{R2}^\star$ from solving $\widehat{\boldsymbol{\mathcal{P}}}_\text{R2}$ via convex optimization
\STATE $\lambda = \kappa\gamma$ and $n=1$
\WHILE {$n \leq N_\text{DCP}$}
    \STATE $k=0$ and generate a random feasible point $\boldsymbol{c}_k$
    \WHILE {true}
        \STATE Update the objective function of $\widehat{\boldsymbol{\mathcal{P}}}_\text{R2}$ to $f(\boldsymbol{c})-\widetilde{g}(\boldsymbol{c};\boldsymbol{c}_k)$
        \STATE Acquire $\boldsymbol{c}_\text{R2}^\star$ from solving $\widehat{\boldsymbol{\mathcal{P}}}_\text{R2}$ via convex optimization
        \STATE $\boldsymbol{c}_{k+1} = \boldsymbol{c}_{\text{R2}}^\star$
        \IF {$\|\boldsymbol{c}_{k+1}-\boldsymbol{c}_{k}\| < \varepsilon$}
            \STATE Break
        \ENDIF
        \STATE $k = k+1$
    \ENDWHILE
    \STATE $\mathcal{C} = \mathcal{C} \cup \{\boldsymbol{c}_\text{R2}^\star\}$ and $n = n + 1$
\ENDWHILE
\STATE $\boldsymbol{c}_\text{R2}^{\star\star} = \argmin_{\boldsymbol{c}\in\mathcal{C}}f(\boldsymbol{c})-g(\boldsymbol{c})$
\RETURN $\boldsymbol{c}_\text{R2}^{\star\star}$ 
\end{algorithmic}
}
\end{algorithm}

To illustrate the impact of $\lambda$ on our proposed algorithm, we conduct a simulation to evaluate both the worst-case CRLB and zero-penalty rate (i.e., the rate with which our algorithm converges to a binary solution and yields zero penalty) over different values of $\kappa$.
For the simulation, we used the same system setup for generating Fig.~\ref{fig:Prism_and_Plot} except that sensors are randomly placed.
With the DCP-related parameters set as $N_\text{DCP}=20$, $\kappa \in [0.2,5]$, and $\varepsilon=0.05$, the obtained result is shown in Fig.~\ref{fig:DC_analysis}.
We see that higher $\kappa$ results in higher rate of finding binary solutions as greater penalty further ensures that DCP converges to a binary solution.
However, the worst-case CRLB performance also moves further away from the optimal exhaustive search case; imposing too much penalty on the optimization is less likely to provide a desired solution due to the increased chance of DCP finding one of the locally optimal points.
It is therefore important to determine the proper value of $\lambda$ so that the best performance is achieved from a given sensor selection scenario.
For example, from the upper plot of Fig.~\ref{fig:DC_analysis}, we find that our algorithm obtains the best performance with $\kappa = 0.5$ for $M=4$ but $\kappa = 0.2$ for $M=5,6$.

\begin{figure}[t]
    \vspace{-2mm}
    \centering
    \includegraphics[width=1\linewidth]{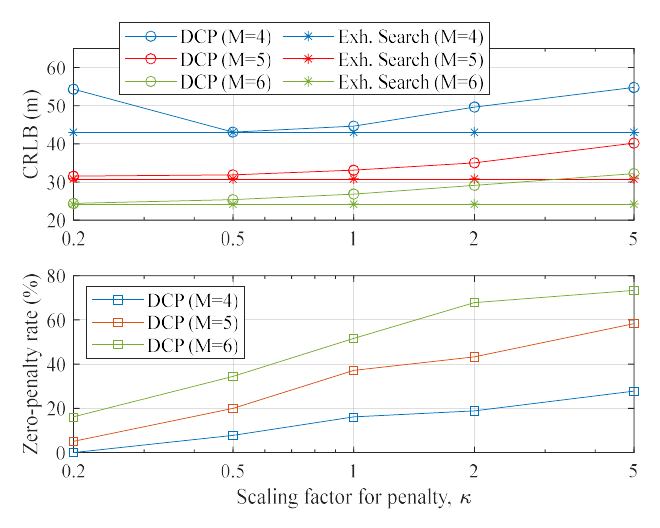}
    \caption{Worst-case CRLBs and zero-penalty rates over different values of $\kappa$.}
    \vspace{-3mm}
    \label{fig:DC_analysis}
\end{figure}

Our DCP-based strategy does not rely on greedy selection, but the necessity of finding a proper value for $\lambda$ and uncertainty of acquiring optimal solutions render this strategy less effective. Therefore, in the following, we introduce a more extreme strategy that is guaranteed to find the optimal set of sensors.

\vspace{-3mm}
\subsection{Sensor Selection via Discrete Monotonic Optimization}\label{ssec:robust_algorithm3}

Our final strategy is based on \emph{discrete monotonic optimization} (DMO)~\cite{Tuy06}, an optimization method for when the objective function and constraints are in the form of a difference of two monotonic functions.
In other words, DMO can be exploited for problems of the form
\begin{align}
    \max_{\boldsymbol{c}}\; &f^+(\boldsymbol{c}) - f^-(\boldsymbol{c}) \label{eq:DMO_obj} \\
    \text{s.t.}\;\; & g_q(\boldsymbol{c}) - h_q(\boldsymbol{c}) \leq 0 \;\; \text{for } q=1,\ldots,Q, \label{eq:DMO_const1}\\
    &\boldsymbol{c} \in [\boldsymbol{v}, \boldsymbol{w}], \label{eq:DMO_const2}
\end{align}
where $f^+(\boldsymbol{c})$, $f^-(\boldsymbol{c})$, $g_q(\boldsymbol{c})$, and $h_q(\boldsymbol{c})$ are all monotonically increasing\footnote{We consider a function $f$ to be \emph{increasing} in $\boldsymbol{c}$ if $f(\boldsymbol{c}^{(2)}) \geq f(\boldsymbol{c}^{(1)})$ for $\boldsymbol{c}^{(2)} \geq \boldsymbol{c}^{(1)}$, where $\boldsymbol{c}^{(2)} \geq \boldsymbol{c}^{(1)}$ implies $c^{(2)}_m \geq c^{(1)}_m$, $\forall m$.} functions in $\boldsymbol{c}$.
The \emph{Branch-Reduce-and-Bound} (BRB) technique~\cite{Tuy06,Kim15,Duan20} can be applied to solve DMO problems.
DMO is known to solve combinatorial optimization problems optimally because, unlike convex optimization methods, it handles the binary constraint without any relaxation~\cite{Tuy06}.

We propose the following re-formulation of~\eqref{eq:robust_minmax_relaxed_prob}-\eqref{eq:robust_minmax_relaxed_const2} for robust sensor selection:  
\begin{align}
    (\widetilde{\boldsymbol{\mathcal{P}}}_\text{R3}):~\boldsymbol{c}_\text{R3}^\star=&\argmax_{\boldsymbol{c}}\min_{g\in \{1,\ldots,G\}}-\textrm{tr}\left\{(\mathbf{U}_g\mathbf{E}_g\mathbf{C}\mathbf{U}_g^\top)^{-1}\right\} \nonumber \\ &\quad\quad\quad\quad\quad\quad-\mu\max(0,\boldsymbol{1}^\top\boldsymbol{c}-M) \label{eq:robust_minmax_DMO_prob} \\
    \text{s.t.}\;\; & \boldsymbol{1}^\top\boldsymbol{c} - \boldsymbol{c}^\top\boldsymbol{c} \leq 0, \label{eq:robust_minmax_DMO_const1} \\
    & \boldsymbol{c} \in [\boldsymbol{0},\boldsymbol{1}], \label{eq:robust_minmax_DMO_const2}
\end{align}
from which we can make the following list of mapping from~\eqref{eq:robust_minmax_DMO_prob}-\eqref{eq:robust_minmax_DMO_const2} to~\eqref{eq:DMO_obj}-\eqref{eq:DMO_const2}:
\begin{align}
    f^+(\boldsymbol{c}) &= \min_{g\in \{1,\ldots,G\}}-\textrm{tr}\left\{(\mathbf{U}_g\mathbf{E}_g\mathbf{C}\mathbf{U}_g^\top)^{-1}\right\}, \label{DMO_f+}\vspace{-1mm}\\
    f^-(\boldsymbol{c}) &= \mu\max(0,\boldsymbol{1}^\top\boldsymbol{c}-M), \label{DMO_f-} \\
    g_1(\boldsymbol{c}) &= \boldsymbol{1}^\top\boldsymbol{c},\;\;h_1(\boldsymbol{c}) = \boldsymbol{c}^\top\boldsymbol{c},\;\;[\boldsymbol{v},\boldsymbol{w}] = [\boldsymbol{0},\boldsymbol{1}].
\end{align}
In $\widetilde{\boldsymbol{\mathcal{P}}}_\text{R3}$, we have converted the constraint~\eqref{eq:robust_minmax_relaxed_const1} to a penalty function $\mu\max(0,\boldsymbol{1}^\top\boldsymbol{c}-M)$ where the \emph{max} operation with zero is applied to ensure $\boldsymbol{1}^\top\boldsymbol{c}=M$.
Note that the two constraints~\eqref{eq:robust_minmax_DMO_const1} and~\eqref{eq:robust_minmax_DMO_const2} enforce $\boldsymbol{c}$ to be binary.
Clearly, $f^-(\boldsymbol{c})$, $g_1(\boldsymbol{c})$, and $h_1(\boldsymbol{c})$ are increasing functions in $\boldsymbol{c}\in [\boldsymbol{0},\boldsymbol{1}]$.
This behavior holds for $f^{+}(\boldsymbol{c})$ as well:
\begin{proposition}
    $f^+(\boldsymbol{c})$ in~\eqref{DMO_f+} is an increasing function in $\boldsymbol{c}\in [\boldsymbol{0},\boldsymbol{1}]$. \label{prop:DMO_increasing}
\end{proposition}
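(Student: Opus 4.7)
The plan is to peel off the $\min$ over $g$ and argue monotonicity pointwise in $g$, which reduces the claim to a standard operator-monotonicity fact: if $\boldsymbol{A} \succeq \boldsymbol{B} \succ \boldsymbol{0}$ then $\boldsymbol{A}^{-1} \preceq \boldsymbol{B}^{-1}$, and hence $\textrm{tr}(\boldsymbol{A}^{-1}) \leq \textrm{tr}(\boldsymbol{B}^{-1})$. Since $\min$ of a finite collection of increasing functions is increasing, it suffices to show that for each fixed $g$ the map $\phi_g(\boldsymbol{c}) \triangleq -\textrm{tr}\{(\mathbf{U}_g\mathbf{E}_g\mathbf{C}\mathbf{U}_g^\top)^{-1}\}$ is increasing in $\boldsymbol{c}\in[\boldsymbol{0},\boldsymbol{1}]$.

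First I would rewrite the FIM-like matrix as a nonnegative combination of rank-one terms, mirroring the T-CRLB derivation in Sec.~\ref{sec:CRLB}:
\begin{equation*}
    \mathbf{U}_g\mathbf{E}_g\mathbf{C}\mathbf{U}_g^\top \;=\; \sum_{m=1}^{M_\text{max}} c_m\,\epsilon_{mg}\,\boldsymbol{u}_{mg}\boldsymbol{u}_{mg}^\top.
\end{equation*}
Then I would verify that every coefficient $\epsilon_{mg}$ is strictly positive. From~\eqref{eq:corr_epsilon}, writing $a=\sigma^{-1}_{\text{T},m}$ and $b=\sigma^{-1}_{\text{R},m}/d_{mg}$, one has $\epsilon_{mg}=(a^2+b^2-2\eta_{mm}ab)/(1-\eta_{mm}^2)$, which is bounded below by $(a-b)^2/(1-\eta_{mm}^2)\geq 0$, and is strictly positive because $\eta_{mm}\in[0,1)$ rules out degeneracy. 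Consequently each rank-one summand $\epsilon_{mg}\boldsymbol{u}_{mg}\boldsymbol{u}_{mg}^\top$ is positive semidefinite.

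Next, for $\boldsymbol{c}^{(2)}\geq \boldsymbol{c}^{(1)}$ componentwise I would form the difference
\begin{equation*}
\mathbf{U}_g\mathbf{E}_g\mathbf{C}^{(2)}\mathbf{U}_g^\top-\mathbf{U}_g\mathbf{E}_g\mathbf{C}^{(1)}\mathbf{U}_g^\top \;=\; \sum_{m=1}^{M_\text{max}} (c_m^{(2)}-c_m^{(1)})\,\epsilon_{mg}\,\boldsymbol{u}_{mg}\boldsymbol{u}_{mg}^\top \;\succeq\; \boldsymbol{0},
\end{equation*}
yielding the L\"owner ordering $\mathbf{U}_g\mathbf{E}_g\mathbf{C}^{(2)}\mathbf{U}_g^\top \succeq \mathbf{U}_g\mathbf{E}_g\mathbf{C}^{(1)}\mathbf{U}_g^\top$. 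Invoking operator monotonicity of the inverse and taking traces gives $\phi_g(\boldsymbol{c}^{(2)})\geq\phi_g(\boldsymbol{c}^{(1)})$ for each $g$, and hence $f^+(\boldsymbol{c}^{(2)})=\min_g\phi_g(\boldsymbol{c}^{(2)})\geq\min_g\phi_g(\boldsymbol{c}^{(1)})=f^+(\boldsymbol{c}^{(1)})$.

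The main subtlety, and the only point that requires care, is invertibility: the argument through $A^{-1}\preceq B^{-1}$ needs both matrices to be positive definite, whereas for feasible $\boldsymbol{c}\in[\boldsymbol{0},\boldsymbol{1}]$ the matrix $\mathbf{U}_g\mathbf{E}_g\mathbf{C}\mathbf{U}_g^\top$ can be singular (e.g., when too few components of $\boldsymbol{c}$ are nonzero, or when the selected LOS vectors are coplanar, as already noted for the F-CRLB). I would handle this by adopting the convention $\phi_g(\boldsymbol{c})=-\infty$ when $\mathbf{U}_g\mathbf{E}_g\mathbf{C}\mathbf{U}_g^\top$ is singular; since adding a positive semidefinite matrix can only raise the smallest eigenvalue, the singular set is downward-closed in $\boldsymbol{c}$, which preserves the inequality $\phi_g(\boldsymbol{c}^{(2)})\geq\phi_g(\boldsymbol{c}^{(1)})$ in the extended reals. (Alternatively, because the BRB procedure only evaluates $f^+$ at points feasible for $\widetilde{\boldsymbol{\mathcal{P}}}_\text{R3}$, where $\|\boldsymbol{c}\|_0\geq M\geq 3$ together with the non-coplanarity assumption on the deployed sensors ensures positive definiteness.) Either route makes the monotonicity rigorous and completes the proof.
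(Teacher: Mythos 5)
Your proof is correct, and it takes a genuinely different (and arguably cleaner) route than the paper's. The paper splits into two cases according to the supports $\mathcal{M}^{(1)}\subseteq\mathcal{M}^{(2)}$ of the two selection vectors: for equal supports it invokes the Woodbury identity to compare traces at a common worst-case location index $g_\star$ (asserting $g_\star^{(1)}=g_\star^{(2)}$ on the grounds that the geometric conditioning is unchanged), and for strict inclusion it passes through a limit $c_{m'}\to 0$ and uses the fact that the worst-case CRLB under $\boldsymbol{c}^{(1)}$ is attained at $g_\star^{(1)}$. You instead establish the per-$g$ monotonicity of $\phi_g(\boldsymbol{c})=-\textrm{tr}\{(\mathbf{U}_g\mathbf{E}_g\mathbf{C}\mathbf{U}_g^\top)^{-1}\}$ directly from the rank-one decomposition, the positivity of $\epsilon_{mg}$ (which the paper does not verify but which your bound $\epsilon_{mg}\geq (a-b)^2/(1-\eta_{mm}^2)$ with $\eta_{mm}<1$ settles), the L\"owner ordering, and operator antitonicity of the inverse, and then conclude via the elementary fact that a pointwise minimum of increasing functions is increasing. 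This buys you two things: you never need any claim about how the minimizing index $g_\star$ behaves as $\boldsymbol{c}$ varies --- which is the one step in the paper's case (i) that is not actually justified, since changing the weights $c_m$ on a fixed support can change the worst-case location --- and you dispense with the support-based case split entirely. Your treatment of the singular set (extended-real convention $\phi_g=-\infty$ together with the observation that the nonsingular region is upward-closed under the componentwise order) also addresses an invertibility issue that the paper's proof leaves implicit. The two approaches use essentially equivalent matrix tools (Woodbury versus inverse antitonicity), but yours is the more rigorous packaging of the same underlying monotonicity.
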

\begin{proof}
    Consider two selection vectors $\boldsymbol{c}^{(1)}, \boldsymbol{c}^{(2)} \in [\boldsymbol{0},\boldsymbol{1}]$ such that $\boldsymbol{c}^{(2)} \geq \boldsymbol{c}^{(1)}$.
    Also define $\mathcal{M}^{(1)}=\{m\vert m\in\mathcal{M}_\text{max}, c_m^{(1)}>0\}$ and $\mathcal{M}^{(2)}=\{m\vert m\in\mathcal{M}_\text{max}, c_m^{(2)}>0\}$ to be the sets of sensors selected by $\boldsymbol{c}^{(1)}$ and $\boldsymbol{c}^{(2)}$, respectively.
    We identify two distinct cases of $\boldsymbol{c}^{(2)} \geq \boldsymbol{c}^{(1)}$ that can appear in sensor selection: (i) $\mathcal{M}^{(1)} = \mathcal{M}^{(2)}$~and (ii) $\mathcal{M}^{(1)} \subset \mathcal{M}^{(2)}$.
    For both cases, we show that $f^+(\boldsymbol{c}^{(2)})\geq f^+(\boldsymbol{c}^{(1)})$ if $\boldsymbol{c}^{(2)} \geq \boldsymbol{c}^{(1)}$.
    
    (i) $\mathcal{M}^{(1)} = \mathcal{M}^{(2)}$:
    Since the same sensors have been selected by both $\boldsymbol{c}^{(1)}$ and $\boldsymbol{c}^{(2)}$, the \emph{geometric conditionings}~\cite{Patwari03} of $\mathcal{M}^{(1)}$ and $\mathcal{M}^{(2)}$ are the same for each $g$.
    Since $g_\star^{(1)}$ $= \argmin_{g} -\textrm{tr}\left\{(\mathbf{U}_g\mathbf{E}_g\mathbf{C}^{(1)}\mathbf{U}_g)^{-1}\right\}$~and~$g_\star^{(2)}$ $= \argmin_{g} -\textrm{tr}\left\{(\mathbf{U}_g\mathbf{E}_g\mathbf{C}^{(2)}\mathbf{U}_g)^{-1}\right\}$ are equal, i.e., $g_\star$ $=g_\star^{(1)}$ $=g_\star^{(2)}$, we directly compare $-\textrm{tr}\left\{(\mathbf{U}_{g_\star}\mathbf{E}_{g_\star}\mathbf{C}^{(1)}\mathbf{U}_{g_\star})^{-1}\right\}$ with $-\textrm{tr}\left\{(\mathbf{U}_{g_\star}\mathbf{E}_{g_\star}\mathbf{C}^{(2)}\mathbf{U}_{g_\star})^{-1}\right\}$.
    Define $\mathbf{A}=\mathbf{U}_{g_\star}\mathbf{E}_{g_\star}\mathbf{C}^{(1)}\mathbf{U}_{g_\star}$ and $\mathbf{B}=\mathbf{U}_{g_\star}\mathbf{E}_{g_\star}(\mathbf{C}^{(2)}-\mathbf{C}^{(1)})\mathbf{U}_{g_\star}$ so that $\mathbf{A}+\mathbf{B}=\mathbf{U}_{g_\star}\mathbf{E}_{g_\star}\mathbf{C}^{(2)}\mathbf{U}_{g_\star}$.
    Then, from the Woodbury identity, we have $-\textrm{tr}\{(\mathbf{A}+\mathbf{B})^{-1}\}=-\textrm{tr}\{\mathbf{A}^{-1}\}+\textrm{tr}\{\mathbf{A}^{-1}(\mathbf{B}^{-1}+\mathbf{A}^{-1})^{-1}\mathbf{A}^{-1}\}$.
    Since both $\mathbf{A}$ and $\mathbf{B}$ are positive semidefinite, $\textrm{tr}\{\mathbf{A}^{-1}(\mathbf{B}^{-1}+\mathbf{A}^{-1})^{-1}\mathbf{A}^{-1}\}$ must be non-negative.
    Therefore, $-\textrm{tr}\left\{(\mathbf{U}_{g_\star}\mathbf{E}_{g_\star}\mathbf{C}^{(2)}\mathbf{U}_{g_\star})^{-1}\right\}\geq -\textrm{tr}\left\{(\mathbf{U}_{g_\star}\mathbf{E}_{g_\star}\mathbf{C}^{(1)}\mathbf{U}_{g_\star})^{-1}\right\}$.
    
    (ii) $\mathcal{M}^{(1)} \subset \mathcal{M}^{(2)}$:
    Let $m'$ be the index of sensors only selected by $\mathcal{M}^{(2)}$, i.e., $m'\in\mathcal{M}^{(2)}\backslash\mathcal{M}^{(1)}$.
    The relationship between the worst-case CRLBs resulting from $\boldsymbol{c}^{(1)}$ and $\boldsymbol{c}^{(2)}$ is given by
    \begin{align}
        -\textrm{tr}&\left\{(\mathbf{U}_{g_\star^{(2)}}\mathbf{E}_{g_\star^{(2)}}\mathbf{C}^{(2)}\mathbf{U}_{g_\star^{(2)}})^{-1}\right\} \nonumber \\
        &\geq \lim_{c_{m'}\rightarrow 0,\forall m'}-\textrm{tr}\left\{(\mathbf{U}_{g_\star^{(2)}}\mathbf{E}_{g_\star^{(2)}}\mathbf{C}^{(2)}\mathbf{U}_{g_\star^{(2)}})^{-1}\right\} \label{eq:DMO_increasing_proof_1} \\
        &\geq -\textrm{tr}\left\{(\mathbf{U}_{g_\star^{(2)}}\mathbf{E}_{g_\star^{(2)}}\mathbf{C}^{(1)}\mathbf{U}_{g_\star^{(2)}})^{-1}\right\} \label{eq:DMO_increasing_proof_2} \\ 
        &\geq -\textrm{tr}\left\{(\mathbf{U}_{g_\star^{(1)}}\mathbf{E}_{g_\star^{(1)}}\mathbf{C}^{(1)}\mathbf{U}_{g_\star^{(1)}})^{-1}\right\}, \label{eq:DMO_increasing_proof_3}
    \end{align}
    where~\eqref{eq:DMO_increasing_proof_1}-\eqref{eq:DMO_increasing_proof_2} are due to $\boldsymbol{c}^{(2)}\geq\lim_{c_{m'}\rightarrow 0,\forall m'} \boldsymbol{c}^{(2)} \geq \boldsymbol{c}^{(1)}$. The last inequality follows from the fact that the CRLB with $\boldsymbol{c}^{(1)}$ achieves its maximum at the $g_\star^{(1)}$ location.
\end{proof}

With $\widetilde{\boldsymbol{\mathcal{P}}}_\text{R3}$ formulated, our DMO-based robust sensor selection algorithm has the following branch and reduce steps:
\begin{itemize}[leftmargin=4mm]
    \item {\tt Branch}: A box $B=[\boldsymbol{v},\boldsymbol{w}]$ is partitioned into two boxes $B_1$ and $B_2$ such that $B_1=\{\boldsymbol{c}\in B \vert  c_{m^\star}\leq \lfloor (v_{m^\star}+w_{m^\star})/2 \rfloor\}$ and $B_2=\{\boldsymbol{c}\in B \vert  c_{m^\star}\geq \lceil (v_{m^\star}+w_{m^\star})/2 \rceil\}$, where $m^\star = \argmax_{m\in\mathcal{M}_\text{max}}(w_{m}-v_{m})$.
    \item {\tt Reduce}: A box $B\hspace{-0.5mm}=\hspace{-0.5mm}[\boldsymbol{v},\boldsymbol{w}]$ is reduced to $B'\hspace{-0.5mm}=\hspace{-0.5mm}[\boldsymbol{v}',\boldsymbol{w}']$ such that $\boldsymbol{v}'\hspace{-0.5mm}=\hspace{-0.5mm}\boldsymbol{w}\hspace{-0.5mm}-\hspace{-0.5mm}\sum^{M_\text{max}}_{m=1}\alpha_m(w_m\hspace{-0.5mm}-\hspace{-0.5mm}v_m)\mathbf{e}_m$ and $\boldsymbol{w}'\hspace{-0.5mm}=$ $\boldsymbol{v}'\hspace{-0.5mm}+\hspace{-0.5mm}\sum^{M_\text{max}}_{m=1}\beta_m(w_m\hspace{-0.5mm}-\hspace{-0.5mm}v'_m)\mathbf{e}_m$, where $\alpha_m\hspace{-0.5mm}=\hspace{-0.5mm}\sup\{\alpha\vert \alpha\hspace{-0.5mm}\in[0,1],$ $g_1(\boldsymbol{v})-h_1(\boldsymbol{w}-\alpha(w_m-v_m)\mathbf{e}_m)\leq 0, f^+(\boldsymbol{w}-\alpha(w_m-v_m)\mathbf{e}_m)-f^-(\boldsymbol{v})\geq \nu(B)\}$ and $\beta_m\hspace{-0.5mm}=\hspace{-0.5mm}\sup\{\beta\vert \beta\hspace{-0.5mm}\in[0,1],$ $g_1(\boldsymbol{v}'-\beta(w_m-v'_m)\mathbf{e}_m)-h_1(\boldsymbol{w})\leq 0, f^+(\boldsymbol{w})-f^-(\boldsymbol{v}'+\beta(w_m-v'_m)\mathbf{e}_m)\hspace{-0.5mm}\geq\hspace{-0.5mm}\nu(B)\}$.
    Here $\nu(B)=\max_{\boldsymbol{c}\in B}$ $\left(f^+(\boldsymbol{c}) - f^-(\boldsymbol{c})\right)$, and $\mathbf{e}_m$ is the $m$-th column of $\mathbf{I}_{M_\text{max}}$.
\end{itemize}
The overall procedure is summarized in Algorithm~\ref{alg:DMO}.
Starting from the box $[\boldsymbol{0},\boldsymbol{1}]$, smaller boxes are generated using the \emph{Branch} (i.e., cutting the box in half) and \emph{Reduce} (i.e., cutting down the edges of each box) steps.
Then, by keeping the boxes that satisfy the boundary condition (e.g., $f^+(\boldsymbol{w}')-f^-(\boldsymbol{v}') < \nu^\star$ in Algorithm~\ref{alg:DMO}) and discarding the rest, the solution range is narrowed down by the algorithm.
This step is repeated until (i) there is no more box satisfying the boundary condition or (ii) a box whose $f^+(\boldsymbol{w}')-f^-(\boldsymbol{v}')$ is $\delta$-accurate to the bound.

\begin{algorithm}[h]
{\small
\caption{Robust Sensor Selection based on DMO}
\label{alg:DMO}
\begin{algorithmic}
\REQUIRE $\mu$, $\delta$, $\epsilon_m$ and $\boldsymbol{u}_m$, $\forall m \in \mathcal{M}_\text{max}$
\STATE Initialize $i=1$, $B=[\boldsymbol{0},\boldsymbol{1}]$, $\mathcal{B}_i=\{B\}$, $\mathcal{R}_i=\varnothing$, and $\nu^\star=-\infty$
\WHILE{TRUE}
\STATE Reduce each box $B\in\mathcal{B}_i$ into $B'$ and $\mathcal{R}_i = \mathcal{R}_i \cup B'$
\STATE Find $\boldsymbol{c}^{(i)}=\argmax_{\hspace{-3mm}\boldsymbol{c}=\left\lceil \frac{(\boldsymbol{v}'+\boldsymbol{w}')}{2} \right\rceil,\forall B'\in \mathcal{R}_i} (f^+(\boldsymbol{c})-f^-(\boldsymbol{c})) > \nu^\star$
\IF{$\boldsymbol{c}^{(i)}$ exists}
    \STATE $\nu^\star=f^+(\boldsymbol{c}^{(i)})-f^-(\boldsymbol{c}^{(i)})$
\ELSE
    \STATE $\boldsymbol{c}^{(i)} = \boldsymbol{c}^{(i-1)}$
\ENDIF
\STATE Delete every $B'\in\mathcal{R}_i$ satisfying $f^+(\boldsymbol{w}')-f^-(\boldsymbol{v}') < \nu^\star$
\STATE $\mathcal{R}_{i+1} = \mathcal{R}_i$
\IF{$\mathcal{R}_{i+1}=\varnothing$}
    \RETURN $\boldsymbol{c}_\text{R3}^\star=\boldsymbol{c}^{(i)}$
\ELSE
    \STATE $B^{(i)}=\argmax_{B'\in\mathcal{R}_{i+1}}f^+(\boldsymbol{w}')-f^-(\boldsymbol{v}')$
    \IF{$\nu^\star \geq \delta (f^+(\boldsymbol{w}^{(i)})-f^-(\boldsymbol{v}^{(i)}))$}
        \RETURN $\boldsymbol{c}_\text{R3}^\star=\boldsymbol{c}^{(i)}$
    \ELSE
        \STATE Branch $B^{(i)}$ into $B^{(i)}_1$ and $B^{(i)}_2$
        \STATE $\mathcal{R}_{i+1}=\mathcal{R}_{i+1}\backslash B^{(i)}$ and $\mathcal{B}_{i+1}=\{B^{(i)}_1,B^{(i)}_2\}$
    \ENDIF
\ENDIF
\STATE $i=i+1$
\ENDWHILE
\end{algorithmic}
}
\end{algorithm}

\section{Numerical Simulations}\label{sec:simulations}

\subsection{Simulation Setup and Metrics}\label{ssec:simulations_setup}

\begin{figure}[!t]
    \vspace{-4mm}
    \centering
    \includegraphics[width=0.95\linewidth]{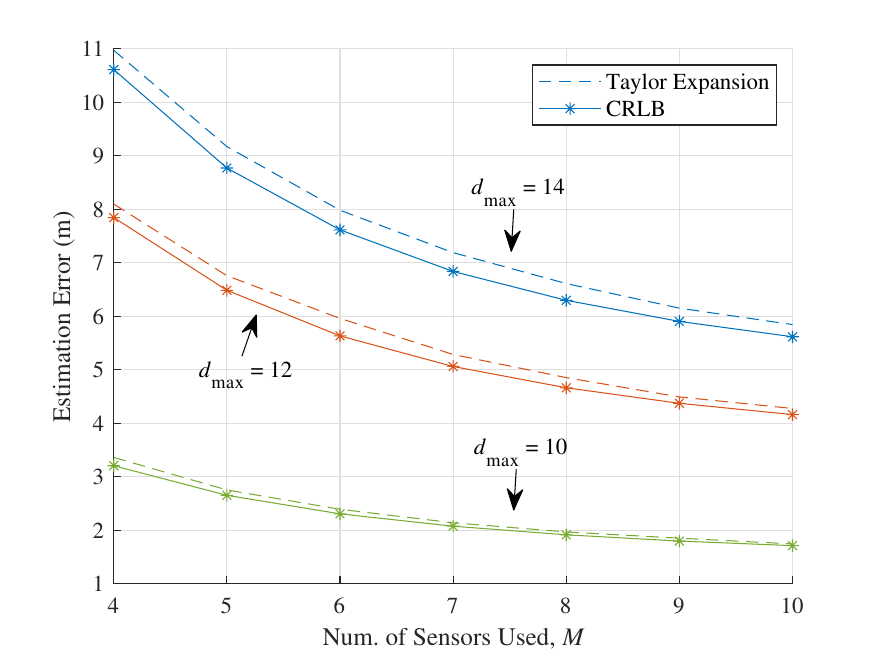}
    \caption{Comparison between the CRLB and a MLE method (Taylor expansion) over different values of $M$.}
    \vspace{-4mm}
    \label{fig:MLE_and_corr}
\end{figure}

Unless stated otherwise, we use $M_\text{max}=14$ sensors placed inside the sensor space confined by $d_\text{s}=4$m with target space covering $d_\text{max} = 14$m.
The location of sensor $m$, i.e., $\boldsymbol{\ell}_m^\mathsf{s}$, $\forall m\in\mathcal{M}_\text{max}$, is randomly generated for each experiment.
According to conventional signal modeling, TOA and RSS measurement noises for sensor $m$ are assumed to have $\sigma^2_{\text{T},m} = \frac{c^2}{8\pi \text{SNR}_mW^2}$~\cite{Huang14} with $\text{SNR}_m=d_m^{-\xi}$ and $\sigma^2_{\text{R},m} = \big(\frac{\log 10}{10\xi}\big)^2\sigma^2_{\text{S},m}$~\cite{Catovic04}, where $W$ is the signal bandwidth and $\sigma^2_{\text{S},m}$ is the shadowing variance~\cite{Laaraiedh12}.
We set $\xi = 2$, $W = 500 \text{ MHz}$, and $\sigma^2_{\text{S},m}=0.83$.
To obtain CRLB plots, an average was taken across 10 different experiments, each of which had either 152 randomly generated $\boldsymbol{\ell}$ for dynamic sensor selection or 152 evenly distributed $\boldsymbol{\ell}^\mathsf{t}_g$ for robust sensor selection inside the target space.
We measure the complexity of algorithms in terms of their runtime computed in the implementation software MATLAB R2021a. The results present the average of 10,000 independent runs/executions of each algorithm.

We employ the CRLB expression $\sigma_{\mathcal{M}}^2(\boldsymbol{\ell})$ in~\eqref{eq:CRLB_hybrid_corr_1} as our positioning accuracy metric for sensor selection.
To demonstrate its appropriateness, we implement $f_\text{est}$, which utilizes the Taylor expansion~\cite{Foy76}, and compare its positioning performance to the CRLB.
The result across different values of $M$ and $d_\text{max}$ is shown in Fig.~\ref{fig:MLE_and_corr}.
To generate the plot, the solution to $\boldsymbol{\mathcal{P}}_\text{D}$ was found via exhaustive search for each value of $M$ and used to compute both $\text{MSE}_\mathcal{M}(\boldsymbol{\ell})$ of $f_\text{est}$ (dashed) and $\sigma^2_\mathcal{M}(\boldsymbol{\ell})$ (solid).
As seen in Fig.~\ref{fig:MLE_and_corr}, for all values of $M$ and $d_\text{max}$, the CRLB is just slightly lower than the MSE performance of $f_\text{est}$. 
We thus conclude that the CRLB is a valid metric to quantify the selection-dependent accuracy performance of wireless positioning.

\vspace{-3mm}
\subsection{Dynamic Sensor Selection}\label{ssec:simulations_dynamic}

We evaluate the performance of our dynamic sensor selection algorithms from Section~\ref{sec:dynamic}.
We refer to Algorithms~\ref{alg:dynamic_trace} and~\ref{alg:dynamic_fraction} as greedy sensor selection using T-CRLB (GSS-T) and greedy sensor selection using F-CRLB (GSS-F), respectively.

\begin{figure}[!t]
    \centering
    \includegraphics[width=0.95\linewidth]{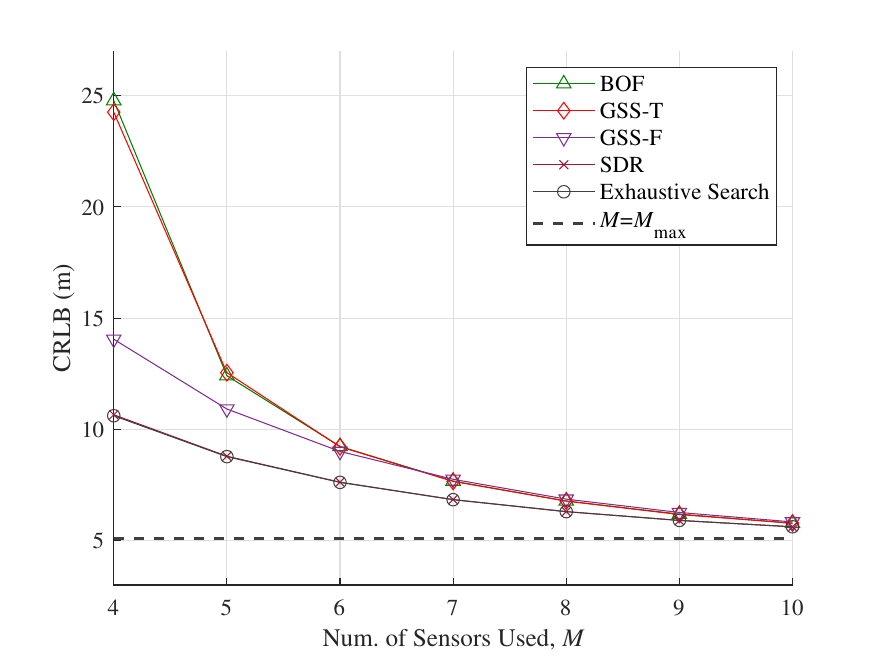}
    \caption{Comparison of the CRLB obtained from different dynamic sensor selection algorithms over $M$ when $M_\text{max}=14$.}
    \vspace{-4mm}
    \label{fig:dynamic_accuracy}
\end{figure}

We first consider the average CRLBs obtained by our algorithms and three benchmarks: (i) exhaustive search, (ii) SDR~\cite{Dai20}, and (iii) BOF greedy selection~\cite{Zhao19}.
According to the results shown in Fig.~\ref{fig:dynamic_accuracy}, we see that exhaustive search and SDR provide the best performance in CRLB minimization.
Note that greedy sensor selection algorithms often converge to a suboptimal solution of problem $\mathcal{P}_\text{D}$ since the successive decisions made for each sensor selection may not always lead to a globally optimal decision.
Out of the greedy selection algorithms, GSS-F shows the most comparable performance to the optimal case.
We also see that GSS-T has its performance equivalent to BOF, which implies that our proposed sensor selection metric matches the efficacy of completely evaluating the CRLB expression.
Additionally, we see that both GSS-T and BOF have inferior performance compared to GSS-F.
This is since more sensors must be heuristically selected for GSS-T and BOF, emphasizing the importance of early stages in greedy selection.
For each algorithm, the improvement in the CRLB diminishes upon increasing $M$, and the performance becomes close to the case of $M=M_\text{max}$ (dashed line) when $M$ approaches~$10$.
Moreover, the performance gap among the algorithms becomes almost negligible at $M=10$.
This implies a stable positioning performance (i.e., a consistent CRLB regardless of the algorithm) can be achieved once the number of used/selected sensors in the sensor space is large enough.
It is worth mentioning that these numbers are application-specific and depend on the sensor space volume and the region occupied by the target candidate locations.

Overall, comparing GSS-T and GSS-F to the benchmarks verifies that our proposed selection strategies only marginally degrade positioning accuracy performance compared to the computationally intensive SDR and exhaustive search.
We move now to quantify the improvement in complexity that they provide.
In Fig.~\ref{fig:dynamic_complexity_M}, the running times of different sensor selection algorithms are shown.
We see that GSS-F takes the shortest time to complete the sensor selection compared to other greedy selection algorithms.
This supports our analysis in Section~\ref{ssec:dynamic_complexity}, which expects GSS-F to be the quickest from having low leading coefficients.
We also see that the GSS-T algorithm only requires about half the running time taken by BOF.
Therefore, we can consider GSS-T and GSS-F to be superior to BOF in the sense that they provide similar or better accuracy performance in much less time.
Note that SDR, which is not included in Fig.~\ref{fig:dynamic_complexity_M}, shows the worst runtime performance with 0.56 seconds in average.
The fast execution times of our proposed algorithms allow near real-time sensor selection for moving targets as well.
In particular, their fast sensor selection times allow the target to only move a short distance before the actual positioning step is carried by the system.

\begin{figure}[!t]
    \centering
    \includegraphics[width=0.9\linewidth]{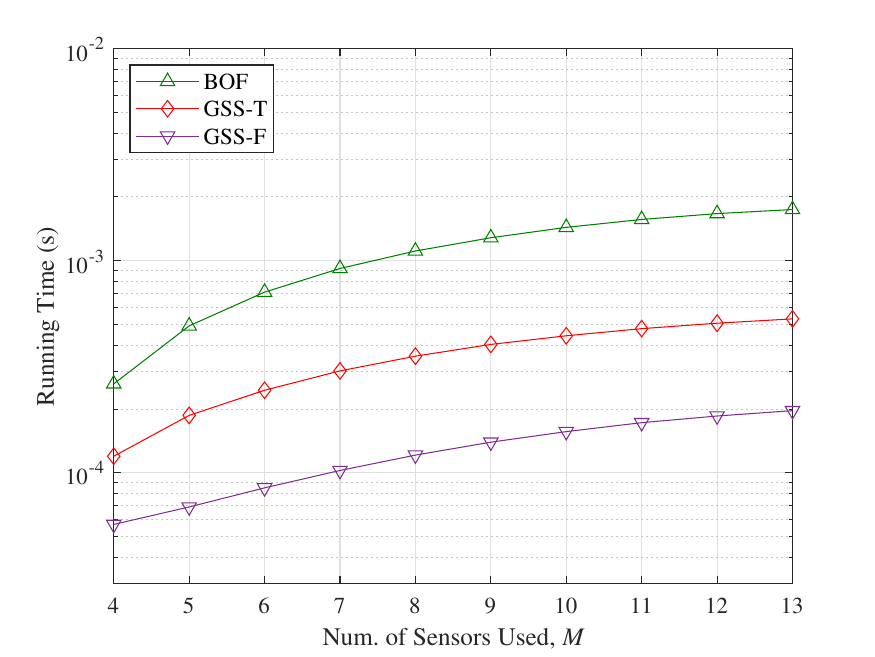}
    \caption{Comparison between running time from different dynamic sensor selection algorithms over $M$ when $M_\text{max}=14$.}
    \label{fig:dynamic_complexity_M}
\end{figure}

\begin{figure}[!t]
    \centering
    \includegraphics[width=0.95\linewidth]{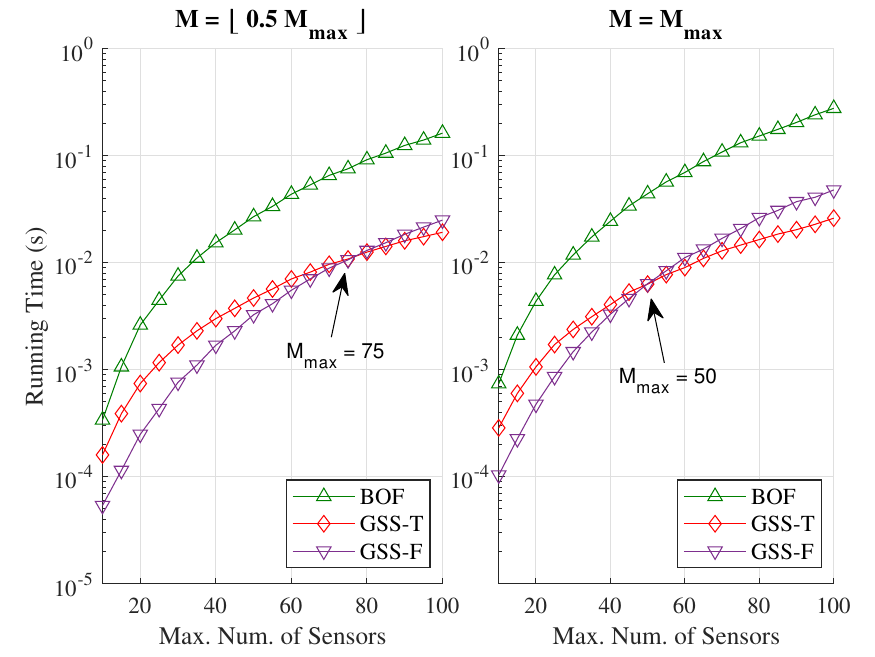}
    \caption{Comparison of the running time of different dynamic sensor selection algorithms over $M_\text{max}$ when $M\hspace{-0.5mm}=\hspace{-0.5mm}\lfloor0.5M_\text{max}\rfloor$ (left) and $M\hspace{-0.5mm}=\hspace{-0.5mm}M_\text{max}$ (right).}
    \vspace{-4mm}
    \label{fig:dynamic_complexity_M_max}
\end{figure}

Next, we evaluate the asymptotic complexity results from Section~\ref{ssec:dynamic_complexity} by considering large numbers for $M_\text{max}$.
In Fig.~\ref{fig:dynamic_complexity_M_max}, two plots comparing the average running times of BOF, GSS-T, and GSS-F over different values of $M_\text{max}$ for $M=\lfloor0.5M_\text{max}\rfloor$ (left) and $M=M_\text{max}$ (right) are shown. 
We see that both of our algorithms take less time than BOF as $M_\text{max}$ increases.
As we discussed in Section~\ref{ssec:dynamic_complexity}, GSS-T and GSS-F have complexities $\mathcal{O}(M_\text{max}^2)$ and $\mathcal{O}(M_\text{max}^3)$, respectively, in terms of the number of arithmetic operations involved.
Since GSS-F has less leading coefficients than GSS-T, we see that GSS-F has faster running time for smaller $M_\text{max}$ but eventually surpasses GSS-T, when $M_\text{max}\geq75$ for $M=\lfloor0.5M_\text{max}\rfloor$ and $M_\text{max}\geq50$ for $M=M_\text{max}$.
Note that the BOF algorithm of complexity $\mathcal{O}(M_\text{max}^3)$ having the same asymptotic behavior as GSS-F is verified by their slope being similar for large $M_\text{max}$.

\begin{figure}[!t]
    \centering
    \includegraphics[width=0.95\linewidth]{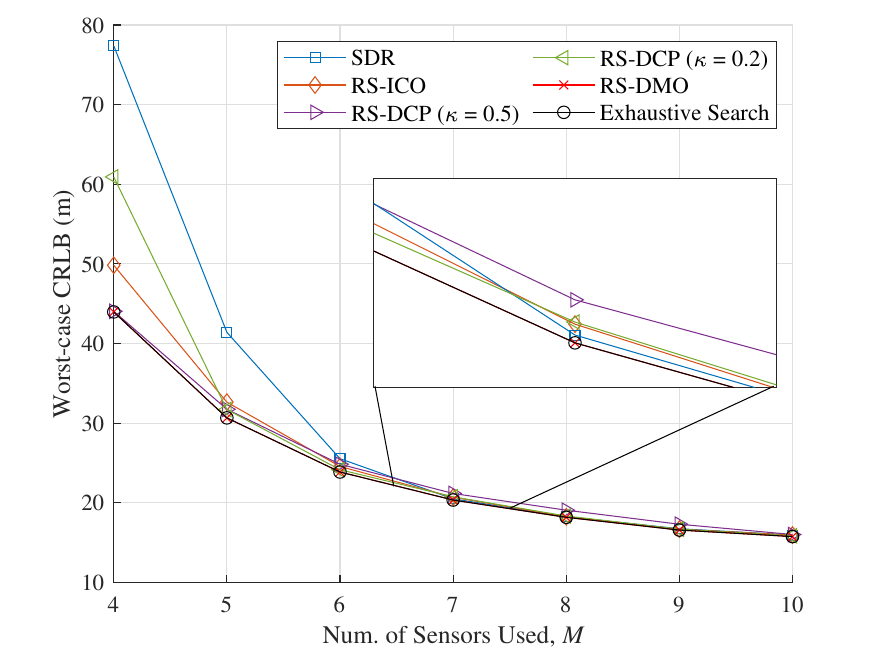}
    \caption{Comparison in the CRLB performance among multiple robust sensor selection algorithms over different values of $M$.}
    \label{fig:robust_accuracy}
\end{figure}

\vspace{-3mm}
\subsection{Robust Sensor Selection} \label{ssec:simulations_robust}

We evaluate the performance of our robust sensor selection algorithms.
We refer to Algorithms~\ref{alg:iter-CVX},~\ref{alg:DC}, and~\ref{alg:DMO} as robust sensor selections using ICO (RSS-ICO), DCP (RSS-DCP), and DMO (RSS-DMO), respectively.
We set $\varepsilon=0.05$ for RSS-DCP and $\mu=100$, $\delta=0.05$ for RSS-DMO.
A plot comparing the average worst-case CRLBs of our selection algorithms and the benchmarks: (i) exhaustive search and (ii) SDR, is shown in Fig.~\ref{fig:robust_accuracy}.
As discussed in Section~\ref{ssec:robust_algorithm3}, RSS-DMO does not rely on convex relaxation and provides optimal performance (i.e., same performance as exhaustive search) in the worst-case CRLB minimization.
Despite being suboptimal, both RSS-ICO and RSS-DCP show significant improvement compared to SDR.

To evaluate the impact of penalty term on RSS-DCP, results with different $\kappa$ values are included in Fig.~\ref{fig:robust_accuracy}.
For small $M$, greater penalties are needed to force $\|\boldsymbol{c}^\star_\text{R2}\|_0=M$ since $\|\boldsymbol{c}^\star_\text{R}\|_0$ is typically far greater than $M$.
This is verified by our result in which RSS-DCP shows near-optimal performance with $\kappa=0.5$ but degraded performance with $\kappa=0.2$ for small $M$.
Note that the opposite behavior is observed for larger $M$, i.e., RSS-DCP with $\kappa=0.2$ shows better performance.
This is since $\|\boldsymbol{c}^\star_\text{R}\|_0$ is already close to $M$, and too much penalty prevents RSS-DCP from finding more robust solutions.

\begin{table}[t]
\captionsetup{justification=centering, labelsep=newline}
\centering
\caption{Runtime measurements in seconds of robust sensor selection algorithms over different values of $M$, $M_\text{max}$, and $G$.}
\label{tb:robust_runtime_1}
\begin{tabular}{|p{1.25cm}||p{0.46cm}|p{0.46cm}|p{0.46cm}|p{0.46cm}|p{0.46cm}|p{0.46cm}|p{0.46cm}|p{0.46cm}|} 
    \hline
    & \multicolumn{4}{|c|}{$M_\text{max}=10$} & \multicolumn{4}{|c|}{$M_\text{max}=15$} \tabularnewline
    \hline
    & \multicolumn{2}{|c|}{$G=20$} & \multicolumn{2}{|c|}{$G=40$} &  \multicolumn{2}{|c|}{$G=20$} & \multicolumn{2}{|c|}{$G=40$} \tabularnewline
    \hline
    \centering $M$ & \centering $4$ & \centering $6$ & \centering $4$ & \centering $6$ & \centering $4$ & \centering $6$ & \centering $4$ & \centering $6$ \tabularnewline
    \hline
    RSS-ICO & \centering 3.59 & \centering 6.21 & \centering 6.65 & \centering 10.21 & \centering 5.00 & \centering 7.46 & \centering 7.27 & \centering 11.19 \tabularnewline
    \hline
    RSS-DCP & \centering 4.96 & \centering 4.16 & \centering 8.23 & 7.30 & 5.87 & 5.44 & 9.21 & 7.35 \tabularnewline
    \hline
    RSS-DMO & \centering 2.38 & \centering 1.34 & 2.81 & 1.49 & \centering 19.82 & \centering 14.65 & \centering 27.92 & \centering 21.72 \tabularnewline
    \hline
\end{tabular}
\vspace{-4mm}
\end{table}

We now evaluate the runtime performance of our robust sensor selection algorithms.
To focus on comparative analysis, we measure runtimes with $M$, $G$, and $M_\text{max}$ varying over two different values.
The results are shown in Table~\ref{tb:robust_runtime_1}.
Both RSS-ICO and RSS-DCP provide the runtimes strictly proportional to $G$ as the algorithms conduct convex optimization over at least $G$ distinct constraints.
For RSS-ICO, since $M$ iterations are required to complete its selection, we see that the runtime also increases with $M$.
In contrast, the runtime of RSS-DCP decreases for increased $M$ because the algorithm is likely to find converged solutions quicker over the feasible space defined by a larger $M$.
Regarding our optimal algorithm RSS-DMO, we see a significant increase in the runtime for an increased $M_\text{max}$.
As RSS-DMO adopts the BRB technique to find solutions, $M_\text{max}$ strictly determines the dimension of the boxes, which directly impacts the algorithm complexity.

Different runtime behaviors shown by our robust sensor selection algorithms indicate that no single algorithm claims both computation and accuracy advantages over the others, which lead to a pareto solution in the complexity/accuracy space.
For example, RSS-DMO guarantees the optimal performance in the worst-case CRLB minimization, but is not always the best option if we must consider computational complexity.
Thus, tradeoffs between the accuracy and complexity performance should be well considered regarding our algorithms.

\vspace{-2mm}
\section{Conclusion}\label{sec:conclusion}

We have considered both dynamic and robust sensor selection problems in 3D wireless positioning with TOA/RSS hybrid measurements.
After formulating the optimization problems using the CRLB as a performance metric, trace and fractional forms of CRLB were derived and used for developing sensor selection strategies.
To address the dynamic sensor selection, two greedy selection algorithms were proposed, one based on each CRLB form, and shown to achieve substantial reductions in computational complexity, both in theory and experimentally, for comparable positioning accuracy.
Three different strategies were developed for the robust sensor selection, each having different tradeoffs between complexity and optimality guarantee in minimizing the worst-case CRLB.
Developing a joint dynamic and robust sensor selection strategy that is adaptable to various system conditions is left as our potential future work.

\appendices

\vspace{-2mm}
\section{Steps for deriving~\eqref{eq:FIM_hybrid_entry}}\label{appendix:FIM}

Using~\eqref{eq:FIM_hybrid_1} and~\eqref{eq:log-likelihood_hybrid}, the element of $\mathcal{I}_\mathcal{M}(\boldsymbol{\ell})$ is evaluated as
\vspace{-2mm}
\begin{align}
    \mathcal{I}_{\mathcal{M}}^{(vw)} &= -\mathbb{E}\left[\partial/\partial w\left\{\partial/\partial v\; l_{\mathcal{M}}(\widehat{\boldsymbol{q}}_{\mathcal{M}}\vert \boldsymbol{\ell})\right\}\right] \label{eq:FIM_hybrid_entry1} \\
    &\hspace{-9mm}= \mathbb{E}\hspace{-0.5mm}\left[\partial/\partial w\left(\partial(\widehat{\boldsymbol{q}}_{\mathcal{M}}-\boldsymbol{q}_{\mathcal{M}})\hspace{-0.5mm}^\top/\partial v\:\mathbf{R}_{\mathcal{M}}^{-1}(\widehat{\boldsymbol{q}}_{\mathcal{M}}-\boldsymbol{q}_{\mathcal{M}})\right)\right] \label{eq:FIM_hybrid_entry2} \\
    &\hspace{-9mm}=\mathbb{E}\big[\partial(\widehat{\boldsymbol{q}}_{\mathcal{M}}-\boldsymbol{q}_{\mathcal{M}})\hspace{-0.5mm}^\top\hspace{-1mm}/\partial v\:\mathbf{R}_{\mathcal{M}}^{-1}\partial(\widehat{\boldsymbol{q}}_{\mathcal{M}}-\boldsymbol{q}_{\mathcal{M}})/\partial w \nonumber \\
    &\quad\quad\;\quad\quad+\partial^2(\widehat{\boldsymbol{q}}_{\mathcal{M}}-\boldsymbol{q}_{\mathcal{M}})\hspace{-0.5mm}^\top\hspace{-1mm}/\partial v \partial w\:\mathbf{R}_{\mathcal{M}}^{-1}(\widehat{\boldsymbol{q}}_{\mathcal{M}}\hspace{-0.5mm}-\hspace{-0.5mm}\boldsymbol{q}_{\mathcal{M}})\big] \nonumber \\ 
    &\hspace{-9mm}=\mathbb{E}\big[\partial\boldsymbol{q}_{\mathcal{M}}^\top/\partial v\mathbf{R}_{\mathcal{M}}^{-1}\partial\boldsymbol{q}_{\mathcal{M}}/\partial w\hspace{-0.5mm}-\hspace{-0.5mm}\partial^2\boldsymbol{q}_{\mathcal{M}}^\top/\partial v \partial w\mathbf{R}_{\mathcal{M}}^{-1}\hspace{-0.4mm}(\widehat{\boldsymbol{q}}_{\mathcal{M}}\hspace{-0.5mm}-\hspace{-0.5mm}\boldsymbol{q}_{\mathcal{M}})\big]\hspace{-1mm} \label{eq:FIM_hybrid_entry4} \\
    &\hspace{-9mm}=\partial\boldsymbol{q}_{\mathcal{M}}^\top/\partial v\:\mathbf{R}_{\mathcal{M}}^{-1}\partial\boldsymbol{q}_{\mathcal{M}}/\partial w, \label{eq:FIM_hybrid_entry5}
    \vspace{-2mm}
\end{align}
for $v,w\in\{x,y,z\}$.
Equality in~\eqref{eq:FIM_hybrid_entry4} holds because the derivatives of $\widehat{\boldsymbol{q}}_{\mathcal{M}}$ are zero.
Note that, with MLE, $\widehat{\boldsymbol{q}}_{\mathcal{M}}$ is simply treated as an observation vector.
The last equality is true as the expectation only applies to $\widehat{\boldsymbol{q}}_{\mathcal{M}}$, and $\mathbb{E}[\widehat{\boldsymbol{q}}_{\mathcal{M}}]=\boldsymbol{q}_{\mathcal{M}}$ makes the last term in~\eqref{eq:FIM_hybrid_entry4} zero.

\vspace{-3mm}
\section{Proof of Proposition~\ref{prop:CRLB}}\label{appendix:CRLB}
\vspace{-1mm}

We begin the proof by considering~\eqref{eq:CRLB_hybrid_corr_1}.
Let us define $\mathbf{U}_{\mathcal{M}}=\left[\{\sqrt{\epsilon_{m}}\boldsymbol{u}_{m}\}_{m\in\mathcal{M}}\right]$ to be a $3\times M$ matrix having $M$ columns of $\sqrt{\epsilon_{m}}\boldsymbol{u}_{m}$ for $m\in\mathcal{M}$.
Then, \eqref{eq:CRLB_hybrid_corr_1} can be rewritten as
\begin{equation}
    \sigma^2_{\mathcal{M}}(\boldsymbol{\ell})=\textrm{tr}\left\{\big(\mathbf{U}_{\mathcal{M}}\mathbf{U}_{\mathcal{M}}^\top\big)^{-1}\right\}=\frac{\textrm{tr}\{\textrm{adj}\big(\mathbf{U}_{\mathcal{M}}\mathbf{U}_{\mathcal{M}}^\top\big)\}}{\det(\mathbf{U}_{\mathcal{M}}\mathbf{U}_{\mathcal{M}}^\top)}
    \label{eq:CRLB_frac}.
\end{equation}
The denominator of~\eqref{eq:CRLB_frac} can be derived as
\begin{align}
    &\hspace{-2mm}\det(\mathbf{U}_{\mathcal{M}}\mathbf{U}_{\mathcal{M}}^\top) \nonumber \\
    &\hspace{-2mm}=\hspace{-2.5mm}\sum_{m_1\in\mathcal{M}}\hspace{-0.5mm}\sum_{\substack{m_2\in\mathcal{M}\\m_2>m_1}}\hspace{-0.5mm}\sum_{\substack{m_3\in\mathcal{M}\\m_3>m_2}} \hspace{-1.5mm}\det([\epsilon_{m_1}^{\nicefrac{1}{2}}\boldsymbol{u}_{m_1},\epsilon_{m_2}^{\nicefrac{1}{2}}\boldsymbol{u}_{m_2},\epsilon_{m_3}^{\nicefrac{1}{2}}\boldsymbol{u}_{m_3}])^2 \nonumber \\
    &\hspace{-2mm}=\hspace{-2.5mm}\sum_{m_1\in\mathcal{M}}\hspace{-0.5mm}\sum_{\substack{m_2\in\mathcal{M}\\m_2>m_1}}\hspace{-0.5mm}\sum_{\substack{m_3\in\mathcal{M}\\m_3>m_2}} \hspace{-1mm}\epsilon_{m_1}\epsilon_{m_2}\epsilon_{m_3}\big((\boldsymbol{u}_{m_1}\hspace{-1mm}\times\boldsymbol{u}_{m_2})\cdot\boldsymbol{u}_{m_3}\big)^2 \nonumber  \\
    &\hspace{-2mm}=\hspace{-2.5mm}\sum_{m_1\in\mathcal{M}}\hspace{-0.5mm}\sum_{\substack{m_2\in\mathcal{M}\\m_2>m_1}}\hspace{-0.5mm}\sum_{\substack{m_3\in\mathcal{M}\\m_3>m_2}} \epsilon_{m_1}\epsilon_{m_2}\epsilon_{m_3} \nonumber \\
    &\quad\quad\times\hspace{-0.5mm}\left(\|\boldsymbol{u}_{m_1}\hspace{-1mm}\times\boldsymbol{u}_{m_2}\|\|\boldsymbol{u}_{m_3}\|\cos\big(\nicefrac{\pi}{2}-\phi_{m_1m_2m_3}\big)\hspace{-0.5mm}\right)^2 \label{eq:CRLB_derived_bot_proof_3} \\
    &\hspace{-2mm}=\hspace{-2.5mm}\sum_{m_1\in\mathcal{M}}\hspace{-0.5mm}\sum_{\substack{m_2\in\mathcal{M}\\m_2>m_1}}\hspace{-0.5mm}\sum_{\substack{m_3\in\mathcal{M}\\m_3>m_2}} \epsilon_{m_1}\epsilon_{m_2}\epsilon_{m_3} \nonumber \\
    &\quad\times\hspace{-0.5mm}\big(\|\boldsymbol{u}_{m_1}\|\|\boldsymbol{u}_{m_2}\|\sin\theta_{m_1m_2}\|\boldsymbol{u}_{m_3}\|\sin{\phi_{m_1m_2m_3}}\big)^2 \label{eq:CRLB_derived_bot_proof_4} \\
    &\hspace{-2mm}=\hspace{-2.5mm}\sum_{m_1\in\mathcal{M}}\hspace{-0.5mm}\sum_{\substack{m_2\in\mathcal{M}\\m_2>m_1}}\hspace{-0.5mm}\sum_{\substack{m_3\in\mathcal{M}\\m_3>m_2}} \hspace{-1.5mm} \epsilon_{m_1}\epsilon_{m_2}\epsilon_{m_3}\sin^2{\theta_{m_1m_2}}\sin^2{\phi_{m_1m_2m_3}}, \label{eq:CRLB_derived_bot_proof_5}
\end{align}
where the operations $\times$ and $\cdot$ between two vectors indicate cross-product and inner-product, respectively. The first equality is from the Cauchy-Binet formula and property that $\vert A\vert \vert A^\top\vert \hspace{-1mm}=\hspace{-1mm}\vert A\vert ^2$ for any square matrix $A$.
The second equality holds as the determinant of a matrix is equal to the volume created by column vectors.
Equalities in~\eqref{eq:CRLB_derived_bot_proof_3} and~\eqref{eq:CRLB_derived_bot_proof_4} hold due to $\boldsymbol{u}_a\cdot\boldsymbol{u}_b=\|\boldsymbol{u}_a\|\|\boldsymbol{u}_b\|\cos\theta_{ab}$ and $\|\boldsymbol{u}_a\times\boldsymbol{u}_b\|=\|\boldsymbol{u}_a\|\|\boldsymbol{u}_b\|\sin\theta_{ab}$, respectively.
The last equality holds since $\|\boldsymbol{u}_m\|=1$, $\forall m$.

The numerator of~\eqref{eq:CRLB_frac} is derived as follows.
Since we are only interested in the diagonal terms, each of which is equal to the determinant of $2\times2$ matrix obtained after removing the corresponding row and column,
\begin{align}
    &\hspace{-1mm}\textrm{tr}\{\text{adj}\big(\mathbf{U}_{\mathcal{M}}\mathbf{U}_{\mathcal{M}}^\top\big)\} \nonumber \\
    &\hspace{-1mm}=\hspace{-2mm}\sum_{m_1\in\mathcal{M}}\sum_{\substack{m_2\in\mathcal{M}\\m_2>m_1}}
    \epsilon_{m_1}\epsilon_{m_2}\Bigg(\begin{vmatrix}\nicefrac{(x_{m_1}^\mathsf{s}-x)}{d_{m_1}} \hspace{-1.5mm}&\hspace{-1.5mm} \nicefrac{(x_{m_2}^\mathsf{s}-x)}{d_{m_2}} \\ \nicefrac{(y_{m_1}^\mathsf{s}-y)}{d_{m_1}} \hspace{-1.5mm}&\hspace{-1.5mm} \nicefrac{(y_{m_2}^\mathsf{s}-y)}{d_{m_2}}\end{vmatrix}^2 \nonumber \\
    &\hspace{-1mm}+\hspace{-0.5mm}\begin{vmatrix}\nicefrac{(x_{m_1}^\mathsf{s}-x)}{d_{m_1}} \hspace{-1.5mm}&\hspace{-1.5mm} \nicefrac{(x_{m_2}^\mathsf{s}-x)}{d_{m_2}} \\ \nicefrac{(z_{m_1}^\mathsf{s}-z)}{d_{m_1}} \hspace{-1.5mm}&\hspace{-1.5mm} \nicefrac{(z_{m_2}^\mathsf{s}-z)}{d_{m_2}}\end{vmatrix}^2\hspace{-2mm}+\hspace{-0.5mm}\begin{vmatrix}\nicefrac{(y_{m_1}^\mathsf{s}-y)}{d_{m_1}} \hspace{-1.5mm}&\hspace{-1.5mm} \nicefrac{(y_{m_2}^\mathsf{s}-y)}{d_{m_2}} \\ \nicefrac{(z_{m_1}^\mathsf{s}-z)}{d_{m_1}} \hspace{-1.5mm}&\hspace{-1.5mm} \nicefrac{(z_{m_2}^\mathsf{s}-z)}{d_{m_2}}\end{vmatrix}^2\hspace{-0.5mm}\Bigg) \nonumber  \\
    &\hspace{-1mm}=\hspace{-2mm}\sum_{m_1\in\mathcal{M}}\sum_{\substack{m_2\in\mathcal{M}\\m_2>m_1}} \hspace{-2mm}\epsilon_{m_1}\epsilon_{m_2}\big(\|\boldsymbol{u}_{m_1}\|^2\|\boldsymbol{u}_{m_2}\|^2\hspace{-1mm}-\hspace{-0.5mm}(\boldsymbol{u}_{m_1}\hspace{-1mm}\cdot\boldsymbol{u}_{m_2})^2\big) \label{eq:CRLB_derived_top_proof_2} \\
    &\hspace{-1mm}=\hspace{-2mm}\sum_{m_1\in\mathcal{M}}\sum_{\substack{m_2\in\mathcal{M}\\m_2>m_1}} \hspace{-2mm}\epsilon_{m_1}\epsilon_{m_2}\hspace{-1mm}\left(1\hspace{-1mm}-\hspace{-0.5mm}\|\boldsymbol{u}_{m_1}\|^2\|\boldsymbol{u}_{m_2}\|^2\cos^2\theta_{m_1m_2}\right) \label{eq:CRLB_derived_top_proof_3} \\
    &\hspace{-1mm}=\hspace{-2mm}\sum_{m_1\in\mathcal{M}}\sum_{\substack{m_2\in\mathcal{M}\\m_2>m_1}} \epsilon_{m_1}\epsilon_{m_2}\sin^2\theta_{m_1m_2}. \label{eq:CRLB_derived_top_proof_4}
\end{align}
The first equality holds from applying the Cauchy-Binet formula to each diagonal term. Equality in~\eqref{eq:CRLB_derived_top_proof_3} is due to $\boldsymbol{u}_a\cdot\boldsymbol{u}_b=\|\boldsymbol{u}_a\|\|\boldsymbol{u}_b\|\cos\theta_{ab}$ and $\|\boldsymbol{u}_m\|=1$, $\forall m$. The last equality holds from $\cos^2\theta+\sin^2\theta = 1$.
Rewriting~\eqref{eq:CRLB_frac} using~\eqref{eq:CRLB_derived_bot_proof_5} and~\eqref{eq:CRLB_derived_top_proof_4} completes the derivation of the F-CRLB.

\vspace{-1.5mm}
\section{Proof of Proposition~\ref{prop:SVR}}\label{appendix:SVR}

We begin the proof by considering~\eqref{eq:CRLB_frac}. Letting $\lambda_{\mathcal{M},1}$, $\lambda_{\mathcal{M},2}$, and $\lambda_{\mathcal{M},3}$ be the eigenvalues of $\mathbf{U}_{\mathcal{M}}\mathbf{U}_{\mathcal{M}}^\top$, we can express the determinant as 
\begin{equation}
    \det(\mathbf{U}_{\mathcal{M}}\mathbf{U}_{\mathcal{M}}^\top) = \prod\nolimits_{d=1}^{3}\lambda_{\mathcal{M},d}.
    \label{eq:CRLB_det}
\end{equation}
Using the property $\textrm{tr}\big\{\big(\mathbf{U}_{\mathcal{M}}\mathbf{U}_{\mathcal{M}}^\top\big)^{-1}\big\}=\sum^3_{d=1}\nicefrac{1}{\lambda_{\mathcal{M},d}}$, we express the trace of the adjugate matrix of $\mathbf{U}_{\mathcal{M}}\mathbf{U}_{\mathcal{M}}^\top$ as
\begin{align}
    \textrm{tr}\big\{\text{adj}\big(\mathbf{U}_{\mathcal{M}}\mathbf{U}_{\mathcal{M}}^\top\big)\big\} &= \textrm{tr}\big\{\big(\mathbf{U}_{\mathcal{M}}\mathbf{U}_{\mathcal{M}}^\top\big)^{-1}\big\}\det\hspace{-0.5mm}\big(\mathbf{U}_{\mathcal{M}}\mathbf{U}_{\mathcal{M}}^\top\big) \nonumber \\
    &=\sum\nolimits^3_{d=1}\hspace{-1mm}\frac{1}{\lambda_{\mathcal{M},d}}\big(\prod\nolimits_{j=1}^{3}\lambda_{\mathcal{M},j}\big)\hspace{-0.5mm} \nonumber \\
    &=\sum\nolimits_{d=1}^3\hspace{-1mm}\big(\prod\nolimits^3_{j \neq d}\lambda_{\mathcal{M},j}\big).
    \label{eq:CRLB_adj}
\end{align}
Using $\sum^3_{d=1}\nicefrac{1}{\lambda_{\mathcal{M},d}}$ and~\eqref{eq:CRLB_adj}, $\sigma^2_{\mathcal{M}}(\boldsymbol{\ell})$ can be expressed as
\begin{equation}
    \sigma^2_{\mathcal{M}}(\boldsymbol{\ell}) = \frac{\sum_{d=1}^3\big(\prod^3_{j \neq d}\lambda_{\mathcal{M},j}\big)}{\prod_{d=1}^{3}\lambda_{\mathcal{M},d}}.
    \label{eq:CRLB_frac_eig}
\end{equation}

Perceiving $\{\lambda_{\mathcal{M},d}\}^3_{d=1}$ as the magnitudes of the eigenvectors of $\mathbf{U}_{\mathcal{M}}\mathbf{U}_{\mathcal{M}}^\top$, we find that~\eqref{eq:CRLB_det} and~\eqref{eq:CRLB_adj} correspond to the volume and half of the surface area, respectively, of a rectangular prism whose dimension is defined by $\{\lambda_{\mathcal{M},d}\}^3_{d=1}$.
Therefore, $\sigma^2_{\mathcal{M}}(\boldsymbol{\ell})$ is half the ratio of surface are to volume.

\bibliographystyle{IEEEtran}
\bibliography{IEEEfull,mybib}

\begin{IEEEbiographynophoto}    
    {Myeung Suk Oh} (Student Member, IEEE) received the B.S. degree from Georgia Institute of Technology, USA and the M.S. degree from the Korea Advanced Institute of Science and Technology (KAIST) in 2013 and 2015, respectively. He is currently pursuing his Ph.D. degree in electrical engineering at Purdue University, IN, USA. His research interests include learning-aided channel estimations and sensor-based localization techniques in future wireless systems.
\end{IEEEbiographynophoto}

\vspace{-10mm}
\begin{IEEEbiographynophoto}
    {Seyyedali~Hosseinalipour} (Member, IEEE) received the B.S. degree in electrical engineering from Amirkabir University of Technology, Tehran, Iran, in 2015 with high honor and top-rank recognition. He then received the M.S. and Ph.D. degrees in electrical engineering from North Carolina State University, NC, USA, in 2017 and 2020, respectively. He was the recipient of the ECE Doctoral Scholar of the Year Award (2020) and ECE Distinguished Dissertation Award (2021) at North Carolina State University.  He was a postdoctoral researcher at Purdue University, IN, USA from 2020 to 2022. He is currently an assistant professor at the Department of Electrical Engineering at the University at Buffalo (SUNY). 
    
    He has served as the TPC Co-Chair of workshops related to distributed machine learning and edge computing held in conjunction with IEEE INFOCOM 2021\&2023, IEEE GLOBECOM 2021, IEEE ICC 2021, IEEE/CVF CVPR 2023, IEEE MSN 2021-2023, IEEE VTC 2023. His research interests include the analysis of modern wireless networks, synergies between machine learning methods and fog computing systems, distributed machine learning, and network optimization.
\end{IEEEbiographynophoto}

\vspace{-10mm}
\begin{IEEEbiographynophoto}
    {Taejoon~Kim} (Senior Member, IEEE) received the Ph.D. degree in electrical and computer engineering from Purdue University, West Lafayette, IN, USA. He is currently an Assistant Professor and Chair’s Council Faculty of electrical engineering and computer science at the University of Kansas (KU). Prior to joining KU, he was a Senior Researcher at the Nokia Bell Laboratories, Berkeley, CA, USA, a Postdoctoral Researcher at KTH, Stockholm, Sweden, and an Assistant Professor at the City University of Hong Kong. His research interests include 5G-and-beyond wireless systems, machine learning for communications, security, multiple-input multiple-output (MIMO), and statistical signal processing. He was an Associate Editor of the \emph{IEEE Transactions on Communications}. He holds 29 issued U.S. patents. He was a recipient of the Harry Talley Excellence in Teaching Award and Miller Professional Development Award in Research from the KU School of Engineering. Along with the coauthors, he won The IEEE Communications Society Stephen O. Rice Prize in 2016 and IEEE PIMRC 2012 Best Paper Award.
\end{IEEEbiographynophoto}

\vspace{-10mm}
\begin{IEEEbiographynophoto}
    {David~J.~Love} (Fellow, IEEE) received the B.S. (with highest honors), M.S.E., and Ph.D. degrees in electrical engineering from the University of Texas at Austin in 2000, 2002, and 2004, respectively. Since 2004, he has been with the Elmore Family School of Electrical and Computer Engineering at Purdue University, where he is now the Nick Trbovich Professor of Electrical and Computer Engineering. He served as a Senior Editor for IEEE Signal Processing Magazine, Editor for the IEEE Transactions on Communications, Associate Editor for the IEEE Transactions on Signal Processing, and guest editor for special issues of the IEEE Journal on Selected Areas in Communications and the EURASIP Journal on Wireless Communications and Networking. He was a member of the Executive Committee for the National Spectrum Consortium. He holds 32 issued U.S. patents. His research interests are in the design and analysis of broadband wireless communication systems, beyond-5G wireless systems, multiple-input multiple-output (MIMO) communications, millimeter wave wireless, software defined radios and wireless networks, coding theory, and MIMO array processing. Dr. Love is a Fellow of the American Association for the Advancement of Science (AAAS) and was named a Thomson Reuters Highly Cited Researcher (2014 and 2015). Along with his co-authors, he won best paper awards from the IEEE Communications Society (2016 Stephen O. Rice Prize and 2020 Fred W. Ellersick Prize), the IEEE Signal Processing Society (2015 IEEE Signal Processing Society Best Paper Award), and the IEEE Vehicular Technology Society (2010 Jack Neubauer Memorial Award).
\end{IEEEbiographynophoto}

\begin{IEEEbiographynophoto}
    {James~V.~Krogmeier} (Senior Member, IEEE) received the B.S.E.E. degree from the University of Colorado Boulder, Boulder, CO, USA, and the M.S. and Ph.D. degrees from the University of Illinois at Urbana-Champaign, Champaign, IL, USA. He has industry experience in telecommunications and is a Founding Member of two software startup companies. He is currently a Professor of electrical and computer engineering with Purdue University, West Lafayette, IN, USA. He has authored or coauthored many technical papers in refereed journals and conference proceedings of the IEEE, the ASABE, and the Transportation Research Board, and is a Co-Inventor of five U.S. patents. His research interests include the applications of statistical signal and image processing in agriculture, intelligent transportation systems, sensor networking, and wireless communications. His research has been funded by the USDA-NIFA, the NSF, the DARPA, the Indiana Department of Transportation, the Federal Highway Administration, and industry. He was on a number of IEEE technical program committees and an Associate Editor for several IEEE journals.
\end{IEEEbiographynophoto}

\vspace{-110mm}
\begin{IEEEbiographynophoto}
    {Christopher~G.~Brinton} (Senior Member, IEEE) is an Assistant Professor in the Elmore Family School of Electrical and Computer Engineering (ECE) at Purdue University. His research interest is at the intersection of networking, communications, and machine learning, specifically in fog/edge network intelligence, distributed machine learning, and data-driven wireless network optimization. Since joining Purdue ECE in fall 2019, Dr. Brinton has won the NSF CAREER Award (2022), ONR Young Investigator Program (YIP) Award (2022), DARPA Young Faculty Award (YFA, 2022), Intel Rising Star Faculty Award (2022), and roughly \$10M in sponsored research projects as a PI or co-PI. He has also been awarded Purdue College of Engineering Faculty Excellence Awards in Early Career Research (2023), Early Career Teaching (2023), and Online Learning (2022), as well as the Purdue ECE Outstanding Faculty Mentor Award (2020), Ruth and Joel Spira Outstanding Teacher Award (2020), and Purdue Seed for Success Award (2019). He currently serves as an Associate Editor for IEEE Transactions on Wireless Communications, in the ML and AI for wireless area. Prior to joining Purdue, Dr. Brinton was the Associate Director of the EDGE Lab and a Lecturer of Electrical Engineering at Princeton University. He also co-founded Zoomi Inc., a big data startup company that has provided learning optimization to more than one million users worldwide and holds US Patents in machine learning for education. His book The Power of Networks: 6 Principles That Connect our Lives and associated Massive Open Online Courses (MOOCs) have reached over 400,000 students to date. Dr. Brinton received the PhD (with honors) and MS Degrees from Princeton in 2016 and 2013, respectively, both in Electrical Engineering.
\end{IEEEbiographynophoto}

\end{document}